\def\BibTeX{{\rm B\kern-.05em{\sc i\kern-.025em b}\kern-.08em
    T\kern-.1667em\lower.7ex\hbox{E}\kern-.125emX}}
\newtheorem{definition}{Definition}
\newtheorem{theorem}{Theorem}
\newtheorem{lemma}{Lemma}
\newtheorem{proposition}{Proposition}
\newtheorem{problem}{Problem}
\newtheorem{remark*}{Remark}
\renewcommand{\nompreamble}{\begin{multicols}{2}}
\renewcommand{\nompostamble}{\end{multicols}}
\renewcommand{\nomgroup}[1]{
  \ifthenelse{\equal{#1}{C}}{\item[\textbf{Demand-related symbols}]}{
  \ifthenelse{\equal{#1}{B}}{\item[\textbf{Operators-related symbols}]}{
  \ifthenelse{\equal{#1}{A}}{\item[\textbf{Graph-related symbols}]}{
  \ifthenelse{\equal{#1}{D}}{\item[\textbf{Other symbols}]}{}}}}
}
\pgfplotsset{every axis/.append style={tick label style={font=\scriptsize},
                                       legend style={font=\scriptsize}, 
                                       /pgf/number format/use comma,
	                                   },
	         every axis/.append style={font={\footnotesize}},
	         compat=1.15,
             }        
\tikzset{declare function={a=0.140002;
	erf(\x)=sign(\x)*sqrt(1-exp(-\x*\x*(((4/pi)+a*\x*\x)/(1+a*\x*\x))));}}
\providecommand\arc{}
\providecommand\demand{}
\providecommand\argmin{}
\providecommand\argmax{}
\newcommand\normalFont[2][]{
    \ifthenelse{\equal{#1}{tilde}}
    {\tilde{#2}}
    {\ifthenelse{\equal{#1}{bar}}
                {\bar{#2}}
                {\ifthenelse{\equal{#1}{star}}
                            {{#2}^\star}
                            {#2}
                }
    }
}
\newcommand\mathcalFont[2][] {\normalFont[#1]{\mathcal{#2}}}
\definecolor{lightblue}    {rgb}{0.60784,0.76078,0.90196}
\definecolor{darkblue}     {rgb}{0.26667,0.44706,0.76863}
\definecolor{lightgreen}   {rgb}{0.66275,0.81569,0.55686}
\definecolor{darkgreen}    {rgb}{0.43922,0.67843,0.27843}
\definecolor{orange}       {rgb}{0.92941,0.49020,0.19216}
\definecolor{yellow}       {rgb}{1.00000,0.75294,0.00000}
\definecolor{grey}         {rgb}{0.64706,0.64706,0.64706}
\definecolor{purple}	   {rgb}{0.51373,0.23529,0.44706}
\definecolor{silver}	   {rgb}{0.75294,0.75294,0.79254}
\definecolor{amod0}         {rgb}{0      , 0.4    , 0.7   }
\colorlet{amod}            {amod0!70!white}
\colorlet{amod1}           {amod0!50!white}
\definecolor{publicTransit}{rgb}{0.85   , 0.3    , 0.1   }
\colorlet{publicTransit}   {publicTransit!70!white}
\definecolor{walking}      {rgb}{0.9    , 0.7    , 0.1   }
\colorlet{walking}         {walking!70!white}
\definecolor{revenue}      {rgb}{1      , 0.8    , 0.2   }
\definecolor{cost}         {rgb}{1      , 0.2    , 0.2   }
\definecolor{profit}       {rgb}{0.25   , 0.55   , 0.25  }
\definecolor{fares}        {rgb}{0.0    , 0.6    , 0.0   }
\definecolor{taxes}        {rgb}{0.0    , 0.0    , 0.6   }
\DeclareSIUnit\eur{\euro}
\DeclareSIUnit\usd{USD}
\DeclareSIUnit{\mph}{mph}
\DeclareSIUnit\month{month}
\renewcommand\argmin[1]{\underset{\substack{#1}}{\arg\min\,}}
\renewcommand\argmax[1]{\underset{\substack{#1}}{\arg\max\,}}
\newcommand\cardinality[1]{|#1|}
\newcommand\expectedValue[2][] {\mathbb{E}_{#1}\left[#2\right]}
\newcommand{\summe}[1]{\sum_{\scriptstyle\mathclap{\substack{#1}}}}
        \crefname{subsection}{Section}{Sections}
        \crefname{subsubsection}{Section}{Sections}
        \crefname{assumption}{Assumption}{Assumptions}
        \crefname{problem}{Problem}{Problems}
        \crefname{subsection}{section}{sections}
        \crefname{subsubsection}{section}{sections}
        \crefname{assumption}{assumption}{assumptions}
        \crefname{problem}{problem}{problems}
\begin{document}

\title{On the Interplay between Self-Driving Cars and Public Transportation}
\author{Nicolas Lanzetti, Maximilian Schiffer, Michael Ostrovsky, and Marco Pavone
\thanks{Ostrovsky is grateful to the National Science Foundation for financial support (grant SES-1824317).
Pavone was supported in part by the National Science Foundation, CAREER Award CMMI 1454737.}
\thanks{Nicolas Lanzetti is with the Automatic Control Laboratory, ETH Zürich, Zürich (\url{lnicolas@ethz.ch}). }
\thanks{Maximilian Schiffer is with the TUM School of Management and the Munich Data Science Institute, TU Munich, Munich (\url{schiffer@tum.de})}
\thanks{Michael Ostrovsky is with the Graduate School of Business, Stanford University, Stanford and the National Bureau of Economic Research, Cambridge (\url{ostrovsky@stanford.edu}).}
\thanks{Marco Pavone is with the Department of Aeronautics and Astronautics, Stanford University, Stanford (\url{pavone@stanford.edu}).}
}

\maketitle

\begin{abstract}
Cities worldwide struggle with overloaded transportation systems and their externalities. The emerging autonomous transportation technology has the potential to alleviate these issues, but the decisions of profit-maximizing operators running large autonomous fleets could negatively impact other stakeholders and the transportation system. An analysis of these tradeoffs requires modeling the modes of transportation in a unified framework. In this paper, we propose such a framework, which allows us to study the interplay among mobility service providers (MSPs), public transport authorities, and customers. Our framework combines a graph-theoretic network model for the transportation system with a game-theoretic market model in which MSPs are profit maximizers while customers select individually-optimal transportation options. We apply our framework to data for the city of Berlin and present sensitivity analyses to study parameters that MSPs or municipalities can strategically influence. We show that autonomous ride-hailing systems may cannibalize a public transportation system, serving between 7\,\% and 80\,\% of all customers, depending on market conditions and policy restrictions.
\end{abstract}

\begin{IEEEkeywords}
graph-theoretic modeling of transportation systems, game theory, autonomous mobility-on-demand
\end{IEEEkeywords}

\newacronym{abk:amod}{AMoD}{autonomous mobility-on-demand}
\newacronym{abk:msp}{MSP}{mobility service provider}
\newacronym[longplural={public transport authorities}]{abk:pta}{PTA}{public transport authority}
\def\road {\mathrm{AMoD}}
\newcommand\bestResponse[1] {B_{#1}}
\newcommand\demandCost[2][] {\normalFont[#1]{J}_{#2}}\nomenclature[c17]{$\demandCost{}$}{Cost related to a demand }
\newcommand\demandNumber {\normalFont{M}}\nomenclature[c13]{$\demandNumber$}{Number of demands}
\newcommand\maxNumberVehicles[1]   {N_{#1}^\mathrm{veh}}\nomenclature[d]{$\maxNumberVehicles{}$}{Number of vehicles}
\newcommand\agentsNumber[2][]      {\normalFont[#1]{N}_{#2}}
\newcommand\operatorsNumber[1][]   {\agentsNumber[#1]{}}\nomenclature[b13]{$\operatorsNumber$}{Number of operators}
\def\publicTransit {\mathrm{PTW}}
\newcommand\utility[1][] {\normalFont[#1]{U}}\nomenclature[b31]{$U$}{Profit}
\def\valueTime    {V_\mathrm{T}}\nomenclature[d]{$\valueTime$}{Value of time}
\def\valueTimeMin {V_\mathrm{T}^\mathrm{min}}
\def\valueTimeMax {V_\mathrm{T}^\mathrm{max}}
\renewcommand\arc[2][] {\normalFont[#1]{a}_{#2}}\nomenclature[a22]{$\arc{}$}{Arc}
\newcommand\operationCostFunction[2][] {\normalFont[#1]{c}_{#2}}\nomenclature[b34]{$\operationCostFunction{}(a)$}{Cost of arc $a$}
\newcommand\operationCostDistance[2][] {\normalFont[#1]{c}_{#2}^\mathrm{d}}\nomenclature[b35]{$\operationCostDistance{}$}{Distance-based cost}
\newcommand\destinationVertex[2][] {\normalFont[#1]{d}_{#2}}\nomenclature[c15]{$\destinationVertex{}$}{Destination vertex}
\newcommand\flow[2][] {\normalFont[#1]{f}_{#2}}\nomenclature[a41]{$\flow{}$}{Flow}
\newcommand\originVertex[2][]    {\normalFont[#1]{o}_{#2}}\nomenclature[c14]{$\originVertex{}$}{Origin vertex}
\newcommand\operatorsObject[1][] {#1}
\renewcommand\path[2][]            {\normalFont[#1]{p}_{#2}}\nomenclature[a31]{$\path{}$}{Path}
\newcommand\pathRoad[2][]          {\path[#1]{#2}^{\road}}
\newcommand\pathPublicTransit[2][] {\path[#1]{#2}^{\publicTransit}}
\renewcommand\demand[2][] {\normalFont[#1]{q}_{#2}}\nomenclature[c11]{$\demand{}$}{Demand}
\newcommand\request[2][] {\normalFont[#1]{r}_{#2}}
\newcommand\arcSource[2][]  {\normalFont[#1]{s}_{#2}^\mathrm{o}}\nomenclature[a14]{$\arcSource{}(\arc{})$}{Source of arc $\arc{}$}
\newcommand\arcTarget[2][]  {\normalFont[#1]{s}_{#2}^\mathrm{d}}\nomenclature[a15]{$\arcTarget{}(\arc{})$}{Sink of arc $\arc{}$}
\newcommand\timeArc[2][]           {\normalFont[#1]{t}_{#2}}\nomenclature[d31]{$t(\arc{})$}{Travel time to traverse arc $\arc{}$}
\newcommand\timeFunction[2][]      {\normalFont[#1]{t}_{#2}}
\newcommand\timeRoad[2][]          {\timeFunction[#1]{#2}^{\road}}\nomenclature[d32]{$\timeRoad{}$}{Travel time of the \gls{abk:amod} ride}
\newcommand\timePublicTransit[2][] {\timeFunction[#1]{#2}^{\publicTransit}}\nomenclature[d33]{$\timePublicTransit{}$}{Travel time of the public transport ride}
\newcommand\vertex[2][] {\normalFont[#1]{v}_{#2}}\nomenclature[a21]{$\vertex{}$}{Vertex}
\newcommand\arcsSet[2][] {\mathcalFont[#1]{A}_{#2}}\nomenclature[a13]{$\arcsSet{}$}{Set of arcs}
\newcommand\customersActionSpace[2][] {\mathcalFont[#1]{B}_{#2}^\mathrm{c}}\nomenclature[c22]{$\customersActionSpace{}$}{Customers' action space}
\newcommand\operatorsActionSpace[2][] {\mathcalFont[#1]{B}_{#2}^\mathrm{o}}\nomenclature[b42]{$\operatorsActionSpace{}$}{Operator's action space}
\newcommand\customersEquilibria[2][] {\mathcalFont[#1]{E}_{#2}}\nomenclature[c33]{$\customersEquilibria{}$}{Set of optimal reactions}
\newcommand\flowsSetGraph[2][] {\mathcalFont[#1]{F}({#2})}
\newcommand\flowsSet[2][]      {\mathcalFont[#1]{F}_{#2}}\nomenclature[a42]{$\flowsSet{}$}{Set of flows}
\newcommand\graph[2][]           {\mathcalFont[#1]{G}_{#2}}\nomenclature[a11]{$\graph{}$}{Multigraph}
\newcommand\graphDefinition[2][] {\graph[#1]{#2}=(\verticesSet[#1]{#2},\arcsSet[#1]{#2},\arcSource[#1]{#2},\arcTarget[#1]{#2})}
\newcommand\flowsSetDemand[2][] {\mathcalFont[#1]{H}_{#2}}
\nomenclature[b51]{$\flowsSetDemand{}(\reactionCurve{})$}{Set of potentially active sets of flows for the reaction curve $\reactionCurve{}$}
\newcommand\agentsSet[2][]    {\mathcalFont[#1]{N}_{#2}}
\newcommand\pathsSet[2][]     {\mathcalFont[#1]{P}(#2)}\nomenclature[a32]{$\pathsSet{\arcsSet{}}$}{Set of all paths for the arc set $\arcsSet{}$}
\newcommand\shortestPath[2][] {\mathcalFont[#1]{P}_{#2}^{*}}
\nomenclature[a35]{$\shortestPath{}(\originVertex{},\destinationVertex{})$}{Shortest path between $\originVertex{}$ and $\destinationVertex{}$}
\newcommand\demandsSet[2][] {\mathcalFont[#1]{Q}_{#2}}\nomenclature[c12]{$\demandsSet{}$}{Set of demands}
\newcommand\pathsSetRequest[2][] {\mathcalFont[#1]{S}(#2)}\nomenclature[c2]{$\pathsSetRequest{\demand{}}$}{Set of paths satisfying demand $\demand{}$}
\newcommand\verticesSet[2][] {\mathcalFont[#1]{V}_{#2}}\nomenclature[a12]{$\verticesSet{}$}{Set of vertices}
\newcommand\priceStrategiesSet[2][]           {\normalFont[#1]{\Xi}_{#2}}\nomenclature[b22]{$\priceStrategiesSet{}$}{Set of pricing strategies}
\newcommand\priceStrategiesSetRestricted[2][] {\normalFont[#1]{\Xi}_{\mathrm{res},#2}}
\newcommand\demandRate[2][] {\normalFont[#1]{\alpha}_{#2}}\nomenclature[c16]{$\demandRate{}$}{Demand rate}
\newcommand\flowRate[2][] {\normalFont[#1]{\beta}_{#2}}
\newcommand\priceStrategy[2][]      {\normalFont[#1]{\xi}_{#2}}\nomenclature[b21]{$\priceStrategy{}$}{Pricing strategy}
\newcommand\pricePublicTransit[2][] {\priceStrategy[#1]{i}^{\publicTransit}}
\def\letterProjection 		        {\pi}
\newcommand\originPath[1][]         {\normalFont[#1]{\letterProjection_\mathrm{o}}}\nomenclature[a33]{$\originPath(\path{})$}{Origin of path $\path{}$}
\newcommand\destinationPath[1][]    {\normalFont[#1]{\letterProjection_\mathrm{d}}}\nomenclature[a34]{$\destinationPath(\path{})$}{Destination of path $\path{}$}
\newcommand\projectionPathFlow[2][] {\normalFont[#1]{\letterProjection_{\mathrm{p}}}(#2)}\nomenclature[a42]{$\projectionPathFlow{\flow{}}$}{Path of flow $\flow{}$}
\newcommand\projectionRateFlow[2][] {\normalFont[#1]{\letterProjection_{\mathrm{r}}}(#2)}\nomenclature[a43]{$\projectionRateFlow{\flow{}}$}{Rate of flow $\flow{}$}
\newcommand\reactionCurve[2][] {\normalFont[#1]{\phi}_{#2}}\nomenclature[c21]{$\reactionCurve{}$}{Reaction curve}
\newcommand{\myast}{\textsuperscript{$\dagger$}}
\newcommand\reals					{\mathbb{R}}
\newcommand\naturals				{\mathbb{N}}
\newcommand\nonnegativeReals		    {\reals_{\geq 0}}
\newcommand\positiveReals			{\reals_{>0}}
\newcommand\arcsNumber[2][]     {\cardinality{\arcsSet[#1]{#2}}}
\newcommand\verticesNumber[2][] {\cardinality{\verticesSet[#1]{#2}}}
\newcommand\demandIndexSet         {\{1,\ldots,\demandNumber\}}
\newcommand\agentsIndexSet[2][]    {\{1,\ldots,\agentsNumber[#1]{#2}\}}
\newcommand\operatorsIndexSet[1][] {\agentsIndexSet[#1]{}}
\newcommand\equilibriumGeneral[1][]  {(\{\priceStrategy[#1]{j}\}_{j=1}^{\operatorsNumber})}
\newcommand\equilibriumAmod[1][]     {(\priceStrategy[#1]{1},\priceStrategy[#1]{2})}

\section{Introduction}\label{sec:introduction}
Worldwide, cities struggle with overstrained transportation systems whose externalities cause economic and environmental harm such as working hours lost in congestion or health dangers caused by particulate matters, NO\textsubscript{x}, and stress \cite{Frakt2019}. In 2017, traffic-related externalities cost U.S. citizens 305 billion \si{\usd} \cite{INRIX2017}. 
Municipalities try to resolve these problems by improving existing transportation systems but face several obstacles \cite{HuWang2018,Hu2019}. Public transport not based on surface roads (e.g., subway lines) is often hindered by spatial limitations in urban areas and by long lead times of infrastructure projects. Improving road infrastructure often faces similar obstacles. Accordingly, better utilization of urban infrastructure using new technologies and mobility concepts is necessary to resolve the root cause of problems in today's transportation systems.

In recent years, various new mobility concepts emerged. However, all of them struggle with specific obstacles. Car-sharing systems such as Zipcar offer the opportunity to reduce the number of individually owned cars in cities. However, given the small fleet sizes, customers are reluctant to use such vehicles due to inconvenient accessibility; vice versa, these concepts are often still not economically viable for \glspl{abk:msp}. Ride-hailing services such as Uber or Lyft appear as an affordable alternative to traditional taxi services and decrease the need for individually owned cars in cities. However, the uncontrolled growth of such services can worsen congestion; e.g., in Manhattan, an increase of \SI{68000}{} for-hire vehicles from 2013 to 2018 correlated with a decrease in average travel speed from \SI{6.2}{\mph} to \SI{4.7}{\mph} \cite{Hu2019}. Ride-pooling services may reduce traffic in cities by pairing passengers with similar trips into a vehicle \cite{Alonso-Mora2017,Ostrovsky2019}. However, such systems only work efficiently when the ride-hailing system itself is efficient. So far, customers appear to be reluctant to use ride-pooling due to unsatisfactory user experiences \cite{McKinsey2017}.

\glsunset{abk:amod}
An emerging mobility concept, namely autonomous mobility-on-demand (\acrshort{abk:amod}), has the potential to help address the challenges outlined above. An \gls{abk:amod} system consists of a fleet of robotic self-driving cars, which transport passengers from their origins to their destinations.
An operator controls the fleet by dispatching passenger trips to vehicles while simultaneously deciding on their routes. Such a system overcomes the limitations of free-floating car-sharing systems by remedying the limited accessibility of cars. Further, it overcomes ride-hailing systems' inefficiencies as a single operator centrally decides on the fleet's size and all operations. It also eliminates the largest cost component of ride-hailing systems: driver's time. Central coordination and information transparency also allow for more efficient ride-pooling with fair compensation schemes among customers.

While \gls{abk:amod} has the potential to address many challenges of urban transportation, it also introduces a major risk: it may significantly impact and possibly cannibalize public transportation such as buses, trams, and subways~\cite{Oh2020}. While such cannibalization effects have already been discussed for similar modes of travel, e.g., ride-hailing services \cite{Hall2018,Jin2019}, an \gls{abk:amod} system bears a unique feature: for the first time, the centralized, fleet-wide control and coordination of vehicles allows one to employ a perfect pricing strategy; i.e., an operator can tailor her pricing strategy to her individual objective, e.g., profit maximization, without any disturbances usually resulting from driver behavior and vehicle availability. The net effect of such a system and its pricing policy on cities is mostly an empirical question, whose answer depends on many specifics of various cities. A principled approach to answer this question entails developing a unifying framework that incorporates various transportation modes to study customers' travel behavior and choices. In this paper, we provide such a framework, which makes it possible to analyze the interactions among \glspl{abk:msp}, \glspl{abk:pta} controlled by municipalities, and customers in today's and future transportation systems. In particular, we tailor this framework to the specific case of \gls{abk:amod} systems and base it on a game-theoretic perspective.

\subsubsection*{Related Literature}\label{subsec:literature}
Our work lies at the interface between economics, game theory, and transportation science. To keep our literature review concise, we focus on game-theoretic approaches in mobility systems and transportation network modeling specifically related to \gls{abk:amod}.

\textit{The traffic assignment problem} is concerned with allocating users to means of transport or links in transportation networks so that an equilibrium configuration is reached; see~\cite{Cantarella1997,Patriksson2015} for a comprehensive presentation. Differently from our work, the traffic assignment problem does not focus on the interplay among \glspl{abk:msp} but on traffic equilibria among customers.

\textit{Network pricing problems} arise in the field of traffic management and congestion avoidance \cite{Brotcorne2001,Bianco2016,Kuiteing2017}. These problems are typically modeled as Stackelberg games and formalized as mathematical programs with equilibrium constraints \cite{Patriksson2002} or as bilevel optimization problems \cite{Colson2005,Labbe2016}.
In these games, an upper-level player sets prices or tolls on some arcs in a network to maximize a given objective, and the lower-level players react accordingly.
Compared with our problem setting, such games do not accommodate the non-cooperative interaction among \glspl{abk:msp}. Moreover, they focus on arc or path prices and do not directly allow for more general pricing schemes.

\textit{Mobility-on-demand related games} have recently been extensively discussed. \cite{Banerjee2015,Bai2019,Guda2019} focused on the coordination of customer demand and driver supply. Moreover, \cite{Bimpikis2019,Wang2018} studied ride-sharing platforms, highlighting the impact of the demand pattern on the platforms' profits and consumers' surplus, while studying cost-sharing strategies between customers and drivers.
Further works have focused on the societal costs of ride-hailing companies \cite{Rogers2015} and on the impact of mobility-on-demand systems on the taxi market \cite{Wallsten2015}.
Overall, these approaches do not sufficiently capture our problem characteristics as they (\textit{i})~focus on a two-sided market with drivers and customers, not accounting for centrally-controlled autonomous vehicles, (\textit{ii})~do not consider multimodal or intermodal routes, and (\textit{iii})~do not provide a general and flexible game-theoretic framework that captures both the interactions among \glspl{abk:msp} as well as among \glspl{abk:msp} and customers. 

Focusing on \textit{transportation models for \gls{abk:amod} systems}, previous papers have investigated queuing-theoretic models, simulation-based models, and 
network flow models. We refer to \cite{ZardiniLanzettiAnnRev2021,Narayanan2020} for extensive reviews of this field.
Microscopic studies expect autonomous vehicles to ease traffic management, e.g., via improved intersection clearing \cite{Lee2012,Guler2014} and freeway merging \cite{Zhou2017}. Macroscopic studies have shown that \gls{abk:amod} systems contribute to more accessible, efficient, and sustainable transportation systems \cite{Fagnant2015}, but their unregulated deployment could worsen congestion, cannibalize public transportation, and increase vehicle-kilometers traveled~\cite{Oh2020}.
Further, researchers studied the joint design and operation of \gls{abk:amod} systems and the public transportation system~\cite{SalazarLanzettiEtAl2019,Banerjee2021,ZardiniLanzettiTNSE2022} and the economics of \gls{abk:amod} systems, focusing on the effects of carpooling \cite{Ostrovsky2019}. All approaches but \cite{Ostrovsky2019} imposed a central decision maker and neglected game-theoretic dynamics, while \cite{Ostrovsky2019} neglected \gls{abk:msp} behavior, essentially replacing \glspl{abk:msp} with a perfectly competitive, zero-profit market. In contrast, one of the key elements in our work is the consideration of \glspl{abk:msp}' behavior.

Concluding, to the best of our knowledge, there exists no methodological framework capable of analyzing the dynamics among customers and multiple \glspl{abk:msp} offering different mobility services while considering the operational constraints of the underlying transportation system.

\subsubsection*{Contribution}\label{subsec:scope}
To fill this research gap, we provide the first game-theoretic framework that captures the dynamics among multiple \glspl{abk:msp} and customers while accounting for the system's operational constraints. Specifically, our scientific contribution is fourfold.
First, we develop a generic mathematical framework that allows us to analyze the dynamics of complex transportation problems by combining graph-theoretic network models with game-theoretic approaches that consider the interconnections between a transportation network and its corresponding market place.
Second, we tailor this framework to the specific cases of two \gls{abk:amod} operators interacting between themselves and with public transport as well as customers. Moreover, we consider the degenerate case of a monopolistic \gls{abk:amod} operator and study the equilibria of the arising games. 
Third, we provide a real-world case study for the city of Berlin.
Fourth, we present extensive numerical results and sensitivity analyses, yielding managerial findings on the interaction between \gls{abk:amod} operators and public transport. Among others, we show that the system equilibria yield similar modal shares for both the degenerate monopolistic and the competitive cases, while we observe significant differences in the operators' profit.
\subsubsection*{Organization}\label{subsec:organization}
The remainder of this paper is as follows. We specify our problem setting in \cref{sec:problemsetting} and develop our methodology in \cref{sec:methodology}. In \cref{sec:amodgame}, we tailor our methodology to study the interplay between \gls{abk:amod} operators and a public transportation system. \cref{sec:case study} details our experimental design, focusing on a real-world case study. We present results in \cref{sec:results}. \cref{sec:conclusion} concludes this paper by summarizing its main findings.
In the appendices, we provide fundamentals of graph theory (Appendix~\ref{app:graphtheory}) and proofs for all propositions (Appendix~\ref{app:proofs}). When introducing a term defined in the appendix for the first time, we mark it with a dagger\myast{}.

\section{Problem Setting}\label{sec:problemsetting}
In this work, we focus on intra-city passenger transportation, where an \gls{abk:amod} fleet substitutes the service of current taxi or ride-hailing fleets. In such a system, different \glspl{abk:msp} interact with each other and with customers. We distinguish \glspl{abk:msp} between commercial \glspl{abk:msp} and municipalities and focus on three stakeholder groups.

\textit{Mobility service providers} offer transportation services to customers in order to maximize their profit. \glspl{abk:msp} require cost-effective operations, ignoring the resulting externalities and their effect on the overall system or on other players. Accordingly, an \gls{abk:msp} aims to influence customer decisions towards her own benefit via dedicated pricing strategies. The effect of such pricing strategies reaches a maximum for an \gls{abk:amod} fleet, where no intermediaries (e.g., drivers) may perturb the operator's strategy, and the operator remains in full control of the fleet behavior.

\textit{Municipalities} offer transportation services via a \gls{abk:pta} to customers while aiming to sustain infrastructure services, accessibility, and quality of life. While \glspl{abk:msp} complement the transportation services offered by municipalities, they also cause externalities and dissatisfaction. Accordingly, municipalities try to influence \glspl{abk:msp}, e.g., through subsidies or taxes. 

\textit{Customers} represent the demand side and request for transportation services. Customers can choose between different transportation modes or combine them to complete their ride. Each customer selects a trip in line with her preferences, e.g., minimizing her cost, travel time, or a combination of both.

To adequately capture such a system's dynamics and the interactions between stakeholders, we model the city's transportation system on two different levels: a transportation network and its market place (see \cref{fig:transmarket}). 

\textit{Transportation market place:} The interaction between the different stakeholders occurs in the system's \textit{market place}, e.g., via a smartphone app. Here, \glspl{abk:msp} and municipalities offer several transportation services to customers. Customers have different transportation demands and respond to these offers depending on their individual rationale. These interactions happen at the operational level in a short time horizon. At the strategic level, \glspl{abk:msp} interact and therefore influence each other as their business models may interfere, i.e., a customer may substitute the service of one provider with that of another provider, depending on quality and price. In this setting, an \gls{abk:amod} operator has a specific advantage over other operators. Compared to a \gls{abk:pta}, the \gls{abk:amod} operator can leverage short-term pricing policies to influence customer behavior. While a regular ride-hailing operator may utilize short-term pricing policies as well, an \gls{abk:amod} operator is in full control of the fleet and does not face disturbances from intermediaries such as drivers. This makes a short-term pricing policy more effective and usable as a surgical tool to steer the system. Assessing the impact of this advantage is a central motivation for this work.
\begin{figure}[!t]
	\centering
	\tikzstyle{node graph} = [circle,thick]
\tikzstyle{subway line} = [-,thick,color=yellow]
\tikzstyle{road} = [-,ultra thick]
\tikzstyle{walk} = [-,thick,color=blue]

\def\deltax{0.0}

\begin{tikzpicture}[scale=0.60]
    \draw[thick,fill overzoom image=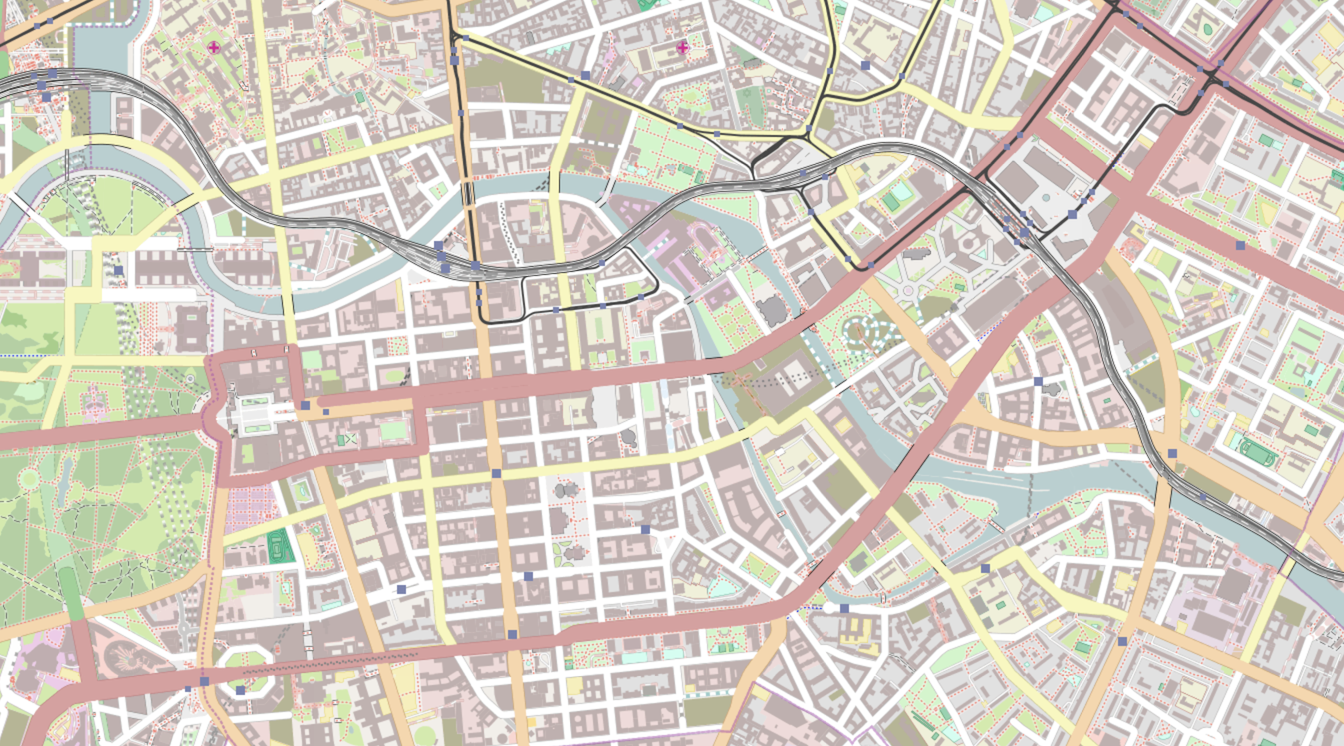] (-4,-2) --++ (8,0) --++ (2,3) --++ (-8,0) --++ (-2,-3);
    
    
    \draw[thick,fill=white,fill opacity=0.95] ($(-4,-2)+(0,2)$) --++ (8,0) --++ (2,3) --++ (-8,0) --++ (-2,-3); 
    
    \node[rectangle,draw,fill=white,thick,inner sep=6pt] at (0.7*2/3,0.7) (customers) {Customers}; 
    \node[rectangle,draw,fill=white,thick,inner sep=6pt] at (2.2*2/3+2-0.5,2.2) (mcp) {PTA};  
    \node[rectangle,draw,fill=white,thick,inner sep=6pt] at (2.2*2/3-2-0.7,2.2) (msp) {MSP};  
    
    \draw[->,thick] ($(customers.north)+(+0.7,0)$) to[out=45,in=-135] ($(mcp.south)+(-0.1,0)$);
    \draw[->,thick] ($(customers.north)+(-0.7,0)$) to[out=90,in=-45] ($(msp.south)+(0.1,0)$);
    \draw[->,thick] ($(msp.south)+(-0.1,0)$) to[out=-90,in=180] (customers.west);
    \draw[->,thick] ($(mcp.south)+(+0.1,0)$) to[out=-90,in=0] (customers.east);
    
    \draw[->,dotted,thick] ($(msp.east)+(0,0.1)$) to[out=15,in=165] ($(mcp.west)+(0,0.1)$);
    \draw[<-,dotted,thick] ($(msp.east)+(0,-0.1)$) to[out=-15,in=-165] ($(mcp.west)+(0,-0.1)$);

    \draw[-,dotted,very thick] (-2.51,-0.86) -- ($(customers.south)+(-0.5,0)$) node[pos=0,circle,draw,solid,minimum size=0.1cm,fill=black,inner sep=0pt] {};
    \draw[-,dotted,very thick] (-1.88,-1.60) -- ($(customers.south)+(-0.25,0)$) node[pos=0,circle,draw,solid,minimum size=0.1cm,fill=black,inner sep=0pt] {};
    \draw[-,dotted,very thick] (-0.38,-0.62) -- ($(customers.south)+(0,0)$) node[pos=0,circle,draw,solid,minimum size=0.1cm,fill=black,inner sep=0pt] {};
    \draw[-,dotted,very thick] (1.52,-0.95) -- ($(customers.south)+(0.25,0)$) node[pos=0,circle,draw,solid,minimum size=0.1cm,fill=black,inner sep=0pt] {};
    \draw[-,dotted,very thick] (3.11,-0.65) -- ($(customers.south)+(0.5,0)$) node[pos=0,circle,draw,solid,minimum size=0.1cm,fill=black,inner sep=0pt] {};
    
    \def\phoneWidthPlot{0.35}
    \def\phoneHeightPlot{0.55}
	\node at ($(customers.north)+(-0.5*\phoneWidthPlot,0.05)$) (phone) {};
    \draw[rounded corners=2pt,fill=black] (phone) rectangle ++(\phoneWidthPlot,\phoneHeightPlot);
    \draw[rounded corners=1pt,fill=white] ($(phone)+(+0.1*\phoneWidthPlot,0.1)$) rectangle ++(0.8*\phoneWidthPlot,0.75*\phoneHeightPlot);
    \draw[rounded corners=2pt,fill=white] ($(phone)+(+0.5*\phoneWidthPlot,0.05)$) circle (0.05cm);
     
\end{tikzpicture}
	\caption{Schematic model of a transportation system with one commercial \gls{abk:msp} and a \gls{abk:pta}. The upper level is the transportation market place, where the different stakeholders interact. The lower level is the transportation network, where the realization of a demand and supply match takes place.}    
	\label{fig:transmarket}
\end{figure}

\textit{Transportation network:} The realization of a demand and supply match between customers and mobility service providers occurs in the transportation network, consisting of the city's road and public transportation networks. Accordingly, the transportation network imposes boundaries on offers in the transportation market place as it determines the available infrastructure and comprises externalities such as congestion and travel times.

In this paper, we focus specifically on the interaction between \gls{abk:amod} operators and a municipality, offering transportation service through a \gls{abk:pta}. Here, we focus on a short-term perspective, i.e., the equilibrium for a snapshot of a day. Given this time horizon, the \gls{abk:pta} does not change her prices at such an operational level because public transport tariffs are set strategically for a significantly longer time horizon. Conversely, \gls{abk:amod} operators change their prices at the operational level to maximize their profit. When taking the pricing decision, an \gls{abk:amod} operator considers potential profits from serving customers and additional costs resulting from relocating vehicles after finished trips. To this end, rebalancing empty vehicles can be interpreted as a reorganization of vehicle positions to match anticipated demand.

\section{Methodology}\label{sec:methodology}
We now develop the methodology to analyze the problem setting introduced in \cref{sec:problemsetting}. We first show in \cref{subsec:graphsetting} how to formalize the transportation network via graph theory. We then introduce a fundamental game-theoretic model for a transportation market place in \cref{subsec:gametheorybasics}.
\subsection{Graph Representation of a Multi-stakeholder Transportation System}
\label{subsec:graphsetting}
We represent a transportation network on a multigraph\myast{} $\graphDefinition{}$ with a vertex set $\verticesSet{}$, an arc set $\arcsSet{}$, and identifiers $\arcSource{}:\arcsSet{}\to\verticesSet{}$ and $\arcTarget{}:\arcsSet{}\to\verticesSet{}$ assigning each arc to its source and sink.
Each vertex $\vertex{}\in\verticesSet{}$ denotes a location where a customer can start or end her trip. Each arc $\arc{}\in\arcsSet{}$ represents a certain transportation mode for a trip, e.g., a self-driving car or a subway line. Accordingly, multiple arcs may exist between any two vertices $\vertex{1},\vertex{2}$ to model the available modes of transportation. 
We define arc subsets $\arcsSet{j}\subset\arcsSet{}$, inducing subgraphs $\graphDefinition{j}$ with $\verticesSet{j}\coloneqq\bigcup_{\arc{}\in\arcsSet{j}}\{\arcSource{}(\arc{}),\arcTarget{}(\arc{})\}$, $\arcSource{j}\coloneqq\arcSource{}|_{\arcsSet{j}}$, and $\arcTarget{j}\coloneqq\arcTarget{}|_{\arcsSet{j}}$ (i.e., the restrictions of  $\arcSource{}$ and $\arcTarget{}$  to $\arcsSet{j}$). Each subgraph denotes a homogeneous mode of transportation, e.g., the subway network or the service region of an \gls{abk:amod} system.
Additionally, each subgraph $\graph{j}$ is controlled by an \gls{abk:msp} (e.g., through price setting), from here on referred to as its operator.
The subgraph $\graphDefinition{0}$ denotes a part of the overall network where customers can move free of (monetary) charge, e.g., by walking.
Accordingly, neither an operator nor prices exist for the subgraph $\graph{0}$.

To reflect customers' choice options in a real transportation system, we assume $\graph{0}$ to be non-trivial (i.e., $\verticesSet{0}\neq \emptyset$) and strongly connected\myast{}. Further, the subgraphs' arc sets are disjoint: formally $\arcsSet{j}\cap\arcsSet{l}=\emptyset,\ \forall j\in\{0,1,\ldots,\operatorsNumber\}, l\in\{0,1,\ldots,\operatorsNumber\}\setminus\{j\}$, where $\operatorsNumber$ denotes the number of operators in the system.
While these properties allow customers to choose between all available transportation modes, they do not prevent various operators from providing service on the same road, as multigraphs allow multiple arcs for an origin-destination pair. 

We use a time-invariant network flow representation for customer movements and differentiate between transportation requests and transportation demands. A request refers to a single customer~$k$ and is well-defined by a pair $(\originVertex{k},\destinationVertex{k})\in\verticesSet{}\times\verticesSet{}$, which states the origin~$\originVertex{k}$ and the destination~$\destinationVertex{k}$ of her trip. Conversely, a demand\myast{} $\demand{i}$ aggregates identical requests from different customers via origin, destination, and demand rate; it therefore refers to a customer flow. 

\begin{definition}[Demand]
	\label{definition:demand}
	A demand $\demand{i}$ is a triple $(\originVertex{i},\destinationVertex{i},\demandRate{i})\in\verticesSet{}\times\verticesSet{}\times\positiveReals$ uniquely defined by its origin $\originVertex{i}$, its destination $\destinationVertex{i}$, and a demand rate $\demandRate{i}$, which results from all requests coinciding in $\originVertex{i}$ and $\destinationVertex{i}$.
	For an arbitrary set of $M$ demands with label set $\{1,\ldots,\demandNumber\}$, we define $\demandsSet{}\coloneqq\{\demand{1},\ldots,\demand{\demandNumber}\}$ as the set of all demands.
\end{definition}

We stipulate that each demand starts and ends on the subgraph $\graph{0}$; else, if a trip starts or ends on the subgraph of \gls{abk:msp} $j$, then \gls{abk:msp} $j$ can pick arbitrarily large prices and the customer is forced with using the transportation service.
Formally, for all $\demand{i}=(\originVertex{i},\destinationVertex{i},\demandRate{i})\in\demandsSet{}$ we have $\originVertex{i},\destinationVertex{i}\in\verticesSet{0}$. 
In general, a given demand may be satisfied by multiple paths\myast. With $\pathsSet{\arcsSet{}}$ being the set of all paths in $\graphDefinition{}$, we can define the demand-satisfying paths.
\begin{definition}[Demand-satisfying paths]
	\label{definition:pathsatreq}
	A set of $L$ paths $\{\path{1},\ldots,\path{L}\}\subseteq \pathsSet{\arcsSet{}}$ satisfies a demand $\demand{i}=(\originVertex{i},\destinationVertex{i},\demandRate{i})$ if the origins\myast~ $\originPath{}(\path{j})$ and destinations\myast~ $\destinationPath{}(\path{j})$ of the paths and of the demand coincide, i.e., if $\originPath{}(\path{j})=\originVertex{i}$ and $\destinationPath{}(\path{j})=\destinationVertex{i}$ for all $j\in\{1,\ldots,L\}$.
	We denote by $\pathsSetRequest{\demand{i}}\subseteq \pathsSet{\arcsSet{}}$ the set of all paths satisfying demand $\demand{i}$.
\end{definition}
Then, according to the customers' choices, the demand induces flows\myast{}, defined by a path and a flow rate. We denote the path and the rate corresponding to a flow $\flow{}$ by $\projectionPathFlow{\flow{}}$ and $\projectionRateFlow{\flow{}}$.

\subsection{Game-theoretic Setting}
\label{subsec:gametheorybasics}
We focus on the interplay among (different) \glspl{abk:msp} and customers: 
first, customers set requests in $\graph{0}$.
Second, the \glspl{abk:msp} decide on prices for their offered services on their subgraphs $\graph{j}$.
Third, customers choose a transportation service which then results in transport flows on~$\graph{}$.
We illustrate the game for the simplified case of one \gls{abk:amod} \gls{abk:msp} and a \gls{abk:pta} controlled by a municipality in \cref{fig:multigraph customers}.
In the following, we formalize the customer and operator decisions and define game equilibria.

\begin{figure}[!b]
    \newif\ifshowprices
    \newif\ifshowpaths
    \newif\ifshowflows
    \newif\ifshowrebalancing
    \def\distanceBlockVert{1.6}
	\def\distanceBlockHor{1.6}
    \def\angle{65}
    \centering
    \begin{subfigure}[t]{0.32\columnwidth}
        \showpricestrue
        \centering
        \tikzstyle{node graph} = [circle,thick]
\tikzstyle{public} = [-,thick,color=publicTransit,opacity=0.15]
\tikzstyle{road} = [-,thick,color=amod,opacity=0.15]
\tikzstyle{walk} = [-,thick,color=walking,opacity=0.15]

\begin{tikzpicture}[scale=0.45]
	\ifshowprices
    \node[node graph,draw] at (0,0) (node00) {};
    \else
    \node[node graph,draw,fill=black] at (0,0) (node00) {};
    \fi
    \node[node graph,draw] at (0,\distanceBlockVert) (node01) {};
    \node[node graph,draw] at (0,2*\distanceBlockVert) (node02) {};
    \node[node graph,draw] at (0,3*\distanceBlockVert) (node03) {};
    \node[node graph,draw] at (\distanceBlockHor,0) (node10) {};
    \node[node graph,draw] at (\distanceBlockHor,\distanceBlockVert) (node11) {};
    \node[node graph,draw] at (\distanceBlockHor,2*\distanceBlockVert) (node12) {};
    \node[node graph,draw] at (\distanceBlockHor,3*\distanceBlockVert) (node13) {};
    \node[node graph,draw] at (2*\distanceBlockHor,0) (node20) {};
    \node[node graph,draw] at (2*\distanceBlockHor,\distanceBlockVert) (node21) {};
    \ifshowprices
    \node[node graph,draw] at (2*\distanceBlockHor,2*\distanceBlockVert) (node22) {};
    \else
    \node[node graph,draw,fill=black!50!white] at (2*\distanceBlockHor,2*\distanceBlockVert) (node22) {};
    \fi
    \node[node graph,draw] at (2*\distanceBlockHor,3*\distanceBlockVert) (node23) {};
    \node[node graph,draw] at (3*\distanceBlockHor,0) (node30) {};
    \node[node graph,draw] at (3*\distanceBlockHor,3*\distanceBlockVert) (node33) {};
    
    \draw[road]    (node00) -- (node01);
    \draw[road]    (node01) -- (node02);
    \draw[road]    (node02) -- (node03);
    \draw[road]    (node10) -- (node11);
    \draw[road]    (node11) -- (node12);
    \draw[road]    (node12) -- (node13);
    \draw[road]    (node20) -- (node21);
    \draw[road]    (node21) -- (node22);
    \draw[road]    (node22) -- (node23);
    
    \draw[road] (node00) -- (node01);
    \draw[road] (node01) -- (node11);
    \draw[road] (node11) -- (node21);
    \draw[road] (node10) -- (node20);
    \draw[road] (node03) -- (node13);
    \draw[road] (node13) -- (node23);
    \draw[road] (node00) -- (node10);
    
    \draw[walk] (node00) to[out=\angle,in=-\angle] (node01);
    \draw[walk] (node01) to[out=\angle,in=-\angle] (node02);
    \draw[walk] (node02) to[out=\angle,in=-\angle] (node03);
    
    \draw[walk] (node10) to[out=\angle,in=-\angle] (node11);
    \draw[walk] (node11) to[out=\angle,in=-\angle] (node12);
    \draw[walk] (node12) to[out=\angle,in=-\angle] (node13);
    
    \draw[walk] (node20) to[out=180-\angle,in=180+\angle] (node21);
    \draw[walk] (node21) to[out=180-\angle,in=180+\angle] (node22);
    \draw[walk] (node22) to[out=180-\angle,in=180+\angle] (node23);
    
    \draw[walk] (node01) to[out=90-\angle,in=90+\angle] (node11);
    \draw[walk] (node11) to[out=90-\angle,in=90+\angle] (node21);
    
    \draw[walk] (node03) to[out=-90+\angle,in=-90-\angle] (node13);
    \draw[walk] (node13) to[out=-90+\angle,in=-90-\angle] (node23);
    
    \draw[walk] (node20) -- (node30);
    \draw[walk] (node23) -- (node33);
    
    \draw[walk] (node02) -- (node12);
    
    \draw[public] (node20) to [out=-90-\angle,in=-90+\angle] (node00);
    \draw[public] (node00) to[out=180-\angle,in=180+\angle] (node01);
    \draw[public] (node01) to[out=180-\angle,in=180+\angle] (node03);
    \draw[public] (node03) to[out=90-\angle,in=90+\angle] (node23);
    \draw[public] (node23) to[out=-\angle,in=\angle] (node22);
    \draw[public] (node22) to[out=-\angle,in=\angle] (node20);
    
    \draw[public] (node30) -- (node33);
    
    \ifshowprices
    \draw[road,opacity=1,very thick,->,dotted] (node00) -- (node12) node[pos=0.7,left]  {2};
    \draw[road,opacity=1,very thick,->,dotted] (node02) -- (node13) node[pos=0.5,right] {4};
    \draw[road,opacity=1,very thick,->,dotted] (node00) -- (node21) node[pos=0.3,above] {1};
    \fi
    
    \ifshowpaths
    \draw[road,opacity=1,very thick,->] (node00) -- (node21); 
    \draw[walk,opacity=1,very thick,->] (node21) to[out=180-\angle,in=180+\angle] (node22); 
    
    \draw[public,opacity=1,very thick,->] (node00) -- (node23); 
    \draw[walk,opacity=1,very thick,->] (node23) to[out=180+\angle,in=180-\angle] (node22); 
    \fi
    
    \ifshowflows
    \draw[road,opacity=1,very thick,->] (node00) -- (node10); 
    \draw[road,opacity=1,very thick,->] (node10) -- (node20); 
    \draw[road,opacity=1,very thick,->] (node20) -- (node21); 
    \fi
    \ifshowrebalancing
    \draw[road,opacity=1,very thick,->] (node21) -- (node11); 
    \draw[road,opacity=1,very thick,->] (node11) -- (node01); 
    \draw[road,opacity=1,very thick,->] (node01) -- (node00); 
    \fi
\end{tikzpicture}
        \showpricesfalse
        \caption{ \footnotesize
        Example of (some of the) prices set by an \gls{abk:amod} operator for the origin-destination pairs on her subgraph (the road subgraph, depicted in blue). Customers see these prices and react accordingly.} 
    \end{subfigure}
    \hfill
    \begin{subfigure}[t]{0.32\columnwidth}
        \showpathstrue
        \centering
        \tikzstyle{node graph} = [circle,thick]
\tikzstyle{public} = [-,thick,color=publicTransit,opacity=0.15]
\tikzstyle{road} = [-,thick,color=amod,opacity=0.15]
\tikzstyle{walk} = [-,thick,color=walking,opacity=0.15]

\begin{tikzpicture}[scale=0.45]
	\ifshowprices
    \node[node graph,draw] at (0,0) (node00) {};
    \else
    \node[node graph,draw,fill=black] at (0,0) (node00) {};
    \fi
    \node[node graph,draw] at (0,\distanceBlockVert) (node01) {};
    \node[node graph,draw] at (0,2*\distanceBlockVert) (node02) {};
    \node[node graph,draw] at (0,3*\distanceBlockVert) (node03) {};
    \node[node graph,draw] at (\distanceBlockHor,0) (node10) {};
    \node[node graph,draw] at (\distanceBlockHor,\distanceBlockVert) (node11) {};
    \node[node graph,draw] at (\distanceBlockHor,2*\distanceBlockVert) (node12) {};
    \node[node graph,draw] at (\distanceBlockHor,3*\distanceBlockVert) (node13) {};
    \node[node graph,draw] at (2*\distanceBlockHor,0) (node20) {};
    \node[node graph,draw] at (2*\distanceBlockHor,\distanceBlockVert) (node21) {};
    \ifshowprices
    \node[node graph,draw] at (2*\distanceBlockHor,2*\distanceBlockVert) (node22) {};
    \else
    \node[node graph,draw,fill=black!50!white] at (2*\distanceBlockHor,2*\distanceBlockVert) (node22) {};
    \fi
    \node[node graph,draw] at (2*\distanceBlockHor,3*\distanceBlockVert) (node23) {};
    \node[node graph,draw] at (3*\distanceBlockHor,0) (node30) {};
    \node[node graph,draw] at (3*\distanceBlockHor,3*\distanceBlockVert) (node33) {};
    
    \draw[road]    (node00) -- (node01);
    \draw[road]    (node01) -- (node02);
    \draw[road]    (node02) -- (node03);
    \draw[road]    (node10) -- (node11);
    \draw[road]    (node11) -- (node12);
    \draw[road]    (node12) -- (node13);
    \draw[road]    (node20) -- (node21);
    \draw[road]    (node21) -- (node22);
    \draw[road]    (node22) -- (node23);
    
    \draw[road] (node00) -- (node01);
    \draw[road] (node01) -- (node11);
    \draw[road] (node11) -- (node21);
    \draw[road] (node10) -- (node20);
    \draw[road] (node03) -- (node13);
    \draw[road] (node13) -- (node23);
    \draw[road] (node00) -- (node10);
    
    \draw[walk] (node00) to[out=\angle,in=-\angle] (node01);
    \draw[walk] (node01) to[out=\angle,in=-\angle] (node02);
    \draw[walk] (node02) to[out=\angle,in=-\angle] (node03);
    
    \draw[walk] (node10) to[out=\angle,in=-\angle] (node11);
    \draw[walk] (node11) to[out=\angle,in=-\angle] (node12);
    \draw[walk] (node12) to[out=\angle,in=-\angle] (node13);
    
    \draw[walk] (node20) to[out=180-\angle,in=180+\angle] (node21);
    \draw[walk] (node21) to[out=180-\angle,in=180+\angle] (node22);
    \draw[walk] (node22) to[out=180-\angle,in=180+\angle] (node23);
    
    \draw[walk] (node01) to[out=90-\angle,in=90+\angle] (node11);
    \draw[walk] (node11) to[out=90-\angle,in=90+\angle] (node21);
    
    \draw[walk] (node03) to[out=-90+\angle,in=-90-\angle] (node13);
    \draw[walk] (node13) to[out=-90+\angle,in=-90-\angle] (node23);
    
    \draw[walk] (node20) -- (node30);
    \draw[walk] (node23) -- (node33);
    
    \draw[walk] (node02) -- (node12);
    
    \draw[public] (node20) to [out=-90-\angle,in=-90+\angle] (node00);
    \draw[public] (node00) to[out=180-\angle,in=180+\angle] (node01);
    \draw[public] (node01) to[out=180-\angle,in=180+\angle] (node03);
    \draw[public] (node03) to[out=90-\angle,in=90+\angle] (node23);
    \draw[public] (node23) to[out=-\angle,in=\angle] (node22);
    \draw[public] (node22) to[out=-\angle,in=\angle] (node20);
    
    \draw[public] (node30) -- (node33);
    
    \ifshowprices
    \draw[road,opacity=1,very thick,->,dotted] (node00) -- (node12) node[pos=0.7,left]  {2};
    \draw[road,opacity=1,very thick,->,dotted] (node02) -- (node13) node[pos=0.5,right] {4};
    \draw[road,opacity=1,very thick,->,dotted] (node00) -- (node21) node[pos=0.3,above] {1};
    \fi
    
    \ifshowpaths
    \draw[road,opacity=1,very thick,->] (node00) -- (node21); 
    \draw[walk,opacity=1,very thick,->] (node21) to[out=180-\angle,in=180+\angle] (node22); 
    
    \draw[public,opacity=1,very thick,->] (node00) -- (node23); 
    \draw[walk,opacity=1,very thick,->] (node23) to[out=180+\angle,in=180-\angle] (node22); 
    \fi
    
    \ifshowflows
    \draw[road,opacity=1,very thick,->] (node00) -- (node10); 
    \draw[road,opacity=1,very thick,->] (node10) -- (node20); 
    \draw[road,opacity=1,very thick,->] (node20) -- (node21); 
    \fi
    \ifshowrebalancing
    \draw[road,opacity=1,very thick,->] (node21) -- (node11); 
    \draw[road,opacity=1,very thick,->] (node11) -- (node01); 
    \draw[road,opacity=1,very thick,->] (node01) -- (node00); 
    \fi
\end{tikzpicture}
        \showpathsfalse
        \caption{ \footnotesize
        Example of demand-satisfying paths (i.e., possible customer reactions) for the demand with the black vertex as an origin and the grey vertex as a destination. The upper path combines public transport (red) and walking (yellow), whereas the lower path combines an \gls{abk:amod} (blue) ride and walking (yellow).} 
    \end{subfigure}
    \hfill 
    \begin{subfigure}[t]{0.32\columnwidth}
        \showpathstrue
        \showflowstrue
        \showrebalancingtrue
        \centering
        \tikzstyle{node graph} = [circle,thick]
\tikzstyle{public} = [-,thick,color=publicTransit,opacity=0.15]
\tikzstyle{road} = [-,thick,color=amod,opacity=0.15]
\tikzstyle{walk} = [-,thick,color=walking,opacity=0.15]

\begin{tikzpicture}[scale=0.45]
	\ifshowprices
    \node[node graph,draw] at (0,0) (node00) {};
    \else
    \node[node graph,draw,fill=black] at (0,0) (node00) {};
    \fi
    \node[node graph,draw] at (0,\distanceBlockVert) (node01) {};
    \node[node graph,draw] at (0,2*\distanceBlockVert) (node02) {};
    \node[node graph,draw] at (0,3*\distanceBlockVert) (node03) {};
    \node[node graph,draw] at (\distanceBlockHor,0) (node10) {};
    \node[node graph,draw] at (\distanceBlockHor,\distanceBlockVert) (node11) {};
    \node[node graph,draw] at (\distanceBlockHor,2*\distanceBlockVert) (node12) {};
    \node[node graph,draw] at (\distanceBlockHor,3*\distanceBlockVert) (node13) {};
    \node[node graph,draw] at (2*\distanceBlockHor,0) (node20) {};
    \node[node graph,draw] at (2*\distanceBlockHor,\distanceBlockVert) (node21) {};
    \ifshowprices
    \node[node graph,draw] at (2*\distanceBlockHor,2*\distanceBlockVert) (node22) {};
    \else
    \node[node graph,draw,fill=black!50!white] at (2*\distanceBlockHor,2*\distanceBlockVert) (node22) {};
    \fi
    \node[node graph,draw] at (2*\distanceBlockHor,3*\distanceBlockVert) (node23) {};
    \node[node graph,draw] at (3*\distanceBlockHor,0) (node30) {};
    \node[node graph,draw] at (3*\distanceBlockHor,3*\distanceBlockVert) (node33) {};
    
    \draw[road]    (node00) -- (node01);
    \draw[road]    (node01) -- (node02);
    \draw[road]    (node02) -- (node03);
    \draw[road]    (node10) -- (node11);
    \draw[road]    (node11) -- (node12);
    \draw[road]    (node12) -- (node13);
    \draw[road]    (node20) -- (node21);
    \draw[road]    (node21) -- (node22);
    \draw[road]    (node22) -- (node23);
    
    \draw[road] (node00) -- (node01);
    \draw[road] (node01) -- (node11);
    \draw[road] (node11) -- (node21);
    \draw[road] (node10) -- (node20);
    \draw[road] (node03) -- (node13);
    \draw[road] (node13) -- (node23);
    \draw[road] (node00) -- (node10);
    
    \draw[walk] (node00) to[out=\angle,in=-\angle] (node01);
    \draw[walk] (node01) to[out=\angle,in=-\angle] (node02);
    \draw[walk] (node02) to[out=\angle,in=-\angle] (node03);
    
    \draw[walk] (node10) to[out=\angle,in=-\angle] (node11);
    \draw[walk] (node11) to[out=\angle,in=-\angle] (node12);
    \draw[walk] (node12) to[out=\angle,in=-\angle] (node13);
    
    \draw[walk] (node20) to[out=180-\angle,in=180+\angle] (node21);
    \draw[walk] (node21) to[out=180-\angle,in=180+\angle] (node22);
    \draw[walk] (node22) to[out=180-\angle,in=180+\angle] (node23);
    
    \draw[walk] (node01) to[out=90-\angle,in=90+\angle] (node11);
    \draw[walk] (node11) to[out=90-\angle,in=90+\angle] (node21);
    
    \draw[walk] (node03) to[out=-90+\angle,in=-90-\angle] (node13);
    \draw[walk] (node13) to[out=-90+\angle,in=-90-\angle] (node23);
    
    \draw[walk] (node20) -- (node30);
    \draw[walk] (node23) -- (node33);
    
    \draw[walk] (node02) -- (node12);
    
    \draw[public] (node20) to [out=-90-\angle,in=-90+\angle] (node00);
    \draw[public] (node00) to[out=180-\angle,in=180+\angle] (node01);
    \draw[public] (node01) to[out=180-\angle,in=180+\angle] (node03);
    \draw[public] (node03) to[out=90-\angle,in=90+\angle] (node23);
    \draw[public] (node23) to[out=-\angle,in=\angle] (node22);
    \draw[public] (node22) to[out=-\angle,in=\angle] (node20);
    
    \draw[public] (node30) -- (node33);
    
    \ifshowprices
    \draw[road,opacity=1,very thick,->,dotted] (node00) -- (node12) node[pos=0.7,left]  {2};
    \draw[road,opacity=1,very thick,->,dotted] (node02) -- (node13) node[pos=0.5,right] {4};
    \draw[road,opacity=1,very thick,->,dotted] (node00) -- (node21) node[pos=0.3,above] {1};
    \fi
    
    \ifshowpaths
    \draw[road,opacity=1,very thick,->] (node00) -- (node21); 
    \draw[walk,opacity=1,very thick,->] (node21) to[out=180-\angle,in=180+\angle] (node22); 
    
    \draw[public,opacity=1,very thick,->] (node00) -- (node23); 
    \draw[walk,opacity=1,very thick,->] (node23) to[out=180+\angle,in=180-\angle] (node22); 
    \fi
    
    \ifshowflows
    \draw[road,opacity=1,very thick,->] (node00) -- (node10); 
    \draw[road,opacity=1,very thick,->] (node10) -- (node20); 
    \draw[road,opacity=1,very thick,->] (node20) -- (node21); 
    \fi
    \ifshowrebalancing
    \draw[road,opacity=1,very thick,->] (node21) -- (node11); 
    \draw[road,opacity=1,very thick,->] (node11) -- (node01); 
    \draw[road,opacity=1,very thick,->] (node01) -- (node00); 
    \fi
\end{tikzpicture}
        \showflowsfalse
        \showrebalancingfalse
        \showpathsfalse
        \caption{ \footnotesize
        Assuming some customers choose the \gls{abk:amod} ride, the \gls{abk:amod} \gls{abk:msp} will deploy vehicle flows to serve the demand and to rebalance her fleet to ensure vehicle conservation at each node. The \gls{abk:amod} \gls{abk:msp} only serves the mobility demand induced on her subgraph (e.g., only the blue arc in the lower path).}
    \end{subfigure}
    \caption{Schematic example of the different game stages: (a) the \gls{abk:amod} \gls{abk:msp} sets the prices for all origin-destination pairs in her graph; (b) each demand splits over available paths; (c) the \gls{abk:amod} \gls{abk:msp} operator serves her demand share and rebalances the fleet.}
    \label{fig:multigraph customers}
\end{figure}
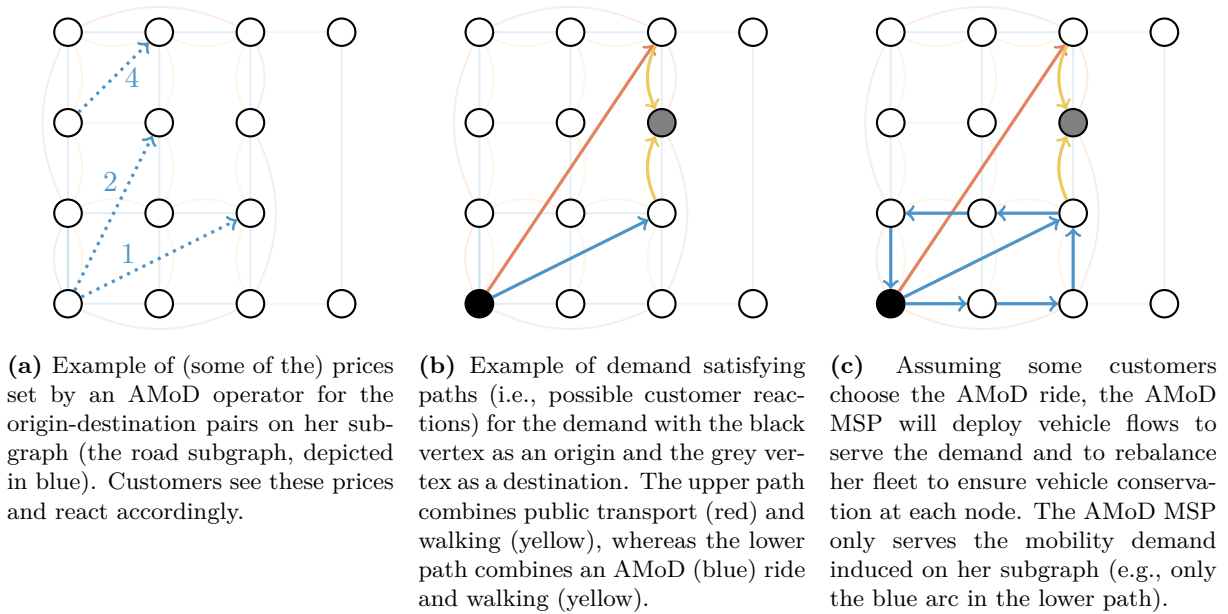

\subsubsection{Customers' Reactions and Operators' Decisions}
Formally, we focus on a simultaneous game between $\operatorsNumber$ operators.
Herein, customers act as non-strategic reactive players.
\paragraph{Customer reaction} 
Customers may travel free of charge by using arcs in the subgraph $\graph{0}$ (i.e., they walk) or may request an \gls{abk:msp}'s service. If they request such a service, they cannot influence the \gls{abk:msp}'s operations. Instead, they only ask for a mobility service between an origin and a destination. Formally, customers select arcs in the \emph{fully-connected}\myast{} operators' subgraph denoted by $\graphDefinition[bar]{j}$.
More specifically, customers select paths in the graphs' arc sets $\arcsSet{0}\cup\bigcup_{j=1}^{\operatorsNumber}\arcsSet[bar]{j}$. As customers may choose different services, each demand $\demand{i}$ splits over its demand-satisfying paths $\pathsSetRequest{\demand{i}}\subseteq\pathsSet{\arcsSet{0}\cup\bigcup_{j=1}^{\operatorsNumber}\arcsSet[bar]{j}}$.
To model this customer behavior, we use a reaction curve \mbox{$\reactionCurve{i}:\pathsSetRequest{\demand{i}}\to[0,\alpha_i]$} for each demand $\demand{i}=(\originVertex{i},\destinationVertex{i},\demandRate{i})$ which assigns a share of the total demand to each demand-satisfying path.
We introduce the customers' action space $\customersActionSpace{i}$ to capture (\textit{i}) demand conservation, i.e., $\sum_{\path{}\in\pathsSetRequest{\demand{i}}}\reactionCurve{i}(\path{})=\demandRate{i}$,
and (\textit{ii}) additional constraints customers might face.
For instance, if demand $\demand{i}$ is not allowed to travel through the path $\path[bar]{}\in\pathsSetRequest{\demand{i}}$, then the corresponding action space reads $\customersActionSpace{i}=\{\reactionCurve{}\in\nonnegativeReals^{\pathsSetRequest{\demand{i}}}\,|\, \reactionCurve{}(\path[bar]{})=0, \sum_{\path{}\in\pathsSetRequest{\demand{i}}}
	\reactionCurve{i}(\path{})=\demandRate{i}\}$.
Finally, each reaction curve is associated with a non-negative cost $\demandCost{i}:\customersActionSpace{i}\times\priceStrategiesSet{1}\times\ldots\times\priceStrategiesSet{\operatorsNumber}\to\nonnegativeReals$, with $\priceStrategiesSet{j}$ denoting the set of pricing strategies available to operator $\operatorsObject{j}$ as elaborated below.
For instance, this cost could represent the sum of the fares paid throughout the trip or the total cost, consisting of the fares paid and a customer's monetary value of time.
\paragraph{Operator decision} Each operator $\operatorsObject{j}$ serves customer demands, herein operating her fleet to maximize profit.
Her profit depends on the customer fares and the operational costs, resulting from customer-transporting vehicles and operating empty vehicles, which the operator relocates in the system to balance her flows.
Recall that customers move on the \emph{fully-connected} version of the operators' subgraph and on the non-controlled graph $\graph{0}$. 
Accordingly, an arc on the operator's fully-connected subgraph denotes the customers' possibility to request a ride from a source vertex to a sink vertex. A ride is only realized if the arc lies on a demand-satisfying path $\path{}\in\pathsSetRequest{\demand{i}}$ with $\reactionCurve{i}(\path{})>0$ for some demand $\demand{i}\in\demandsSet{}$.
To satisfy these requests on her subgraph $\graph{j}$, operator $\operatorsObject{j}$ selects vehicle flows: for each requested ride, there must be a vehicle flow (\textit{i}) whose origin and destination match the ones of the ride and (\textit{ii}) whose rate equals the rate of customers who requested the ride. Further, we require that each vehicle flow does not simultaneously serve two rides. We call such a set of flows potentially active.
\begin{definition}[Potentially active set of flows]
\label{definition:potentially active set of flows}
Consider a demand $\demand{i}$ with reaction curve $\reactionCurve{i}$. A set of flows $\flowsSet{i}$ on operator $\operatorsObject{j}$'s subgraph $\graph{j}$ is potentially active for demand $\demand{i}$ if for each demand-satisfying path $\path{}\in\pathsSetRequest{\demand{i}}$ with $\reactionCurve{i}(\path{})>0$, and each arc $\arc{}\in\path{}\cap\arcsSet[bar]{j}$ there are flows $\flow{i}^{a,p}\in\flowsSet{i}$ which satisfy the mobility service induced by demand $\demand{i}$ such that 
(\textit{i}) its origin and destination coincide with $\arcSource[bar]{j}(\arc{})$ and $\arcTarget[bar]{j}(\arc{})$ and (\textit{ii}) its rate matches the demand share on path $\path{}$, i.e., $\projectionRateFlow{\flow{i}^{a,p}}=\reactionCurve{i}(\path{})$.
To prevent a flow from serving multiple rides, we require that the flows $\flow{i}^{a,p}$ do not coincide; else, a flow $\flow{}=\flow{i}^{a_1,p_1}=\flow{i}^{a_2,p_2}$ would simultaneously serve ride $\arc{1}$ on path $\path{1}$ and ride $\arc{2}$ on path $\path{2}$. We denote the set of potentially active flow sets by $\flowsSetDemand{i}(\reactionCurve{i})$.
\end{definition}

The operator may additionally select a set of flows of rebalancing vehicles $\flowsSet{0}$. 
To ensure that flows are balanced and that additional constraints (e.g., limited vehicles availability) are fulfilled, we introduce the operator's action set $\operatorsActionSpace{j}\subseteq 2^{\flowsSetGraph{\graph{j}}}\times\ldots\times 2^{\flowsSetGraph{\graph{j}}}$ and impose $(\flowsSet{1},\ldots,\flowsSet{M},\flowsSet{0})\in\operatorsActionSpace{j}$.
In this setting, the revenue of an operator reads
\begin{equation}\label{eq:operator revenue}
	\summe{i\in\demandIndexSet,\path{}\in \pathsSetRequest{\demand{i}},a\in \path{}\cap\arcsSet[bar]{j}}\qquad 
	\reactionCurve{i}(\path{})\cdot \priceStrategy{j}(\arcSource[bar]{j}(\arc{}),\arcTarget[bar]{j}(\arc{})),
\end{equation}
depending on the prices set by the operator, where we refer to a representation of all prices set by the operator as a pricing strategy $\priceStrategy{}$.
\begin{definition}[Pricing strategy]
	\label{definition:pricing strategy}
	Consider a multigraph $\graphDefinition{j}$.
	A pricing strategy $\priceStrategy{}:\verticesSet{j}\times\verticesSet{j}\to\nonnegativeReals$ on $\graph{j}$ is a function assigning a non-negative price $\priceStrategy{}(o,d)$ to each origin-destination pair $(o,d)\in\verticesSet{j}\times\verticesSet{j}$.
\end{definition}
As multiple pricing strategies can exist, we collect the feasible pricing strategies on a graph $\graphDefinition{j}$ in the set $\priceStrategiesSet{j}\subseteq \nonnegativeReals^{\verticesSet{j}\times\verticesSet{j}}$. In line with realistic constraints on pricing strategies, we assume $\priceStrategiesSet{j}$ to be closed, convex, and of the form
$\priceStrategiesSet{j}
    = \{\left.\priceStrategy{}\in\nonnegativeReals^{\verticesSet{j}\times\verticesSet{j}}\,\right|\,g_j(\priceStrategy{}(o,d),o,d)\leq 0\;\forall o, d\in\verticesSet{j}\}$
for some $g_j:\nonnegativeReals\times\verticesSet{j}\times\verticesSet{j}\to\reals^l$, with $l\in\mathbb{N}$, expressing the constraints on the price $\priceStrategy{}(o,d)$; e.g., if prices are regulated by an upper bound $B\in\nonnegativeReals$, then $g_j(\priceStrategy{j}(o,d),o,d)=\priceStrategy{j}(o,d)-B$.

To model the operator's costs (e.g., energy consumption, maintenance, and depreciation of the vehicles), we introduce a cost $\operationCostFunction{j}(a)$ for traversing arc $\arc{}\in\arcsSet{j}$, resulting in the cost $\projectionRateFlow{\flow{}}\sum_{\arc{}\in\projectionPathFlow{\flow{}}}\operationCostFunction{j}(\arc{})$ for a flow $\flow{}$. Thus, the total cost is
\begin{equation}\label{eq:operator cost}
    \!\!\!\!\!
    \min_{\substack{\flowsSet{i}\in\flowsSetDemand{i}(\reactionCurve{i}), \flowsSet{0}\in 2^{\flowsSetGraph{\graph{j}}}, \\
        		(\{\flowsSet{i}\}_{i=1}^{\demandNumber},\flowsSet{0})\in\operatorsActionSpace{j}}}\,
        \sum_{i\in\{0,1,\ldots,\demandNumber\},\flow{}\in\flowsSet{i}} 
        \projectionRateFlow{\flow{}}
        \sum_{\arc{}\in\projectionPathFlow{\flow{}}}
        \operationCostFunction{j}(\arc{}).
\end{equation}
Finally, the operator's profit $\utility_j:\priceStrategiesSet{j}\times\prod_{i\in\demandIndexSet}\customersActionSpace{i}\to\reals$, depending on the operator's pricing strategy and on the customers' reaction curves, results from the excess of revenue (Eq. \eqref{eq:operator revenue}) over costs (Eq. \eqref{eq:operator cost}). 

A few comments on this general setting are in order.
First, we do not include direct interactions among customers, but our model can be extended to accommodate them. In particular, one can define customers as strategic players, interacting simultaneously with themselves and sequentially with \glspl{abk:msp}. In line with this, we leave effects such as ride pooling to future research. 
Second, we assume without loss of generality that an operator serves all customer requests. 
However, an operator can technically drop a customer by imposing an artificially high transportation fare, which causes the customer to refuse to choose a ride.
Third, we neglect the operator's fixed costs because they do not affect the operational decisions.
Fourth, we consider a time-invariant transportation system. This assumption reflects the mesoscopic nature of our study, i.e., we consider a representative snapshot of the transportation system to account for a realistic behavior of \glspl{abk:msp} and customers within the studied game.
Fifth, we assume that a customer always decides to travel. One may easily relax this assumption by adding an auxiliary subgraph representing customers who refrain from traveling.

\subsubsection{Game Equilibrium}
As a basis for the definition of game equilibria, we first introduce a customer demand's optimal reaction: A demand reacts optimally if its reaction curve minimizes its cost for given operators' pricing strategies.
\begin{definition}[Optimal reaction]
	\label{definition:customer equilibrium}
	Given the operators' pricing strategies $(\{\priceStrategy{j}\}_{j=1}^{\operatorsNumber})\in\prod_{j=1}^{\operatorsNumber}\priceStrategiesSet{j}$, the reaction curve $\reactionCurve[star]{i}\in\customersActionSpace{i}$ is optimal for demand $\demand{i}$ if $\demandCost{i}(\reactionCurve[star]{i},\{\priceStrategy{j}\}_{j=1}^{\operatorsNumber})
    	\leq 
    	\demandCost{i}(\reactionCurve{i},\{\priceStrategy{j}\}_{j=1}^{\operatorsNumber})$
    for all $\reactionCurve{i}\in\customersActionSpace{i}$.
	Let $\customersEquilibria{i}(\{\priceStrategy{j}\}_{j=1}^{\operatorsNumber})$ be the set of all optimal reactions.
\end{definition}
Since operators interact simultaneously, we say that the game is at equilibrium if none of the operators can increase her profit by unilaterally changing her pricing strategy, given that customers react optimally.
To keep the definition of such an equilibrium concise, we assume without loss of generality that for all operators' pricing strategies $(\{\priceStrategy{j}\}_{j=1}^{\operatorsNumber})\in\prod_{j=1}^{\operatorsNumber}\priceStrategiesSet{j}$ and all demands $i\in\demandIndexSet$, the set $\customersEquilibria{i}(\{\priceStrategy{j}\}_{j=1}^{\operatorsNumber})$ is a singleton which ensures profit uniqueness for a pricing strategy. A relaxation of this assumption is however straightforward; e.g., by introducing a selection function for the sets $\customersEquilibria{i}(\{\priceStrategy{j}\}_{j=1}^{\operatorsNumber})$.

Denoting the operator's profit as $\utility_j(\priceStrategy{j},\{\customersEquilibria{i}(\priceStrategy{j},\priceStrategy{-j})\}_{i=1}^{\demandNumber})$, where $\priceStrategy{-j}\in\prod_{k=1,k\neq j}^{\operatorsNumber}\priceStrategiesSet{k}$ is the tuple of the pricing strategies of all operators but operator $j$, we can define equilibria. 
\begin{definition}[Game equilibrium]
	\label{definition:game equilibrium}
	The pricing strategies $\equilibriumGeneral[star]\in\prod_{j=1}^{\operatorsNumber}\priceStrategiesSet{j}$ are an equilibrium of the game if no operator can increase her profit by unilaterally deviating from her pricing strategy: $\equilibriumGeneral[star]$ is an equilibrium if for all $j\in\{1,\ldots,\operatorsNumber\}$ and all $\priceStrategy{j}\in\priceStrategiesSet{j}$ $\utility_j(\priceStrategy[star]{j},\{\customersEquilibria{i}(\priceStrategy[star]{j},\priceStrategy[star]{-j})\}_{i=1}^{\demandNumber})
    	\geq
    	\utility_j(\priceStrategy{j},\{\customersEquilibria{i}(\priceStrategy{j},,\priceStrategy[star]{-j})\}_{i=1}^{\demandNumber})$.
\end{definition}

\section{The Interplay between AMoD Systems and Public Transport}\label{sec:amodgame}

We now tailor our general framework to the case where two \gls{abk:amod} operators interact with themselves and a public transport system. Further, we show how this case simplifies when a single \gls{abk:amod} operator monopolizes the market and competes only with the public transport system.  
We believe these two cases are realistic for many real-world settings, either because only one firm can operate the autonomous taxis or because (e.g., due to natural economies of scale) there are only two competing autonomous ride-hailing systems.
Our game-theoretic model offers two use cases. First, \gls{abk:amod} operators can solve the underlying game at set intervals (e.g., every hour) to determine profit-maximizing prices on short-term demand predictions. Second, the game can guide strategic decisions and policymaking in mobility systems, as we illustrate in our case studies in~\cref{sec:case study,sec:results}; e.g., municipalities can use our game-theoretic model to assess the impact of various transport pricing policies on user behavior, modal split, and costs.

\subsubsection*{General Setting}\label{subsec:general setting}
Formally, we focus on a game with three operators, all providing service to customers but with different pricing strategies. The municipality (Operator~3) operates a public transport system through the \gls{abk:pta}. Here, prices are fixed for a medium-term time horizon. Hence, we treat the municipality’s pricing strategy as fixed, i.e., $\priceStrategiesSet{3}=\{\priceStrategy{3}\}$.
With this assumption, the game consists of a simultaneous game between two \gls{abk:amod} operators (Operators~1 and 2), offering mobility services on the road network. The operators take customer requests and the \gls{abk:pta}'s prices as given and can also compute customers' optimal reaction curves for their own possible pricing strategies. Given this information, the \gls{abk:amod} operators act as strategic players maximizing their profit. 

The problem of \gls{abk:amod} operator $j$ is as follows.
The operator selects a short-term pricing strategy $\priceStrategy{j}\in\priceStrategiesSet{j}\coloneqq\nonnegativeReals^{\verticesSet{j}\times\verticesSet{j}}$ to maximize profit. All \gls{abk:amod} operators provide mobility service on the road, so $\graph{j}=\graph{\mathrm{r}}$ where $\graph{\mathrm{r}}$ is the road graph. 
Operators face a transportation system in steady state and operate a fleet of $\maxNumberVehicles{j}$ vehicles to serve the customers' transportation requests and rebalance the fleet. Hence, the \gls{abk:amod} operator's action set $\operatorsActionSpace{j}$ comprises a set of balanced flows $(\flowsSet{1},\ldots,\flowsSet{M},\flowsSet{0})$ (i.e., such that in- and out-degree of all vertices coincide) such that 
\begin{equation*}
    \begin{aligned}
        \summe{i\in\demandIndexSet,\flow{}\in\flowsSet{i}}\quad 
        \projectionRateFlow{\flow{}}\cdot\timeRoad{i,j}
        + \sum_{\flow{}\in\flowsSet{0}}
        \projectionRateFlow{\flow{}}\cdot \timeArc{j}(\projectionPathFlow{\flow{}})\leq \maxNumberVehicles{j}
    \end{aligned}
\end{equation*}
where the number of vehicles corresponding to a flow results from the multiplication of its rate and travel time. Here, $\timeRoad{i,j}$ is the time required by operator $j$ to serve the demand $\demand{i}$, assumed to be known a priori, and $\timeArc{j}:\pathsSet{\arcsSet[bar]{j}}\to\nonnegativeReals$ is a function mapping each path to its travel time.
Since travel time in urban settings mainly depends on traffic conditions and not on a single fleet's vehicles, we assume that $\timeRoad{i,j}$ and $\timeArc{j}$ are identical for all operators, and call them $\timeRoad{i}$ and $\timeArc{}$. Nonetheless, our framework can readily incorporate different travel times for different \gls{abk:amod} operators.

\subsubsection*{Customers' Reactions and Operators' Decisions}\label{subsec:reations and decisions}
\paragraph{Customer route selection}
Customers select their preferred trip through a navigation app and can choose between an \gls{abk:amod} ride~($\pathRoad{i}$) and a public transport ride combined with walking~($\pathPublicTransit{i}$):
\begin{equation*}
\begin{aligned}
    \pathRoad{i,j} &\coloneqq
    (\arc{}),\quad \arc{}\in\arcsSet[bar]{j},
    \arcSource[bar]{j}(a)=\originVertex{i}, \arcTarget[bar]{j}(a)=\destinationVertex{i},
    \\ 
    \pathPublicTransit{i} &\in 
    \shortestPath{}(\originVertex{i},\destinationVertex{i}),
\end{aligned}
\end{equation*}
where $\originVertex{i}$ and $\destinationVertex{i}$ are the origin and the destination of the $i$\textsuperscript{th} demand $\demand{i}=(\originVertex{i},\destinationVertex{i},\demandRate{i})$, respectively. 
The public transport path $\pathPublicTransit{i}$ results from the shortest path\myast{} on the union of the fully-connected public transport subgraph $\graphDefinition[bar]{3}$ and the non-controlled subgraph $\graph{0}$, computed by weighing each arc with the sum of the monetary value of time ($\valueTime$) and its fare.
Accordingly, the navigation app weighs each arc $\arc{}\in\arcsSet{0}\cup\arcsSet[bar]{3}$ with $\valueTime\cdot\timeArc{}(\arc{})$ if $\arc{}\in\arcsSet{0}$ (i.e., only walking) and $\priceStrategy{3}(\arcSource[bar]{3}(a),\arcTarget[bar]{3}(a))+\valueTime\cdot\timeArc{}(\arc{})$ if $\arc{}\in\arcsSet[bar]{3}$ (i.e., public transport and walking).
The \gls{abk:amod} path $\pathRoad{i,j}$ results from the arc in the fully-connected \gls{abk:amod} operator subgraph $\graphDefinition[bar]{j}$ having the origin and the destination of the demand as the source and sink vertex, respectively.
This definition holds without loss of generality as each vertex in the non-controlled subgraph, on which demands are placed, can be associated to a vertex in an \gls{abk:amod} operator subgraph.
Accordingly, the action space of the customers reads
$\customersActionSpace{i} = \{\left.
    \reactionCurve{}\in\nonnegativeReals^{\pathsSetRequest{\demand{i}}}
    \,\right|\,
    \reactionCurve{}(\path{})=0\,\forall\,\path{}\in\pathsSetRequest{\demand{i}}\setminus\{\pathRoad{i,1},\pathRoad{i,2},\pathPublicTransit{i}\}, 
    \sum_{\path{}\in\pathsSetRequest{\demand{i}}}\reactionCurve{i}(\path{})=\demandRate{i}
\}.$
We consider rational customers who minimize their total cost, given by the sum of fares paid and their monetary value of time ($\valueTime$), so that the cost associated with a reaction curve $\reactionCurve{}\in\customersActionSpace{i}$, given the pricing strategies of the \gls{abk:amod} operators and of the municipality, reads
\begin{equation*}
    \begin{aligned}[t]
    J_i(\reactionCurve,\priceStrategy{1},\priceStrategy{2})
    =
    &\sum_{j=1}^2
    (\priceStrategy{j}(\originVertex{i},\destinationVertex{i}) + \valueTime \cdot \timeRoad{i})\cdot \reactionCurve{}(\pathRoad{i,j}) \\
    &+(\pricePublicTransit{i} + \valueTime \cdot \timePublicTransit{i})\cdot \reactionCurve{}(\pathPublicTransit{i}),
    \end{aligned}
\end{equation*}
with $\timeRoad{i}$ and $\timePublicTransit{i}$ being the travel times for the demand $\demand{i}$ when choosing either an \gls{abk:amod} or a public transport ride, and $\pricePublicTransit{i}\coloneqq\sum_{a\in\pathPublicTransit{i}\cap\arcsSet[bar]{3}}\priceStrategy{3}(\arcSource[bar]{3}(a),\arcTarget[bar]{3}(a))$ being the price related to the path $\pathPublicTransit{i}$. We assume $\timeRoad{i}$ and $\timePublicTransit{i}$ to be distinct, but allow them to be arbitrarily close. Then, with \cref{definition:customer equilibrium}, the reaction curve $\reactionCurve{i}$ of a homogeneous demand is optimal if it minimizes $J_i(\reactionCurve,\priceStrategy{1},\priceStrategy{2})$.
In reality, individual customers have different values of time such that demands become heterogeneous. We consider such heterogeneity by defining $\valueTime\sim\mathbb{P}$ to be dependent on some probability distribution~$\mathbb{P}$. Moreover, to account for variations in \gls{abk:amod} operators' overall prices, we introduce a zero-mean uncertainty. This uncertainty accounts for customers valuing the service of a given operator more than the other and for stochasticity in an \gls{abk:amod} operator service. Accordingly,
\begin{equation*}
    \label{eq:heterogeneous reaction curve}
    \begin{aligned}[t]
    \reactionCurve[]{i}
    \!=\!
    \mathbb{E}\Bigg[
    	\argmin{\reactionCurve{}\in\customersActionSpace{i}}\!\!
    	&\sum_{j=1}^{2}
    	\!\left(\priceStrategy{j}(\originVertex{i},\destinationVertex{i}) + \valueTime\! \cdot \!\timeRoad{i} + \varepsilon_j \right)\cdot \reactionCurve{}(\pathRoad{i,j})
    	\\
    	&+
    	\left(\pricePublicTransit{i} + \valueTime \!\cdot \!\timePublicTransit{i}\right)\cdot \reactionCurve{}(\pathPublicTransit{i})
    	\Bigg],
    \end{aligned}	
\end{equation*}
where the expected value is with respect to $\valueTime$ and $\varepsilon_j$.
In this work, we proceed with uniformly distributed values of time. We also assume that $\varepsilon_1=0$ and that $\varepsilon_2$ is uniformly distributed and zero-mean, and choose the parameter of the distribution such that it results in realizations that are one order of magnitude smaller than realizations of the value of time. This second probability distribution allows us to distinguish the two \gls{abk:amod} operators; else, the reaction $\reactionCurve{i}$ would be ill-defined. The resulting reaction curves can also be seen as a special case of general discrete choice models (e.g., see \cite{Train2009,Cantarella1997}).

\paragraph{\gls{abk:amod} operator profit maximization}
The \gls{abk:amod} operators aim to maximize their profit. Given rational reactions of the customers $\reactionCurve{i}\in\customersEquilibria{i}(\priceStrategy{1},\priceStrategy{2},\priceStrategy{3})$ the profit of \gls{abk:amod} operator $j$ equals the difference of revenue (Eq. \eqref{eq:operator revenue}) and costs (Eq. \eqref{eq:operator cost}).

\subsubsection*{Problem Analysis}\label{subsec:analysis}
In this section, we prove existence of an equilibrium and give a tractability result. We start by providing an expression for the reaction curves.
\begin{lemma}[Rational reaction]
\label{lemma:reaction curve}
Rational reaction curves are characterized by the pointwise minimum of affine functions:
\begin{equation}\label{eq:reaction curve}
\reactionCurve{i}(\pathRoad{i,j})
=
\max\{\min\{h_{i,j}(\priceStrategy{j}(\originVertex{i},\destinationVertex{i})), \demandRate{i}\}, 0\},
\end{equation}
where 
\begin{equation*}
\begin{aligned}
h_{i,j}(\priceStrategy{j}(\originVertex{i},\destinationVertex{i}))	
&\coloneqq\min\{
\begin{aligned}[t]
&m_{i,1}\priceStrategy{j}(\originVertex{i},\destinationVertex{i})+q_{i,1}, \\
&m_{i,2}\priceStrategy{j}(\originVertex{i},\destinationVertex{i})+q_{i,2}(\priceStrategy{-j}(\originVertex{i},\destinationVertex{i})), \\
&m_{i,3}\priceStrategy{j}(\originVertex{i},\destinationVertex{i})+q_{i,3}(\priceStrategy{-j}(\originVertex{i},\destinationVertex{i}))\},
\end{aligned}
\end{aligned}
\end{equation*}
and $m_{i,1},m_{i,2},m_{i,3}<0$. 
Define $p_{i,j}^\mathrm{min}$ by $h_{i,j}(p_{i,j}^\mathrm{min})=\demandRate{i}$ and $p_{i,j}^\mathrm{max}$ by $h_{i,j}(p_{i,j}^\mathrm{max})=0$, both well-defined and uniformly bounded.
Then, $h_{i,j}$ defines a homeomorphism between $\priceStrategy{j}(\originVertex{i},\destinationVertex{i})\in[p_{i,j}^\mathrm{min},p_{i,j}^\mathrm{max}]$ and $\phi(\pathRoad{i,j})\in[0,\demandRate{i}]$. Moreover, $q_{i,2}$ and $q_{i,3}$ change linearly with $\priceStrategy{-j}(\originVertex{i},\destinationVertex{i})$, such that $h_{i,j}$ changes continuously with the price of the adversary \gls{abk:amod}.
\end{lemma}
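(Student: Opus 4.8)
The plan is to evaluate the expected optimal reaction curve in closed form and then read off the asserted structure. First I would note that, for each fixed realization of the value of time $\valueTime$ and of the operator-specific noise $\varepsilon_j$ (with $\varepsilon_1=0$), the cost $\demandCost{i}(\cdot,\priceStrategy{1},\priceStrategy{2})$ is a linear functional on the customers' action space $\customersActionSpace{i}$, which is just the simplex $\{\reactionCurve{}\in\nonnegativeReals^{\{\pathRoad{i,1},\pathRoad{i,2},\pathPublicTransit{i}\}} : \reactionCurve{}(\pathRoad{i,1})+\reactionCurve{}(\pathRoad{i,2})+\reactionCurve{}(\pathPublicTransit{i})=\demandRate{i}\}$ (all other coordinates are forced to zero). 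Hence its minimizer sits at the vertex that assigns the whole demand $\demandRate{i}$ to the path with the strictly smallest per-unit cost. I would then check that the tie set is null: a tie between the two \gls{abk:amod} paths forces $\varepsilon_2$ to equal the deterministic price difference, and a tie between an \gls{abk:amod} path and the public-transport path forces $\valueTime(\timePublicTransit{i}-\timeRoad{i})$ to equal a constant; since $\timeRoad{i}\neq\timePublicTransit{i}$ and $\valueTime,\varepsilon_2$ admit densities, both events have probability zero. Therefore the argmin is almost surely a single vertex and, taking expectations, $\reactionCurve{i}(\pathRoad{i,j})=\demandRate{i}\cdot\Pr[\,\pathRoad{i,j}\text{ is the strictly cheapest option}\,]$.

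Next I would make the ``cheapest option'' event explicit. For $j=1$: since the term $\valueTime\timeRoad{i}$ is common to both \gls{abk:amod} alternatives, $\pathRoad{i,1}$ beats $\pathRoad{i,2}$ iff $\varepsilon_2\ge\priceStrategy{1}(\originVertex{i},\destinationVertex{i})-\priceStrategy{2}(\originVertex{i},\destinationVertex{i})$, and $\pathRoad{i,1}$ beats $\pathPublicTransit{i}$ iff $\valueTime(\timePublicTransit{i}-\timeRoad{i})\ge\priceStrategy{1}(\originVertex{i},\destinationVertex{i})-\pricePublicTransit{i}$ (the case $j=2$ is handled analogously). Each inequality trims the compact support of $(\valueTime,\varepsilon_2)$ by a half-space whose offset is affine in $\priceStrategy{j}(\originVertex{i},\destinationVertex{i})$, with $\pricePublicTransit{i}$ entering the public-transport inequality as a fixed constant and the competitor's price $\priceStrategy{-j}(\originVertex{i},\destinationVertex{i})$ entering the \gls{abk:amod} inequality linearly. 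Evaluating the probability of the intersection against the uniform densities, and distinguishing the sign of $\timePublicTransit{i}-\timeRoad{i}$ (which decides whether the public-transport constraint is a lower or an upper bound on $\valueTime$) and the ordering of the resulting breakpoints, I would obtain a nonnegative, nonincreasing function of $\priceStrategy{j}(\originVertex{i},\destinationVertex{i})$ that equals $\demandRate{i}$ for small prices, vanishes for large prices, and in between agrees with a pointwise minimum of affine functions with strictly negative slopes: one piece is governed by the value-of-time spread together with the travel-time gap $\timePublicTransit{i}-\timeRoad{i}$, with intercept $q_{i,1}$ independent of the competitor; the remaining pieces come from the spread of $\varepsilon_2$ and carry the affine dependence of $q_{i,2},q_{i,3}$ on $\priceStrategy{-j}(\originVertex{i},\destinationVertex{i})$. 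This is exactly $f_{ij}$ together with the clamp in \eqref{eq:reaction curve}.

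It then remains to record the quantitative claims. The prices $p_{i,j}^{\mathrm{min}},p_{i,j}^{\mathrm{max}}$ defined by $f_{ij}(p_{i,j}^{\mathrm{min}})=\demandRate{i}$ and $f_{ij}(p_{i,j}^{\mathrm{max}})=0$ are well defined and finite because $f_{ij}$ is continuous and, having only negative slopes, strictly decreasing on the range where its value lies in $[0,\demandRate{i}]$; they are uniformly bounded because the slopes $m_{i,k}$ are pinned down by the (compactly supported) distributions and the travel times, hence bounded away from $0$, while the intercepts range over a bounded set whenever the prices do. Strict monotonicity plus continuity give the stated homeomorphism between $\priceStrategy{j}(\originVertex{i},\destinationVertex{i})\in[p_{i,j}^{\mathrm{min}},p_{i,j}^{\mathrm{max}}]$ and $\reactionCurve{i}(\pathRoad{i,j})\in[0,\demandRate{i}]$. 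Finally, $f_{ij}$ is the minimum of finitely many affine functions whose coefficients depend affinely, in particular continuously, on $\priceStrategy{-j}(\originVertex{i},\destinationVertex{i})$, so $f_{ij}$ and hence $\reactionCurve{i}(\pathRoad{i,j})$ vary continuously with the adversary's price.

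The step I expect to be genuinely delicate is the middle one: organizing the case distinction on $\operatorname{sign}(\timePublicTransit{i}-\timeRoad{i})$ and on the relative location of the breakpoints of the two clamped-affine factors so that the intersection probability collapses to exactly a clamped pointwise minimum of three affine functions, and verifying that the competitor's price enters only through $q_{i,2}$ and $q_{i,3}$ and only linearly. The vertex-optimality argument, the null tie set, and the uniform bounds are routine; the bookkeeping that produces the exact affine form is where the care is required.
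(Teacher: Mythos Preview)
Your approach is essentially the same as the paper's: compute the expected argmin over the simplex directly, observe that with uniform $\valueTime$ and uniform zero-mean $\varepsilon_2$ the resulting probability is a clamped pointwise minimum of three affine functions with negative slopes, and read off the remaining properties. Your treatment is in fact more explicit than the paper's on the vertex-optimality and null-tie-set steps, which the paper absorbs into ``using the definitions of argmax and of minimum and the computation of the expected value.''

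One point needs tightening. Your argument for the uniform bound on $p_{i,j}^{\mathrm{min}},p_{i,j}^{\mathrm{max}}$ says the intercepts are bounded ``whenever the prices do,'' but the adversary's price $\priceStrategy{-j}(\originVertex{i},\destinationVertex{i})$ lives in $\nonnegativeReals$ and is \emph{not} bounded a priori, so this is circular. The paper avoids this by exploiting precisely the feature you already identified: the first affine piece $m_{i,1}\priceStrategy{j}+q_{i,1}$ has coefficients independent of $\priceStrategy{-j}$, and since $f_{ij}$ is the minimum of the three pieces, $f_{ij}(\priceStrategy{j})\le m_{i,1}\priceStrategy{j}+q_{i,1}$. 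Hence $f_{ij}(p_{i,j}^{\mathrm{max}})=0$ forces $p_{i,j}^{\mathrm{max}}\le -q_{i,1}/m_{i,1}$, which is finite and uniform over $i$ and over the adversary's price. This uniform bound is what later lets one replace $\priceStrategiesSet{j}$ by a compact set in the existence proof, so it is worth getting right.
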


\cref{lemma:reaction curve} gives three major insights.
First, the existence of a homeomorphism between prices and reaction curves allows us to use reaction curves as optimization variables rather than the pricing strategy itself.
Second, it shows that reaction curves are concave and therefore amenable to convex reformulations. 
Third, it tells us that reaction curves are continuous in the pricing strategy of the adversary \gls{abk:amod} operator, which will be one of the key ingredients of the existence result at the end of this section.
As a result, we obtain a tractable reformulation for the computation of best responses. 

\begin{proposition}[Best response]
\label{proposition:best response}
	The best response of each \gls{abk:amod} operator results from a tractable convex second-order conic program. Specifically, we obtain best response prices for origin-destination pairs subject to demand from $h_{i,j}^{-1}(x_i)$, whereby $x_i$ results from
\begin{subequations}\label{eq:best response socp}
\begin{align}
\max_{\substack{x_{i}\in [0,\demandRate{i}], \\ f_0\in\mathbb{R}^{\cardinality{\arcsSet{j}}}}}
&\sum_{i=1}^{\demandNumber} r_i - \sum_{a\in\arcsSet{j}}\operationCostFunction{j}(\arc{})\left[\sum_{i=1}^{\demandNumber} f_{i,j}^\ast x_i + f_{0}\right]_a\label{eq:objective}\\
\text{subject to }
&0\leq r_i \leq x_{i}\cdot(x_i-q_{i,k})/m_{i,k}\:\forall k\in\{1,2,3\}\label{eq:constraints revenue}\\
&B^\top\left(\sum_{i=1}^{\demandNumber} f^\ast_{i,j} x_{i}+f_0\right)=0 \label{eq:constraints veh conservation} \\
&\sum_{i=1}^{\demandNumber}\timeRoad{i,j}x_{i} + \sum_{a\in\arcsSet{j}}\timeFunction{j}(a)[f_0]_a\leq \maxNumberVehicles{j},\label{eq:constraints fleet}
\end{align}
\end{subequations}
where $f^\ast_{i,j}$ corresponds to the shortest path from the demand's origin to the destination, $[f]_a$ denotes the entry corresponding to arc $a\in\arcsSet{j}$, and $B$ is the incidence matrix of the \gls{abk:amod} operator's graph. 
\end{proposition}
In words, \eqref{eq:objective} is the profit, consisting of the revenue, captured by the auxiliary variable $r_i$, and the costs for serving the customers and for rebalancing the fleet. Constraint~\eqref{eq:constraints revenue} ensures consistency between the revenue and $x_i$, via the demand curve~\eqref{eq:reaction curve}. Constraint~\eqref{eq:constraints veh conservation} ensures that flows are balanced; i.e., the number of vehicles entering and exiting a node coincide. Finally, constraint~\eqref{eq:constraints fleet} takes care of the fleet size. 
The result of \cref{lemma:reaction curve} is crucial to obtain a tractable best response: straightforwardly using prices as optimization variables would lead to a non-convex optimization problem. Specifically, the vehicle conservation constraint \eqref{eq:constraints veh conservation} would in this case be non-convex, as $x_i$ is not an affine function of the pricing strategy.
With \cref{lemma:reaction curve,proposition:best response}, we can establish existence of an equilibrium.

\begin{theorem}[Existence of an equilibrium]\label{theorem:existence}
The game admits an equilibrium.
\end{theorem}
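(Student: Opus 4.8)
The plan is to phrase equilibrium existence as a fixed point of the joint best-response correspondence and invoke a Kakutani-type theorem, with \cref{lemma:reaction curve,proposition:best response} supplying the three standard ingredients: compact convex strategy sets, continuous payoffs, and convex-valued best responses.

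The first and most important move is a change of variables. Using prices directly is hopeless for convexity --- as already noted after \cref{proposition:best response}, the operator's profit is not even quasiconcave in her own price --- so I would let each \gls{abk:amod} operator $j$ choose the reaction curve she induces, i.e.\ the vector $x^{(j)}=(x_i^{(j)})_{i=1}^{\demandNumber}$ with $x_i^{(j)}=\reactionCurve{i}(\pathRoad{i,j})$. By the homeomorphism of \cref{lemma:reaction curve}, prices and induced shares are in bijective, bicontinuous correspondence; and since the relevant prices $[p_{i,j}^{\mathrm{min}},p_{i,j}^{\mathrm{max}}]$ are uniformly bounded --- prices outside this window are weakly dominated by, or payoff-equivalent to, the nearer endpoint, and prices on origin-destination pairs with no demand are payoff-irrelevant --- the reparametrized strategy set is the nonempty, compact, convex box $\prod_{i=1}^{\demandNumber}[0,\demandRate{i}]$, carrying the coupling constraints $x_i^{(1)}+x_i^{(2)}\le\demandRate{i}$ inherited from the fact that a demand's shares sum to $\demandRate{i}$. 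In these variables \cref{proposition:best response} is precisely the statement that operator $j$'s best-response problem is the second-order conic program~\eqref{eq:best response socp}, i.e.\ a linear objective over a convex feasible region; hence her best-response correspondence is nonempty and convex-valued (a face of a convex set).

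Second I would verify continuity. The profit splits into the revenue term $\sum_i r_i$, a pointwise minimum of the concave quadratics $x_i(x_i-q_{i,k})/m_{i,k}$, and the routing cost $\sum_a \operationCostFunction{j}(\arc{})[\sum_i f_{i,j}^\ast x_i+f_0]_a$, which --- after the inner minimization over $f_0$ --- is the optimal value of a parametric min-cost-flow program, finite for every $x^{(j)}$ in the box because the zero flow is feasible and costs are nonnegative. The only quantities through which the adversary enters are the intercepts $q_{i,2},q_{i,3}$, and \cref{lemma:reaction curve} asserts that these depend continuously (indeed linearly) on the adversary's price, hence continuously on her reaction curve through the homeomorphism. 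Berge's maximum theorem then gives upper hemicontinuity of each best-response correspondence and joint continuity of the equilibrium payoffs, so Kakutani's fixed-point theorem applied to the product correspondence yields a profile of reaction curves that is a mutual best response; mapping it back through the homeomorphism of \cref{lemma:reaction curve} produces prices $\equilibriumAmod[star]$ satisfying \cref{definition:game equilibrium}.

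I expect the main obstacle to be making this change of variables airtight, specifically the coupling between the two operators. The convexity and continuity arguments above are per-operator and treat the opponent's choice as fixed, but one operator's price affects both operators' induced shares, so the map from reaction curves back to prices, and the feasible region of~\eqref{eq:best response socp} through $q_{i,2},q_{i,3}$, move with the opponent's strategy; one must check that this dependence is continuous with convex, compact, nonempty values, since a product of reaction boxes need not be self-mapped by the best responses (each operator can profitably chase a large share only if the other does not). The cleanest rigorous framing treats this as a generalized game with an opponent-dependent constraint correspondence and appeals to Debreu's social-equilibrium theorem rather than to plain Kakutani; a secondary, routine point is confirming that dominated and payoff-irrelevant prices may be discarded without altering the equilibrium set, so that restricting to the compact reparametrized strategy space is without loss of generality.
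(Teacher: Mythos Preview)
Your overall strategy---Kakutani via \cref{lemma:reaction curve,proposition:best response}---is exactly the paper's, and the ingredients you list (compactification by the uniform bound on $p_{i,j}^{\max}$, convexity of the best-response set from the SOCP, continuity in the adversary through $q_{i,2},q_{i,3}$) are precisely the ones the paper uses in \cref{lemma:existence equilibria}. The one substantive difference is the choice of strategy space. You push the share reparametrization all the way into the existence argument, which, as you correctly diagnose, creates a coupled change of variables: the map $(x^{(1)},x^{(2)})\mapsto(\priceStrategy{1},\priceStrategy{2})$ is only implicitly defined because each $f_{ij}^{-1}$ depends on the \emph{opponent's price}, not her share, so expressing payoffs purely in shares requires solving a fixed point inside the reparametrization. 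Your suggestion to handle this via a generalized game and Debreu's theorem is plausible but heavier than needed.

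The paper sidesteps the issue by keeping the existence argument in \emph{price} space. It simply truncates $\priceStrategiesSet{j}$ to $[0,M]^{|\verticesSet{j}|^2}$ using the uniform bound on $p_{i,j}^{\max}$ from \cref{lemma:reaction curve}, observing that any equilibrium of the truncated game is an equilibrium of the original one (prices above $p_{i,j}^{\max}$ are payoff-equivalent to $p_{i,j}^{\max}$). In price space the strategy sets are independent of the opponent, so plain Kakutani applies. Convexity of $\bestResponse{j}(\priceStrategy{-j})$ in price space follows because the SOCP in \cref{proposition:best response} is strictly concave in each $x_i$, so $x_i^\ast$ is unique, and the corresponding best-response prices form a Cartesian product of singletons $\{f_{ij}^{-1}(x_i^\ast)\}$, intervals $[p_{i,j}^{\max},M]$ (when $x_i^\ast=0$), and full intervals $[0,M]$ (for origin-destination pairs without demand). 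Upper hemicontinuity then comes from continuity of $x_i^\ast$ and of the interval endpoints in $\priceStrategy{-j}$. The share variables appear only inside the best-response computation, never as the strategic variable, which is what keeps the argument short.
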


As it is typically the case in game theory, the equilibrium is not unique. In practice, one may iteratively run best response to compute one of the equilibria. It is well-known that best response might not converge, yet it converges to an equilibrium if it converges.

\subsubsection*{Single Operator Case}\label{subsec:single operator}
In the degenerate case of a single \gls{abk:amod} operator, the game reduces to a quadratic optimization problem. We summarize the result in the following theorem.

\begin{theorem}[Equilibria in the single \gls{abk:amod} operator case]
	\label{theorem:equilibrium linear}
	The game admits an equilibrium. 
	Further, consider $(\priceStrategy[star]{1},\priceStrategy[star]{3})\in\priceStrategiesSet{1}\times\priceStrategiesSet{3}$ with 
	\begin{enumerate}[label=(\textit{\roman*})]
		\item $\priceStrategy[star]{1}(o,d)=0$ if there is no demand from $o$ to $d$, i.e., $(o,d,\demandRate{})\notin\demandsSet{}$ for all $\demandRate{}\in\nonnegativeReals$,
		\item $\priceStrategy[star]{1}(o,d)=[\rho^\star]_i$ if there is a demand from $o$ to $d$, i.e., $\demand{i}=(o,d,\demandRate{})$ for some $\demandRate{}\in\nonnegativeReals$, whereby $\rho^\star\in\nonnegativeReals^{\demandNumber}$ results from solving a convex quadratic program, 
		\item $\priceStrategy[star]{3}=\priceStrategy{3}$.
	\end{enumerate}
    Then, $(\priceStrategy[star]{1},\priceStrategy[star]{3})$ is an equilibrium. Moreover, all equilibria are equivalent in the \gls{abk:amod} operator's and the municipality's profit as well as in the demands' optimal reaction curves.
\end{theorem}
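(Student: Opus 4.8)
The plan is to use the fact that, in the single-operator setting, the municipality's strategy set $\priceStrategiesSet{3}=\{\priceStrategy{3}\}$ is a singleton, so the only strategic player is the \gls{abk:amod} operator (Operator~1); consequently $(\priceStrategy[star]{1},\priceStrategy{3})$ is an equilibrium if and only if $\priceStrategy[star]{1}$ maximizes Operator~1's profit given $\priceStrategy{3}$ and the induced optimal customer reactions. All three assertions of the theorem then reduce to statements about this single best-response problem: it admits a maximizer (existence), it is a convex quadratic program whose optimizer has the stated form (characterization), and its optimizer is essentially unique (equivalence). I would establish these in that order.

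\emph{The best-response problem is a convex QP.} Specializing \cref{lemma:reaction curve} by removing the competing \gls{abk:amod} operator, the customer of demand $\demand{i}$ only compares an \gls{abk:amod} ride with a public-transport ride; since $\valueTime$ is uniformly distributed, the optimal \gls{abk:amod} share is a \emph{single} clipped affine function — namely $m_i\,\priceStrategy{1}(\originVertex{i},\destinationVertex{i})+q_i$ truncated to $[0,\demandRate{i}]$, with $m_i<0$ — and $f_i$ is a homeomorphism between the price in $[p_i^{\mathrm{min}},p_i^{\mathrm{max}}]$ and the share in $[0,\demandRate{i}]$. Restricting the price on each demand-carrying pair to $[p_i^{\mathrm{min}},p_i^{\mathrm{max}}]$ (equivalently, imposing the linear constraint $m_i\,\priceStrategy{1}(\originVertex{i},\destinationVertex{i})+q_i\in[0,\demandRate{i}]$) is without loss of optimality: clamping any price into this interval leaves the induced share, hence the feasible rebalancing flows and the operating cost, unchanged while it never decreases revenue; on pairs carrying no demand the price enters neither revenue nor cost and may be set to $0$, which is condition~(\textit{i}). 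On this region the revenue from demand $\demand{i}$ equals $\priceStrategy{1}(\originVertex{i},\destinationVertex{i})\cdot\bigl(m_i\,\priceStrategy{1}(\originVertex{i},\destinationVertex{i})+q_i\bigr)$, a concave quadratic. Following the reformulation of \cref{proposition:best response}, I would then pull the inner cost-minimization out by jointly optimizing over prices and flows: serving demand $\demand{i}$ forces a flow of rate $x_i=m_i\,\priceStrategy{1}(\originVertex{i},\destinationVertex{i})+q_i$ along the cost-shortest path $f^\ast_{i,1}$, rebalancing adds $f_0\geq 0$, and both vehicle conservation $B^\top(\sum_i f^\ast_{i,1}x_i+f_0)=0$ and the fleet-size bound $\sum_i\timeRoad{i}x_i+\sum_{a}t(a)[f_0]_a\leq\maxNumberVehicles{1}$ are linear in the prices and $f_0$. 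The resulting program maximizes a concave quadratic objective over a polyhedron, i.e., it is a convex QP; its optimal price vector on the demand-carrying pairs is the $\rho^\star$ of condition~(\textit{ii}), and $\priceStrategy[star]{3}=\priceStrategy{3}$ is condition~(\textit{iii}).

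\emph{Existence and equivalence.} The feasible set of this QP is nonempty (take $x=0$, $f_0=0$), convex, and compact (prices lie in the uniformly bounded boxes $[p_i^{\mathrm{min}},p_i^{\mathrm{max}}]$ of \cref{lemma:reaction curve}, and $f_0$ is bounded through the fleet-size constraint, any residual cost-free conservation-feasible direction carrying zero objective gradient); since the objective is continuous, a maximizer exists, which yields the equilibrium of conditions~(\textit{i})--(\textit{iii}) and, in particular, existence of an equilibrium. For equivalence, the objective's Hessian in the price variables is diagonal with entries $2m_i<0$, and the cost term is affine in $(\priceStrategy{1},f_0)$ on the feasible region, so the objective is strictly concave in $\priceStrategy{1}$; hence the price component of the maximizer — and therefore the \gls{abk:amod} prices on every origin-destination pair carrying demand — is unique across all best responses. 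Every equilibrium has $\priceStrategy[star]{3}=\priceStrategy{3}$ and an $\priceStrategy[star]{1}$ that is such a maximizer, so all equilibria share these prices, hence (through the explicit reaction-curve formula of \cref{lemma:reaction curve}) the same optimal reaction curves, hence the same public-transport ridership and municipality profit, and the same Operator-1 profit, equal to the QP's optimal value.

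\emph{Main obstacle.} The crux is the QP reduction: one must verify that, once the competing \gls{abk:amod} operator is removed, the reaction curve collapses to a single affine piece (so that the second-order-cone constraints of \cref{proposition:best response} become a single quadratic term in the objective and the problem is a genuine QP rather than a quadratically constrained program), and then carry out the joint price-flow reformulation that moves the operating-cost minimization into the outer maximization without changing the optimal value. Once this is in place, existence (Weierstrass) and equivalence (strict concavity in the price variables) follow routinely.
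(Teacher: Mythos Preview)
Your proposal is correct and follows essentially the same route as the paper: collapsing the reaction curve to its single affine piece, recognizing the best response as a convex quadratic program, and invoking strict concavity for uniqueness of reaction curves and profits; the only cosmetic differences are that the paper parametrizes by the shares $x_i$ rather than prices and obtains the single-operator case as the limit $\priceStrategy{-j}\to\infty$ in \cref{proposition:best response} (which drops two of the three conic constraints and leaves the quadratic objective). One small imprecision worth tightening: your claim that ``the \gls{abk:amod} prices on every origin-destination pair carrying demand [are] unique across all best responses'' holds only within the restricted interval $[p_i^{\min},p_i^{\max}]$---an equilibrium may set any price at or above $p_i^{\max}$ when the optimal share is zero---but this does not affect your (correct) final conclusion about reaction curves and profits, which is all the theorem asserts.
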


\section{Case Study: Berlin, Germany}\label{sec:case study}
Our case study bases on a real-world setting for the city center of Berlin, Germany.
We derive the road network from OpenStreetMap data \cite{HaklayWeber2008} and infer the public transport network, consisting of U-Bahn, S-Bahn, tram, and bus lines, together with its schedules from GTFS data \cite{BerlinOpenData}.
We use demand data from an existing case study \cite{HorniAxhausen2016,Ziemke2017}. Since the authors considered a representative ten percent sample of the population, we scale the rate of each demand by a factor of 10 so that our dataset consists of \SI{129560}{} travel requests and a total demand rate of \SI{18} customers per second for a two-hour time horizon. The average length of a trip is \SI{4.9}{\kilo\meter}.

We compute the \gls{abk:amod} service travel time $\timeRoad{i}$ for demand $\demand{i}$ as the sum of the net road time\footnote{As we use a distance-based cost, the shortest path $f_{i,j}^\ast$ is identical for all \gls{abk:amod} operators and therefore $f_i^\ast$ is well-defined.} $\sum_{\arc{}\in\projectionPathFlow{\flow[star]{i}}}\timeArc{}(\arc{})$
and an average waiting time of $\SI{3}{\minute}$, according to today's waiting time for ride-hailing companies \cite{Mosendz2014}.
To account for congestion effects, we increase each arc's nominal travel time, given by the length over the free-flow velocity, by 56\%, corresponding to the evening peak congestion level of a workday in Berlin \cite{TomTom}.
We consider an average walking velocity of \SI{1.4}{\meter\per\second}. We compute public transport travel times based on the public transport schedules and consider a waiting time at a station of half the average time interval between two trips (\SI{5}{\minute} for U-Bahn and S-Bahn, \SI{7}{\minute} for trams, and \SI{10}{\minute} for buses) and \SI{60}{\second} walking-to-station and station-to-walking time for the U-Bahn and S-Bahn.

To calculate costs for \gls{abk:amod} operators, we consider autonomous electrified taxis with distance-based operation cost of $\operationCostDistance{j}=\SI{0.34}{\usd\per\kilo\meter}$ \cite{Boesch2017}.
This cost accounts for both variable costs, in terms of electrical energy, depreciation, maintenance, and fixed costs, in terms of acquisition and insurance. 
We impose a maximum of \SI{8373}{} licenses for autonomous vehicles, reflecting the \SI{8373}{} taxi concessions released by the Berlin municipality in 2018 \cite{Neumann2019}. We consider three cases: \emph{(i)} licenses are equally split between two identical \gls{abk:amod} operators, (\emph{ii}) licenses are split between two non-identical \gls{abk:amod} operators such that one fleet is \SI{50}{\percent} larger than the other, and (\emph{iii}) all licenses are assigned to a single \gls{abk:amod} operator. 
Consistently with the fares in Berlin, 
we set the price of a public transport ride to \SI{2.80}{\eur}, corresponding to \SI{3.12}{\usd}.
In line with \cite{Endorf2016} and \cite{Wadud2017}, we assume the customers' value of time to be uniformly distributed between \SI{10}{\usd\per\hour} and \SI{17}{\usd\per\hour}.

Based on this case study, we study different settings to investigate the interplay between \gls{abk:amod} systems and the municipality.
Besides our basic setting (S1), we analyze potential \gls{abk:amod} operator strategies (S2--S3) to influence the equilibrium of the system. 
Settings (S4--S5) analyze how a municipality may steer the system equilibrium using regulations that influence the \gls{abk:amod} operator decisions.
Specifically:

\textit{Basic setting:} We analyze the basic setting of our case study for both two identical \gls{abk:amod} operators, two non-identical \gls{abk:amod} operators, and a single \gls{abk:amod} operator, and compare them to the setting without \gls{abk:amod} operators, whereby customers can only walk or use public transport.

\textit{Fleet size:} We investigate the impact of the \gls{abk:amod} fleet size. To this end, we perform a parametric study, varying the allowed fleet size from \SI{1000}{} to \SI{23000}{} in intervals of \SI{2000}{} vehicles. For identical \gls{abk:amod} operators we split the licenses equally, for non-identical \gls{abk:amod} operators we split them such that one fleet is \SI{50}{\percent} larger than the other. 

\textit{Public transport price:} We perform a parametric study to quantify the impact of the public transport price. Specifically, we vary the fares of public transportation between \SI{0}{\usd} and \SI{6}{\usd} per ride, with a step width of \SI{0.5}{\usd}.

\textit{\gls{abk:amod} service tax:} In line with recent discussions on ride-hailing service taxes, we analyze the impact of an additional percentage tax on the revenue of each trip served by an \gls{abk:amod} operator \cite{Welle2018}. 
We consider taxes ranging from \SI{0}{\percent} to \SI{100}{\percent}, with a step width of \SI{10}{\percent}.

For all settings with two \gls{abk:amod} operators, we find a game equilibrium by iteratively computing best responses.
For our base case equilibrium, the computational time on commodity hardware amounts to 8 minutes (with the computation of an individual best response taking roughly 1 second).
For the single \gls{abk:amod} operator settings, we find an equilibrium via the quadratic optimization problem resulting from \cref{theorem:equilibrium linear}.

\FloatBarrier
\section{Results}\label{sec:results}
\paragraph{Base case equilibrium}
\cref{fig:e1} details the modal share and the profit per trip distribution for the \gls{abk:amod} operator(s) for cases of two identical \gls{abk:amod} operators (\cref{fig:e1 competitive identical}), two non-identical \gls{abk:amod} operators (\cref{fig:e1 competitive nonidentical}), and a single \gls{abk:amod} operator (\cref{fig:e1 single}). In all cases, the usage of different transportation modes splits nearly equally between \gls{abk:amod} and public transport, while only a small share of customers opts to complete a trip solely by walking. While \cref{fig:e1} suggests that trips split rather equally between public transport and \gls{abk:amod} services from a macroscopic perspective, the opposite is the case when analyzing the solution from a microscopic perspective. Often, either the \gls{abk:amod} service or the public transport provides a cost-optimal solution, independently of the customer's value of time, so that several origin-destination pairs are served completely by public transport, one of the \gls{abk:amod} operator, or both \gls{abk:amod} operators (cf. \cref{tab:local cannibalization}). For instance, in the single \gls{abk:amod} case, \SI{20.9}{\percent} of all trips are served solely by \gls{abk:amod}, \SI{38.3}{\percent} of all trips are served solely by public transport, and \SI{4.3}{\percent} of all trips are completed solely by walking.

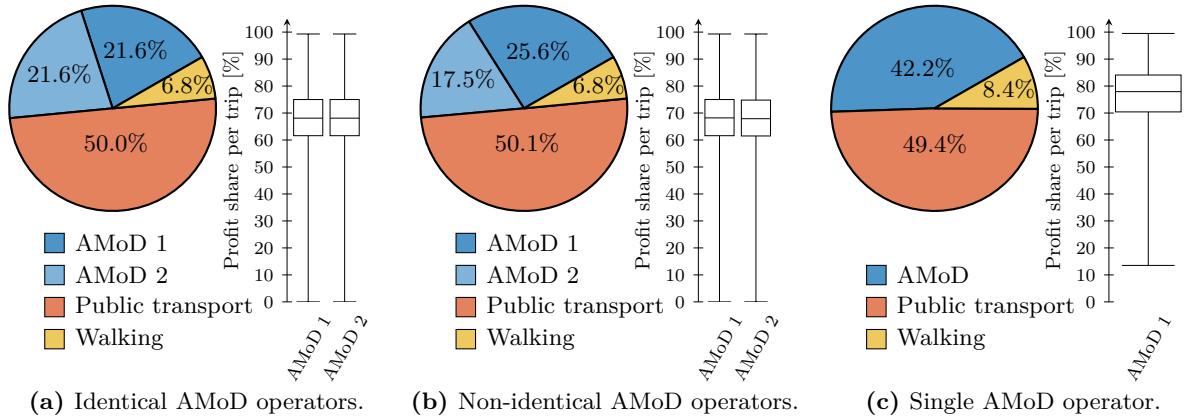
\begin{figure}[!b]
    \centering
    \captionsetup[subfigure]{skip=-2pt}
    \def\radiusPlot{1.6}
	\begin{subfigure}[t]{0.325\columnwidth}
		\centering 	
		\def\amodfirstShare{21.6}
		\def\amodsecondShare{21.6}
		\def\ptShare{50.0}
		\def\walkingShare{6.8}
		\def\lowerQuantileFirst{0}
		\def\midlowerQuantileFirst{61.6}
		\def\midQuantileFirst{68.1}
		\def\midhigherQuantileFirst{75.0}
		\def\higherQuantileFirst{99.3}
		\def\lowerQuantileSecond{0}
		\def\midlowerQuantileSecond{61.6}
		\def\midQuantileSecond{68.1}
		\def\midhigherQuantileSecond{75.0}
		\def\higherQuantileSecond{99.3}
		\tikzstyle{legendPie} = [rectangle]

\begin{tikzpicture}[scale=0.5,font=\tiny]
    \pie[color={amod,amod1,publicTransit,walking},
         radius=\radiusPlot,
         rotate=30]{\amodfirstShare/, 
         			\amodsecondShare/,
                    \ptShare/, 
                    \walkingShare/}  
    
    \begin{scope}[shift={(-1.4,-\radiusPlot-0.5)}]
        \node[legendPie,fill=amod,draw=black,
              label={[font=\scriptsize]right:\acrshort{abk:amod} 1}] at (0.1,-0.5) {};
        \node[legendPie,fill=amod1,draw=black,
              label={[font=\scriptsize]right:\acrshort{abk:amod} 2}] at (0.1,-1.0) {};
        \node[legendPie,fill=publicTransit,draw=black,
              label={[font=\scriptsize]right:Public transport}] at (0.1,-1.5) {};
        \node[legendPie,fill=walking,draw=black,
              label={[font=\scriptsize]right:Walking}] at (0.1,-2.0) {};
    
    \begin{axis}[at={(4.1cm,3.5cm)},
                 anchor={north west},
                 width=2.8cm,height=6.0cm,
                 ymin=0,ymax=105,
                 ytick={0,10,20,30,40,50,60,70,80,90,100},
                 axis y line=left,
                 ylabel={\normalsize Profit share per trip [\si{\percent}]},
                 ylabel shift=-7pt,
                 xtick={1,2},
                 xticklabels={\acrshort{abk:amod} 1,\acrshort{abk:amod} 2},
                 xticklabel style={align=center,rotate=60,font=\normalsize},
                 x axis line style={opacity=0},
                 xtick style={draw=none},
                 ]
        \addplot [boxplot prepared={
                  draw direction=y,
                  lower whisker=\lowerQuantileFirst,
                  lower quartile=\midlowerQuantileFirst,
                  median=\midQuantileFirst,
                  upper quartile=\midhigherQuantileFirst,
                  upper whisker=\higherQuantileFirst},
                  color=black] coordinates {};
        \addplot [boxplot prepared={
                  draw direction=y,
                  lower whisker=\lowerQuantileSecond,
                  lower quartile=\midlowerQuantileSecond,
                  median=\midQuantileSecond,
                  upper quartile=\midhigherQuantileSecond,
                  upper whisker=\higherQuantileSecond},
                  color=black] coordinates {};
    \end{axis}
    \end{scope}

\end{tikzpicture}
		\caption{Identical \gls{abk:amod}.}
		\label{fig:e1 competitive identical}		
	\end{subfigure}
	\hfill 
	\begin{subfigure}[t]{0.325\columnwidth}
		\centering
		\def\amodfirstShare{25.6}
		\def\amodsecondShare{17.5}
		\def\ptShare{50.1}
		\def\walkingShare{6.8}
		\def\lowerQuantileFirst{0}
		\def\midlowerQuantileFirst{61.6}
		\def\midQuantileFirst{68.2}
		\def\midhigherQuantileFirst{75.0}
		\def\higherQuantileFirst{99.3}
		\def\lowerQuantileSecond{0}
		\def\midlowerQuantileSecond{61.5}
		\def\midQuantileSecond{67.9}
		\def\midhigherQuantileSecond{74.8}
		\def\higherQuantileSecond{99.3}
		\tikzstyle{legendPie} = [rectangle]

\begin{tikzpicture}[scale=0.5,font=\tiny]
    \pie[color={amod,amod1,publicTransit,walking},
         radius=\radiusPlot,
         rotate=30]{\amodfirstShare/, 
         			\amodsecondShare/,
                    \ptShare/, 
                    \walkingShare/}  
    
    \begin{scope}[shift={(-1.4,-\radiusPlot-0.5)}]
        \node[legendPie,fill=amod,draw=black,
              label={[font=\scriptsize]right:\acrshort{abk:amod} 1}] at (0.1,-0.5) {};
        \node[legendPie,fill=amod1,draw=black,
              label={[font=\scriptsize]right:\acrshort{abk:amod} 2}] at (0.1,-1.0) {};
        \node[legendPie,fill=publicTransit,draw=black,
              label={[font=\scriptsize]right:Public transport}] at (0.1,-1.5) {};
        \node[legendPie,fill=walking,draw=black,
              label={[font=\scriptsize]right:Walking}] at (0.1,-2.0) {};
    
    \begin{axis}[at={(4.1cm,3.5cm)},
                 anchor={north west},
                 width=2.8cm,height=6.0cm,
                 ymin=0,ymax=105,
                 ytick={0,10,20,30,40,50,60,70,80,90,100},
                 axis y line=left,
                 ylabel={\normalsize Profit share per trip [\si{\percent}]},
                 ylabel shift=-7pt,
                 xtick={1,2},
                 xticklabels={\acrshort{abk:amod} 1,\acrshort{abk:amod} 2},
                 xticklabel style={align=center,rotate=60,font=\normalsize},
                 x axis line style={opacity=0},
                 xtick style={draw=none},
                 ]
        \addplot [boxplot prepared={
                  draw direction=y,
                  lower whisker=\lowerQuantileFirst,
                  lower quartile=\midlowerQuantileFirst,
                  median=\midQuantileFirst,
                  upper quartile=\midhigherQuantileFirst,
                  upper whisker=\higherQuantileFirst},
                  color=black] coordinates {};
        \addplot [boxplot prepared={
                  draw direction=y,
                  lower whisker=\lowerQuantileSecond,
                  lower quartile=\midlowerQuantileSecond,
                  median=\midQuantileSecond,
                  upper quartile=\midhigherQuantileSecond,
                  upper whisker=\higherQuantileSecond},
                  color=black] coordinates {};
    \end{axis}
    \end{scope}

\end{tikzpicture}
		\caption{Non-ident. \gls{abk:amod}.}
		\label{fig:e1 competitive nonidentical}		
	\end{subfigure}
	\hfill 
	\begin{subfigure}[t]{0.325\columnwidth}
		\centering 
		\def\amodShare{42.2}
		\def\ptShare{49.4}
		\def\walkingShare{8.4}
		\def\lowerQuantile{13.5}
		\def\midlowerQuantile{70.4}
		\def\midQuantile{77.9}
		\def\midhigherQuantile{84.1}
		\def\higherQuantile{99.5}
		\tikzstyle{legendPie} = [rectangle]

\begin{tikzpicture}[scale=0.5,font=\tiny]
    \pie[color={amod,publicTransit,walking},
         radius=\radiusPlot,
         rotate=30]{\amodShare/, 
                    \ptShare/, 
                    \walkingShare/}  
    
    \begin{scope}[shift={(-1.4,-\radiusPlot-0.5)}]
        \node[legendPie,fill=amod,draw=black,
              label={[font=\scriptsize]right:\gls{abk:amod}}] at (0.5,-1.0) {};
        \node[legendPie,fill=publicTransit,draw=black,
              label={[font=\scriptsize]right:Public transport}] at (0.5,-1.5) {};
        \node[legendPie,fill=walking,draw=black,
              label={[font=\scriptsize]right:Walking}] at (0.5,-2.0) {};
    
    \begin{axis}[at={(4.1cm,3.5cm)},
                 anchor={north west},
                 width=2.8cm,height=6.0cm,
                 ymin=0,ymax=105,
                 ytick={0,10,20,30,40,50,60,70,80,90,100},
                 axis y line=left,
                 ylabel={\normalsize Profit share per trip [\si{\percent}]},
                 ylabel shift=-7pt,
                 xtick={1},
                 xticklabels={\normalsize \acrshort{abk:amod} 1},
                 xticklabel style={align=center,rotate=60},
                 x axis line style={opacity=0},
                 xtick style={draw=none},
                 ]
        \addplot [boxplot prepared={
                  draw direction=y,
                  lower whisker=13.51,
                  lower quartile=70.42,
                  median=77.93,
                  upper quartile=84.09,
                  upper whisker=99.50},
                  color=black] coordinates {};
    \end{axis}
    \end{scope}
\end{tikzpicture}
		\caption{Single \gls{abk:amod}.}
		\label{fig:e1 single}
	\end{subfigure}
	\caption{Modal share (left) and statistics of the profit share per trip, i.e., the percentage share of the revenue of each trip which remains as profit for the \gls{abk:amod} operator(s) (right).}
	\label{fig:e1}
\end{figure}

\begin{table}[t]
    \scriptsize 
    \centering 
    \setlength{\tabcolsep}{3pt}
    \begin{tabular}{rcccccc}
        \toprule 
         &
         \multicolumn{6}{c}{Percentage of trips served by}
         \\ 
         \cmidrule(lr{1em}){2-7}
         &
         \makecell{PT \\ only}
         & 
         \makecell{\gls{abk:amod} \\ only}
         & 
         \makecell{\gls{abk:amod} 1 \\ only}
         &
         \makecell{\gls{abk:amod} 2 \\ only}
         &
         \makecell{walking \\ only}
         &
         mixed \\ \midrule 
         Identical \gls{abk:amod} &
         \SI{44.8}{\percent} & \SI{23.6}{\percent} & \SI{0.0}{\percent} & \SI{0.0}{\percent} & \SI{5.2}{\percent} & \SI{26.4}{\percent}\\
         Non-ident. \gls{abk:amod}  &
         \SI{44.7}{\percent} & \SI{29.0}{\percent} & \SI{0.5}{\percent} & \SI{0.0}{\percent} & \SI{5.2}{\percent} & \SI{21.1}{\percent} \\
         Single \gls{abk:amod} &
         \SI{38.3}{\percent} & \SI{20.9}{\percent} & \SI{20.9}{\percent} & -- & \SI{4.3}{\percent} & \SI{36.5}{\percent} \\
         \bottomrule 
    \end{tabular}
    \caption{Local cannibalization of \gls{abk:amod} operators. Results confirm that the microscopic trend  deviates from the macroscopic one: many trips do not show a modal share and are entirely served by only one mode of transportation.}
    \label{tab:local cannibalization}
\end{table}

The overall profit share differs only insignificantly from the numbers reported for the identical operator case.
In the single operator case, the profit share per trip distribution is even higher, showing a \SI{10}{\percent} increase for its median and a more than \SI{10}{\percent} increase for its minimum compared to the multiple operator case. In this case, the overall profit share increases accordingly and amounts to \SI{73.9}{\percent} of the revenue, while \SI{26.1}{\percent} are used to cover the costs.
In the case of two identical \gls{abk:amod} operators, we further observe that the total \gls{abk:amod} share is equally split between the two operators and the profit share per trip distribution is equal, too. Each \gls{abk:amod} operator's overall profit share is \SI{64.7}{\percent}, while \SI{35.3}{\percent} of the revenue is used to cover the costs.
The profit share per trip distribution remains equal between both \gls{abk:amod} operators in the non-identical case, but the larger operator obtains a \SI{45}{\percent} larger modal share and a \SI{48}{\percent} larger total profit. These effects correlate with the operator's \SI{50}{\percent} larger fleet size. 

These results allow for the following interpretation.
In all cases, \gls{abk:amod} operators are able to realize a high profit share as they can exploit the limits of the static public transport pricing strategy effectively.
In the degenerate single operator case, the \gls{abk:amod} operator faces no additional competition and therefore achieves higher profit shares than in the multiple operator cases. Here, the limited fleet size constrains her actions and the modal share at equilibrium results accordingly.
In the multiple operator cases, \gls{abk:amod} operators face competition among each other, which leads to lower profit share realizations. However, the fleet size remains the constraining quantity. Accordingly, the \gls{abk:amod} operators still obtain similar relative profit shares and their share of the modal split differs proportionally to the difference in their fleet sizes. 

Compared to a setting without AMoD operators (i.e., with only public transport or walking), the case of identical AMoD operators yields \SI{7.7}{\percent} lower customers' transportation costs. Not all citizens, though, profit in the same way. As shown in~\cref{tab:local cannibalization}, some trips do not benefit from the service of the AMoD operators and are still completed walking or using public transportation; in contrast, all other trips are \SI{17.5}{\percent} cheaper and trips only served by AMoD are \SI{25}{\percent} cheaper. 
In the monopolistic case, transportation costs only decrease by \SI{2.2}{\percent} for all trips and by \SI{7.7}{\percent} for trips served only by AMoD, which correlates with the higher profit share discussed before. 

\paragraph{\gls{abk:amod} related impact factors}
\cref{fig:e2} shows the modal split and the \gls{abk:amod} operators' profit, depending on the \gls{abk:amod} fleet size, for the identical (\cref{fig:e2 competitive identical}), non-identical (\cref{fig:e2 competitive nonidentical}), and the single (\cref{fig:e2 single}) \gls{abk:amod} operator case.
As can be seen, the \gls{abk:amod} fleet size heavily influences the modal split, resulting in \gls{abk:amod} shares that vary between \SI{7.3}{\percent} and \SI{80.2}{\percent}. 

\begin{figure}[!t]
	\centering
	\captionsetup[subfigure]{skip=-10pt}
	\def\barwidthPlot{4pt}
	\def\xlabelPlot{Fleet size [$\times 10^3$ vehicles]}
	\def\xticksPlot{0,4,...,23}
    \newboolean{showylabelleft}
	\newboolean{showylabelright}
	\newboolean{showlegend}
	\begin{subfigure}[t]{0.325\columnwidth}
	    \centering
	    \setboolean{showylabelleft}{true}
		\setboolean{showylabelright}{true}
		\setboolean{showlegend}{true}
		\pgfplotstableread[col sep=comma]{graphics/matlab/BerlinFleetSize/BerlinFleetSizeCompetitive.csv}{\dataTable}

\ifthenelse{\boolean{showylabelleft}}
{\def\ylabelleft{Modal share [\%]}}
{\def\ylabelleft{}}

\ifthenelse{\boolean{showylabelright}}
{\def\ylabelright{Profit [\si{\usd\per\second}]}}
{\def\ylabelright{}}

\def\widthplot{4.0cm}

\begin{tikzpicture}[scale=0.8]
    \begin{axis}[width=\widthplot,height=4.0cm,
                 ymin=0,ymax=100,
                 xtick={\xticksPlot},
                 ytick={0,10,...,100},
                 x tick label style={/pgf/number format/set decimal separator={.}},
                 enlarge x limits=+0.05,
                 xlabel={\xlabelPlot},
                 ylabel={\ylabelleft},
                 ylabel shift=-9pt,
                 grid=major,
                 ybar stacked,
                 bar width=\barwidthPlot]
        \addplot[ybar stacked,ybar legend,mark=none,color=amod,fill=amod] table[x index=0,y index=1] {\dataTable};
        \addplot[ybar stacked,ybar legend,mark=none,color=amod,fill=amod1] table[x index=0,y index=2] {\dataTable};
        \addplot[ybar stacked,ybar legend,mark=none,color=publicTransit,fill=publicTransit] table[x index=0,y index=3] {\dataTable};
        \addplot[ybar stacked,ybar legend,mark=none,color=walking,fill=walking] table[x index=0,y index=4] {\dataTable};
    \end{axis}
    
    \begin{axis}[width=\widthplot,height=4.0cm,
                 ymin=0,
                 ymax=45,
                 ytick={\empty},
                 yticklabels={},
                 axis y line*=right,
                 axis x line=none,
                 enlarge x limits=+0.05,
                 ylabel={\ylabelright},
                 ylabel shift=-1pt,
                 legend style={at={(0.6,1.04)},
                               anchor=south west,
                               legend columns=3}]
        \addlegendimage{ybar stacked,ybar legend,mark=none,color=amod,fill=amod,xshift=0.5em}
        \addlegendentry{\acrshort{abk:amod} 1}
        \addlegendimage{ybar stacked,ybar legend,mark=none,color=amod1,fill=amod1,xshift=0.5em}
        \addlegendentry{\acrshort{abk:amod} 2}
        \addlegendimage{ybar stacked,ybar legend,mark=none,color=publicTransit,fill=publicTransit,xshift=0.5em}
        \addlegendentry{Public transport}
        \addlegendimage{ybar stacked,ybar legend,mark=none,color=walking,fill=walking,xshift=0.5em}
        \addlegendentry{Walking}
        \addplot[thick,black,mark=o,mark size=2.5pt] table[x index=0,y index=5] {\dataTable};
        \addlegendentry{Profit \acrshort{abk:amod} 1};
        \addplot[thick,black,mark=*,mark size=0.8pt,mark options={black,fill=black}] table[x index=0,y index=6] {\dataTable};
        \addlegendentry{Profit \acrshort{abk:amod} 2};
        \ifthenelse{\boolean{showlegend}}{}{\legend{};}
    \end{axis}
\end{tikzpicture}
		\caption{Identical \gls{abk:amod}.}
		\label{fig:e2 competitive identical}
	\end{subfigure}
	\hfill 
	\begin{subfigure}[t]{0.325\columnwidth}
	    \centering
	    \setboolean{showylabelleft}{true}
		\setboolean{showylabelright}{true}
		\setboolean{showlegend}{false}
		\pgfplotstableread[col sep=comma]{graphics/matlab/BerlinFleetSize/BerlinFleetSizeCompetitive_Nonidentical.csv}{\dataTable}

\ifthenelse{\boolean{showylabelleft}}
{\def\ylabelleft{Modal share [\%]}}
{\def\ylabelleft{}}

\ifthenelse{\boolean{showylabelright}}
{\def\ylabelright{Profit [\si{\usd\per\second}]}}
{\def\ylabelright{}}

\def\widthplot{4.0cm}

\begin{tikzpicture}[scale=0.8]
    \begin{axis}[width=\widthplot,height=4.0cm,
                 ymin=0,ymax=100,
                 xtick={\xticksPlot},
                 ytick={0,10,...,100},
                 x tick label style={/pgf/number format/set decimal separator={.}},
                 enlarge x limits=+0.05,
                 xlabel={\xlabelPlot},
                 ylabel={\ylabelleft},
                 ylabel shift=-9pt,
                 grid=major,
                 ybar stacked,
                 bar width=\barwidthPlot]
        \addplot[ybar stacked,ybar legend,mark=none,color=amod,fill=amod] table[x index=0,y index=1] {\dataTable};
        \addplot[ybar stacked,ybar legend,mark=none,color=amod,fill=amod1] table[x index=0,y index=2] {\dataTable};
        \addplot[ybar stacked,ybar legend,mark=none,color=publicTransit,fill=publicTransit] table[x index=0,y index=3] {\dataTable};
        \addplot[ybar stacked,ybar legend,mark=none,color=walking,fill=walking] table[x index=0,y index=4] {\dataTable};
    \end{axis}
    
    \begin{axis}[width=\widthplot,height=4.0cm,
                 ymin=0,
                 ymax=45,
                 ytick={\empty},
                 yticklabels={},
                 axis y line*=right,
                 axis x line=none,
                 enlarge x limits=+0.05,
                 ylabel={\ylabelright},
                 ylabel shift=-1pt,
                 legend style={at={(0.6,1.04)},
                               anchor=south west,
                               legend columns=3}]
        \addlegendimage{ybar stacked,ybar legend,mark=none,color=amod,fill=amod,xshift=0.5em}
        \addlegendentry{\acrshort{abk:amod} 1}
        \addlegendimage{ybar stacked,ybar legend,mark=none,color=amod1,fill=amod1,xshift=0.5em}
        \addlegendentry{\acrshort{abk:amod} 2}
        \addlegendimage{ybar stacked,ybar legend,mark=none,color=publicTransit,fill=publicTransit,xshift=0.5em}
        \addlegendentry{Public transport}
        \addlegendimage{ybar stacked,ybar legend,mark=none,color=walking,fill=walking,xshift=0.5em}
        \addlegendentry{Walking}
        \addplot[thick,black,mark=o,mark size=2.5pt] table[x index=0,y index=5] {\dataTable};
        \addlegendentry{Profit \acrshort{abk:amod} 1};
        \addplot[thick,black,mark=*,mark size=0.8pt,mark options={black,fill=black}] table[x index=0,y index=6] {\dataTable};
        \addlegendentry{Profit \acrshort{abk:amod} 2};
        \ifthenelse{\boolean{showlegend}}{}{\legend{};}
    \end{axis}
\end{tikzpicture}
		\caption{Non-ident. \gls{abk:amod}.}
		\label{fig:e2 competitive nonidentical}
	\end{subfigure}
	\hfill 
	\begin{subfigure}[t]{0.325\columnwidth}
	    \centering
	    \setboolean{showylabelleft}{true}
		\setboolean{showylabelright}{true}
		\setboolean{showlegend}{false}
		\pgfplotstableread[col sep=comma]{graphics/matlab/BerlinFleetSize/BerlinFleetSize.csv}{\dataTable}

\ifthenelse{\boolean{showylabelleft}}
{\def\ylabelleft{Modal share [\%]}}
{\def\ylabelleft{}}

\ifthenelse{\boolean{showylabelright}}
{\def\ylabelright{Profit [\si{\usd\per\second}]}}
{\def\ylabelright{}}

\def\widthplot{4.0cm}

\begin{tikzpicture}[scale=0.8]
    \begin{axis}[width=\widthplot,height=4.0cm,
                 ymin=0,ymax=100,
                 xtick={\xticksPlot},
                 ytick={0,10,...,100},
                 x tick label style={/pgf/number format/set decimal separator={.}},
                 enlarge x limits=+0.05,
                 xlabel={\xlabelPlot},
                 ylabel={\ylabelleft},
                 ylabel shift=-9pt,
                 grid=major,
                 ybar stacked,
                 bar width=\barwidthPlot]
        \addplot[ybar stacked,ybar legend,mark=none,color=amod,fill=amod] table[x index=0,y index=1] {\dataTable};
        \addplot[ybar stacked,ybar legend,mark=none,color=publicTransit,fill=publicTransit] table[x index=0,y index=2] {\dataTable};
        \addplot[ybar stacked,ybar legend,mark=none,color=walking,fill=walking] table[x index=0,y index=3] {\dataTable};
    \end{axis}
    
    \begin{axis}[width=\widthplot,height=4.0cm,
                 ymin=0,
                 ymax=45,
                 ytick={\empty},
                 yticklabels={},
                 axis y line*=right,
                 axis x line=none,
                 enlarge x limits=+0.05,
                 ylabel={\ylabelright},
                 ylabel shift=-1pt,
                 legend style={at={(-0.1,1.04)},
                               anchor=north west,
                               legend columns=3}]
        \addlegendimage{ybar stacked,ybar legend,mark=none,color=amod,fill=amod,xshift=0.5em}
        \addlegendentry{\acrshort{abk:amod}}
        \addlegendimage{ybar stacked,ybar legend,mark=none,color=publicTransit,fill=publicTransit,xshift=0.5em}
        \addlegendentry{Public transport}
        \addlegendimage{ybar stacked,ybar legend,mark=none,color=walking,fill=walking,,xshift=0.5em}
        \addlegendentry{Walking}
        \addplot[thick,black,mark=o,mark options={black,fill=black}] table[x index=0,y index=4] {\dataTable};
        \addlegendentry{Profit \acrshort{abk:amod}};
        \ifthenelse{\boolean{showlegend}}{}{\legend{};}
    \end{axis}
\end{tikzpicture}
		\caption{Single \gls{abk:amod}.}
		\label{fig:e2 single}
	\end{subfigure}
	\caption{Impact of the fleet size on the modal share and the \gls{abk:amod} operators' profit.}
	\label{fig:e2}
\end{figure}
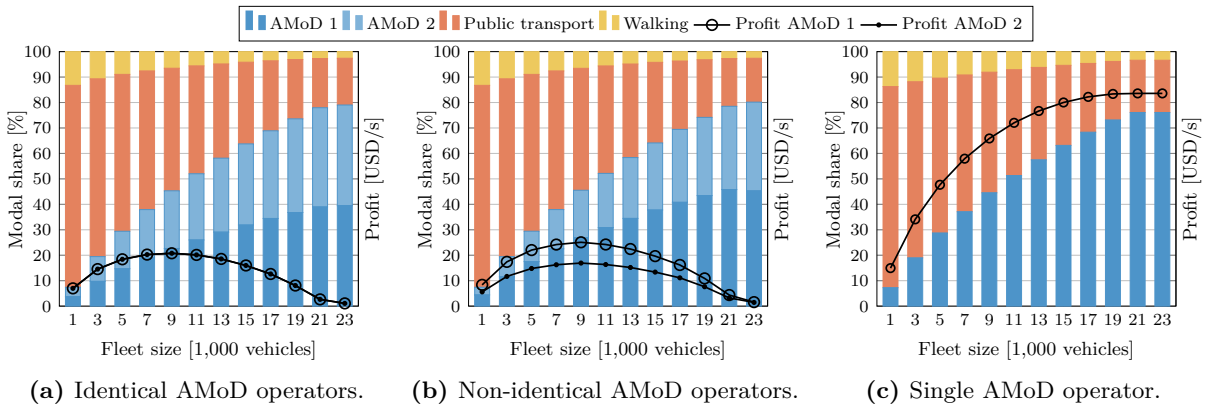 

While the overall modal share of the \gls{abk:amod} operators remains similar in all three cases, we observe a major difference in the \gls{abk:amod} operators' profit between the two and the single \gls{abk:amod} operator case. In the competitive cases (\cref{fig:e2 competitive identical,fig:e2 competitive nonidentical}), a larger fleet size eventually leads to decreasing profits. Here, the profit of both operators remains equal in the identical operators case. In the non-identical operator case, the operator with the larger fleet size gains a larger profit.
In the monopolistic case (\cref{fig:e2 single}), the profit of the \gls{abk:amod} operator monotonically increases up to a fleet size of \SI{21000}{} vehicles. Then, no more additional requests are served by the \gls{abk:amod} system as the last \SI{23.9}{\percent} remain unprofitable due to high operational and rebalancing costs. Notably, the \gls{abk:amod} operator's profit begins to nearly stagnate already for fleet sizes above \SI{15000}{} vehicles. Overall,  the \gls{abk:amod} profit shows a decreasing utility margin as the operator serves the most profitable rides first, before accepting less profitable rides. 

These results allow for the following interpretation. In the monopolistic case, the \gls{abk:amod} operator does not face competition. Accordingly, an increasing fleet size enlarges her action space and allows to directly select profit-maximizing pricing strategies. In the competitive cases, a larger fleet size still enlarges each \gls{abk:amod} operators action space, but also increases the competition among the operators. Thus, both operators are forced to lower their prices in order to attract customers, which decreases their profits accordingly. Interestingly, the larger \gls{abk:amod} operator faces a steeper profit decrease compared to the smaller \gls{abk:amod} operator in the non-identical case.
Moreover, competition naturally limits fleet size. We observe that rational \gls{abk:amod} operators would operate \SI{9000}{} vehicles (\SI{4500}{} vehicles each for identical \gls{abk:amod} operators, \SI{5400}{} and \SI{3600}{} vehicles for non-identical \gls{abk:amod} operators) in competitive cases. This steady state results arise from decreasing profits that operators face in a competitive setting. Remarkably, it reflects the current real-world situation of Berlin's taxi market in which \SI{8373}{} licensed taxis offer mobility service. In the monopolistic case, a rational \gls{abk:amod} operator would operate a fleet size of more than \SI{20000}{} vehicles.

\paragraph{Public transport impact factors}
\cref{{fig:e5}} shows the development of the modal split based on the public transport price.
In the two operators cases (\cref{fig:e5 competitive identical,fig:e5 competitive nonidentical}), the public transport price only slightly affects the modal share: free or cheaper public transport does not significantly reduce the \gls{abk:amod} modal share and higher public transport prices do not shift customers from public transport to the \gls{abk:amod} system. Instead, the modal share of walking increases for higher public transport prices while the \gls{abk:amod} modal share remains equal.
\begin{figure}[!t]
	\centering
	\captionsetup[subfigure]{skip=-10pt}
	\def\barwidthPlot{4pt}
	\def\xlabelPlot{Public transport price [\si{\usd}]}
	\def\xticksPlot{0,1,...,6}
	\newboolean{showylabelleft}
	\newboolean{showylabelright}
	\newboolean{showlegend}
	\begin{subfigure}[t]{0.325\columnwidth}
		\centering
		\setboolean{showylabelleft}{true}
		\setboolean{showylabelright}{true}
		\setboolean{showlegend}{true}
		\pgfplotstableread[col sep=comma]{graphics/matlab/BerlinPTPrice/BerlinPTPriceCompetitive.csv}{\dataTable}

\ifthenelse{\boolean{showylabelleft}}
{\def\ylabelleft{Modal share [\%]}}
{\def\ylabelleft{}}

\ifthenelse{\boolean{showylabelright}}
{\def\ylabelright{Profit [\si{\usd\per\second}]}}
{\def\ylabelright{}}

\def\widthplot{4.0cm}

\begin{tikzpicture}[scale=0.8]
    \begin{axis}[width=\widthplot,height=4.0cm,
                 ymin=0,ymax=100,
                 xtick={\xticksPlot},
                 ytick={0,10,...,100},
                 x tick label style={/pgf/number format/set decimal separator={.}},
                 enlarge x limits=+0.05,
                 xlabel={\xlabelPlot},
                 ylabel={\ylabelleft},
                 ylabel shift=-9pt,
                 grid=major,
                 ybar stacked,
                 bar width=\barwidthPlot]
        \addplot[ybar stacked,ybar legend,mark=none,color=amod,fill=amod] table[x index=0,y index=1] {\dataTable};
        \addplot[ybar stacked,ybar legend,mark=none,color=amod,fill=amod1] table[x index=0,y index=2] {\dataTable};
        \addplot[ybar stacked,ybar legend,mark=none,color=publicTransit,fill=publicTransit] table[x index=0,y index=3] {\dataTable};
        \addplot[ybar stacked,ybar legend,mark=none,color=walking,fill=walking] table[x index=0,y index=4] {\dataTable};
    \end{axis}
    
    \begin{axis}[width=\widthplot,height=4.0cm,
                 ymin=0,
                 ymax=45,
                 ytick={\empty},
                 yticklabels={},
                 axis y line*=right,
                 axis x line=none,
                 enlarge x limits=+0.05,
                 ylabel={\ylabelright},
                 ylabel shift=-1pt,
                 legend style={at={(0.6,1.04)},
                               anchor=south west,
                               legend columns=3}]
        \addlegendimage{ybar stacked,ybar legend,mark=none,color=amod,fill=amod,xshift=0.5em}
        \addlegendentry{\acrshort{abk:amod} 1}
        \addlegendimage{ybar stacked,ybar legend,mark=none,color=amod1,fill=amod1,xshift=0.5em}
        \addlegendentry{\acrshort{abk:amod} 2}
        \addlegendimage{ybar stacked,ybar legend,mark=none,color=publicTransit,fill=publicTransit,xshift=0.5em}
        \addlegendentry{Public transport}
        \addlegendimage{ybar stacked,ybar legend,mark=none,color=walking,fill=walking,xshift=0.5em}
        \addlegendentry{Walking}
        \addplot[thick,black,mark=o,mark size=2.5pt] table[x index=0,y index=5] {\dataTable};
        \addlegendentry{Profit \acrshort{abk:amod} 1};
        \addplot[thick,black,mark=*,mark size=0.8pt,mark options={black,fill=black}] table[x index=0,y index=6] {\dataTable};
        \addlegendentry{Profit \acrshort{abk:amod} 2};
        \ifthenelse{\boolean{showlegend}}{}{\legend{};}
    \end{axis}
\end{tikzpicture}
		\caption{Identical \gls{abk:amod}.}
		\label{fig:e5 competitive identical}
	\end{subfigure}
	\hfill  
	\begin{subfigure}[t]{0.325\columnwidth}
		\centering
		\setboolean{showylabelleft}{true}
		\setboolean{showylabelright}{true}
		\setboolean{showlegend}{false}
		\pgfplotstableread[col sep=comma]{graphics/matlab/BerlinPTPrice/BerlinPTPriceCompetitive_Nonidentical.csv}{\dataTable}

\ifthenelse{\boolean{showylabelleft}}
{\def\ylabelleft{Modal share [\%]}}
{\def\ylabelleft{}}

\ifthenelse{\boolean{showylabelright}}
{\def\ylabelright{Profit [\si{\usd\per\second}]}}
{\def\ylabelright{}}

\def\widthplot{4.0cm}

\begin{tikzpicture}[scale=0.8]
    \begin{axis}[width=\widthplot,height=4.0cm,
                 ymin=0,ymax=100,
                 xtick={\xticksPlot},
                 ytick={0,10,...,100},
                 x tick label style={/pgf/number format/set decimal separator={.}},
                 enlarge x limits=+0.05,
                 xlabel={\xlabelPlot},
                 ylabel={\ylabelleft},
                 ylabel shift=-9pt,
                 grid=major,
                 ybar stacked,
                 bar width=\barwidthPlot]
        \addplot[ybar stacked,ybar legend,mark=none,color=amod,fill=amod] table[x index=0,y index=1] {\dataTable};
        \addplot[ybar stacked,ybar legend,mark=none,color=amod,fill=amod1] table[x index=0,y index=2] {\dataTable};
        \addplot[ybar stacked,ybar legend,mark=none,color=publicTransit,fill=publicTransit] table[x index=0,y index=3] {\dataTable};
        \addplot[ybar stacked,ybar legend,mark=none,color=walking,fill=walking] table[x index=0,y index=4] {\dataTable};
    \end{axis}
    
    \begin{axis}[width=\widthplot,height=4.0cm,
                 ymin=0,
                 ymax=45,
                 ytick={\empty},
                 yticklabels={},
                 axis y line*=right,
                 axis x line=none,
                 enlarge x limits=+0.05,
                 ylabel={\ylabelright},
                 ylabel shift=-1pt,
                 legend style={at={(0.6,1.04)},
                               anchor=south west,
                               legend columns=3}]
        \addlegendimage{ybar stacked,ybar legend,mark=none,color=amod,fill=amod,xshift=0.5em}
        \addlegendentry{\acrshort{abk:amod} 1}
        \addlegendimage{ybar stacked,ybar legend,mark=none,color=amod1,fill=amod1,xshift=0.5em}
        \addlegendentry{\acrshort{abk:amod} 2}
        \addlegendimage{ybar stacked,ybar legend,mark=none,color=publicTransit,fill=publicTransit,xshift=0.5em}
        \addlegendentry{Public transport}
        \addlegendimage{ybar stacked,ybar legend,mark=none,color=walking,fill=walking,xshift=0.5em}
        \addlegendentry{Walking}
        \addplot[thick,black,mark=o,mark size=2.5pt] table[x index=0,y index=5] {\dataTable};
        \addlegendentry{Profit \acrshort{abk:amod} 1};
        \addplot[thick,black,mark=*,mark size=0.8pt,mark options={black,fill=black}] table[x index=0,y index=6] {\dataTable};
        \addlegendentry{Profit \acrshort{abk:amod} 2};
        \ifthenelse{\boolean{showlegend}}{}{\legend{};}
    \end{axis}
\end{tikzpicture}
		\caption{Non-ident. \gls{abk:amod}.}
		\label{fig:e5 competitive nonidentical}
	\end{subfigure}
	\hfill
	\begin{subfigure}[t]{0.325\columnwidth}
		\centering
		\setboolean{showylabelleft}{true}
		\setboolean{showylabelright}{true}
		\setboolean{showlegend}{false}
		\pgfplotstableread[col sep=comma]{graphics/matlab/BerlinPTPrice/BerlinPTPrice.csv}{\dataTable}

\ifthenelse{\boolean{showylabelleft}}
{\def\ylabelleft{Modal share [\%]}}
{\def\ylabelleft{}}

\ifthenelse{\boolean{showylabelright}}
{\def\ylabelright{Profit [\si{\usd\per\second}]}}
{\def\ylabelright{}}

\def\widthplot{4.0cm}

\begin{tikzpicture}[scale=0.8]
    \begin{axis}[width=\widthplot,height=4.0cm,
                 ymin=0,ymax=100,
                 xtick={\xticksPlot},
                 ytick={0,10,...,100},
                 x tick label style={/pgf/number format/set decimal separator={.}},
                 enlarge x limits=+0.05,
                 xlabel={\xlabelPlot},
                 ylabel={\ylabelleft},
                 ylabel shift=-9pt,
                 grid=major,
                 ybar stacked,
                 bar width=\barwidthPlot]
        \addplot[ybar stacked,ybar legend,mark=none,color=amod,fill=amod] table[x index=0,y index=1] {\dataTable};
        \addplot[ybar stacked,ybar legend,mark=none,color=publicTransit,fill=publicTransit] table[x index=0,y index=2] {\dataTable};
        \addplot[ybar stacked,ybar legend,mark=none,color=walking,fill=walking] table[x index=0,y index=3] {\dataTable};
    \end{axis}
    
    \begin{axis}[width=\widthplot,height=4.0cm,
                 ymin=0,
                 ymax=45,
                 ytick={\empty},
                 yticklabels={},
                 axis y line*=right,
                 axis x line=none,
                 enlarge x limits=+0.05,
                 ylabel={\ylabelright},
                 ylabel shift=-1pt,
                 legend style={at={(-0.1,1.04)},
                               anchor=north west,
                               legend columns=3}]
        \addlegendimage{ybar stacked,ybar legend,mark=none,color=amod,fill=amod,xshift=0.5em}
        \addlegendentry{\acrshort{abk:amod}}
        \addlegendimage{ybar stacked,ybar legend,mark=none,color=publicTransit,fill=publicTransit,xshift=0.5em}
        \addlegendentry{Public transport}
        \addlegendimage{ybar stacked,ybar legend,mark=none,color=walking,fill=walking,,xshift=0.5em}
        \addlegendentry{Walking}
        \addplot[thick,black,mark=o,mark options={black,fill=black}] table[x index=0,y index=4] {\dataTable};
        \addlegendentry{Profit \acrshort{abk:amod}};
        \ifthenelse{\boolean{showlegend}}{}{\legend{};}
    \end{axis}
\end{tikzpicture}
		\caption{Single \gls{abk:amod}.}
		\label{fig:e5 single}
	\end{subfigure}
	\caption{Impact of the public transport price on the modal share and the \gls{abk:amod} operators' profit.}
	\label{fig:e5}
\end{figure}
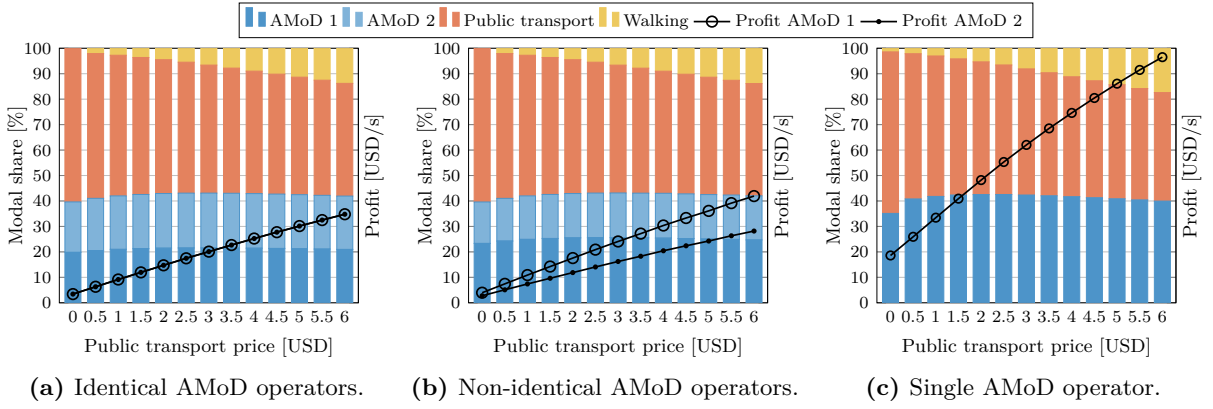
Revenue and profit of the \gls{abk:amod} operators constantly increase with higher public transport prices. These effects occur because a higher public transport price allows the \gls{abk:amod} operators to raise prices, without affecting customer decisions or their operations.
In the case of a single \gls{abk:amod} operator (\cref{fig:e5 single}), free public transport reduces the share of \gls{abk:amod} services in the system to \SI{35}{\percent}. For a large range of prices (\SI{1}{\usd} to \SI{3.5}{\usd}), the \gls{abk:amod} share remains constant, whereas it even decreases for public transport prices above \SI{3.5}{\usd} (which correlates with an increased average distance between the origin and the destination of the customers served). This slight decrease results from the \gls{abk:amod} operator focusing on less but more profitable trips.
Revenue and profit of the \gls{abk:amod} operator constantly increase with higher public transport prices analogously to the multiple operator cases. Yet, the profit of the \gls{abk:amod} operator grows roughly \SI{25}{\percent} faster than the sum of the profits of the two identical \gls{abk:amod} operators. 

These results allow for the following interpretation. In the single \gls{abk:amod} operator case, no competition exists, such that the \gls{abk:amod} operator selects the pricing strategy that serves only her profit-maximizing trips. Here, the set of profit-maximizing trips does not necessarily correlate with the highest possible modal share, which explains the decrease of the \gls{abk:amod} modal share for low and high public transport prices. In the multiple \gls{abk:amod} operator cases, the two \gls{abk:amod} operators impact each other, making the selection of the monopolistic pricing strategy profit-suboptimal. Accordingly, each operator is forced to serve more customers to increase her profit, which stabilizes the \gls{abk:amod} modal share independent of the public transport price and, at the same time, constraints both operators prices and profits. 
One comment with these findings is in order. We currently do not model a customer's option to use no transportation mode at all, i.e., to stay home. In practice, customers would decline both services if prices increased too drastically, which would limit the revenue and profit increase for both the \gls{abk:amod} operator and the public transport. This behavior is partly reflected in an increasing shift to pure walking trips. However, here the model implies that a customer's individual price threshold to resign from a trip relates to her general value of time. Although the tendencies shown are not accurate for very high public transport fees, the dynamics shown for reasonable changes in public transport prices remain valid.

\cref{fig:e7} shows the impact of a service tax, imposed on the \gls{abk:amod} operators' revenue, i.e., on the revenue of each individual trip. As can be seen, a service tax up to \SI{60}{\percent} does not yield a change in the modal share but decreases the \gls{abk:amod} operators' profit. This effect results in both the two operator cases and the single operator case from the high profit share that the \gls{abk:amod} operator earns in the basic scenario. With a service tax above \SI{60}{\percent}, the modal share changes because the \gls{abk:amod} operator begins to refuse services that become unprofitable.    
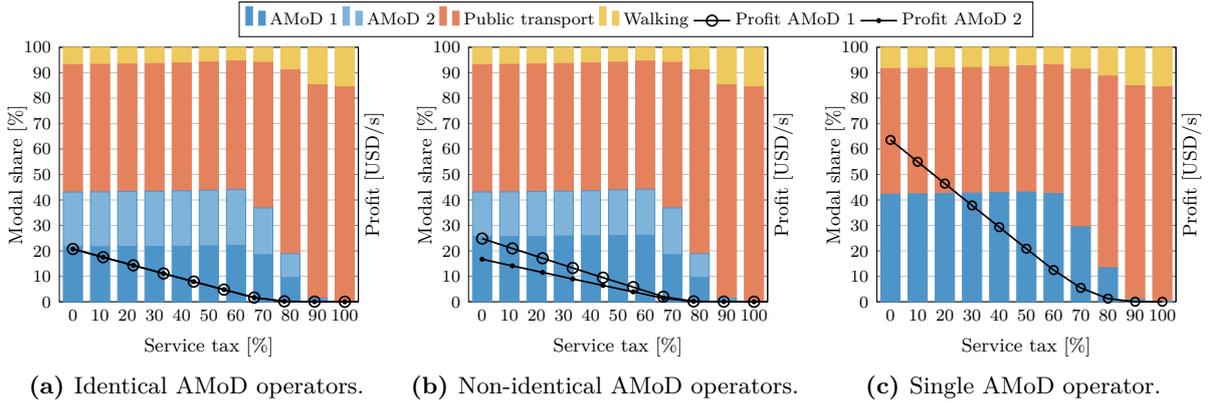
\begin{figure}[!t]
    \captionsetup[subfigure]{skip=-10pt}
	\centering 
	\def\barwidthPlot{5pt}
	\def\xlabelPlot{Service tax [\%]}
	\def\xticksPlot{0,20,...,100}
	\newboolean{showylabelleft}
	\newboolean{showylabelright}
	\newboolean{showlegend}
	\begin{subfigure}[t]{0.325\columnwidth}
		\centering
		\setboolean{showylabelleft}{true}
		\setboolean{showylabelright}{true}
		\setboolean{showlegend}{true}
		\pgfplotstableread[col sep=comma]{graphics/matlab/BerlinTax/BerlinTaxCompetitive.csv}{\dataTable}

\ifthenelse{\boolean{showylabelleft}}
{\def\ylabelleft{Modal share [\%]}}
{\def\ylabelleft{}}

\ifthenelse{\boolean{showylabelright}}
{\def\ylabelright{Profit [\si{\usd\per\second}]}}
{\def\ylabelright{}}

\def\widthplot{4.0cm}

\begin{tikzpicture}[scale=0.8]
    \begin{axis}[width=\widthplot,height=4.0cm,
                 ymin=0,ymax=100,
                 xtick={\xticksPlot},
                 ytick={0,10,...,100},
                 x tick label style={/pgf/number format/set decimal separator={.}},
                 enlarge x limits=+0.05,
                 xlabel={\xlabelPlot},
                 ylabel={\ylabelleft},
                 ylabel shift=-9pt,
                 grid=major,
                 ybar stacked,
                 bar width=\barwidthPlot]
        \addplot[ybar stacked,ybar legend,mark=none,color=amod,fill=amod] table[x index=0,y index=1] {\dataTable};
        \addplot[ybar stacked,ybar legend,mark=none,color=amod,fill=amod1] table[x index=0,y index=2] {\dataTable};
        \addplot[ybar stacked,ybar legend,mark=none,color=publicTransit,fill=publicTransit] table[x index=0,y index=3] {\dataTable};
        \addplot[ybar stacked,ybar legend,mark=none,color=walking,fill=walking] table[x index=0,y index=4] {\dataTable};
    \end{axis}
    
    \begin{axis}[width=\widthplot,height=4.0cm,
                 ymin=0,
                 ymax=45,
                 ytick={\empty},
                 yticklabels={},
                 axis y line*=right,
                 axis x line=none,
                 enlarge x limits=+0.05,
                 ylabel={\ylabelright},
                 ylabel shift=-1pt,
                 legend style={at={(0.6,1.04)},
                               anchor=south west,
                               legend columns=3}]
        \addlegendimage{ybar stacked,ybar legend,mark=none,color=amod,fill=amod,xshift=0.5em}
        \addlegendentry{\acrshort{abk:amod} 1}
        \addlegendimage{ybar stacked,ybar legend,mark=none,color=amod1,fill=amod1,xshift=0.5em}
        \addlegendentry{\acrshort{abk:amod} 2}
        \addlegendimage{ybar stacked,ybar legend,mark=none,color=publicTransit,fill=publicTransit,xshift=0.5em}
        \addlegendentry{Public transport}
        \addlegendimage{ybar stacked,ybar legend,mark=none,color=walking,fill=walking,xshift=0.5em}
        \addlegendentry{Walking}
        \addplot[thick,black,mark=o,mark size=2.5pt] table[x index=0,y index=5] {\dataTable};
        \addlegendentry{Profit \acrshort{abk:amod} 1};
        \addplot[thick,black,mark=*,mark size=0.8pt,mark options={black,fill=black}] table[x index=0,y index=6] {\dataTable};
        \addlegendentry{Profit \acrshort{abk:amod} 2};
        \ifthenelse{\boolean{showlegend}}{}{\legend{};}
    \end{axis}
\end{tikzpicture}
		\caption{Identical \gls{abk:amod}s.}
		\label{fig:e7 competitive identical}
	\end{subfigure}
	\hfill  
	\begin{subfigure}[t]{0.325\columnwidth}
		\centering
		\setboolean{showylabelleft}{true}
		\setboolean{showylabelright}{true}
		\setboolean{showlegend}{false}
		\pgfplotstableread[col sep=comma]{graphics/matlab/BerlinTax/BerlinTaxCompetitive_Nonidentical.csv}{\dataTable}

\ifthenelse{\boolean{showylabelleft}}
{\def\ylabelleft{Modal share [\%]}}
{\def\ylabelleft{}}

\ifthenelse{\boolean{showylabelright}}
{\def\ylabelright{Profit [\si{\usd\per\second}]}}
{\def\ylabelright{}}

\def\widthplot{4.0cm}

\begin{tikzpicture}[scale=0.8]
    \begin{axis}[width=\widthplot,height=4.0cm,
                 ymin=0,ymax=100,
                 xtick={\xticksPlot},
                 ytick={0,10,...,100},
                 x tick label style={/pgf/number format/set decimal separator={.}},
                 enlarge x limits=+0.05,
                 xlabel={\xlabelPlot},
                 ylabel={\ylabelleft},
                 ylabel shift=-9pt,
                 grid=major,
                 ybar stacked,
                 bar width=\barwidthPlot]
        \addplot[ybar stacked,ybar legend,mark=none,color=amod,fill=amod] table[x index=0,y index=1] {\dataTable};
        \addplot[ybar stacked,ybar legend,mark=none,color=amod,fill=amod1] table[x index=0,y index=2] {\dataTable};
        \addplot[ybar stacked,ybar legend,mark=none,color=publicTransit,fill=publicTransit] table[x index=0,y index=3] {\dataTable};
        \addplot[ybar stacked,ybar legend,mark=none,color=walking,fill=walking] table[x index=0,y index=4] {\dataTable};
    \end{axis}
    
    \begin{axis}[width=\widthplot,height=4.0cm,
                 ymin=0,
                 ymax=45,
                 ytick={\empty},
                 yticklabels={},
                 axis y line*=right,
                 axis x line=none,
                 enlarge x limits=+0.05,
                 ylabel={\ylabelright},
                 ylabel shift=-1pt,
                 legend style={at={(0.6,1.04)},
                               anchor=south west,
                               legend columns=3}]
        \addlegendimage{ybar stacked,ybar legend,mark=none,color=amod,fill=amod,xshift=0.5em}
        \addlegendentry{\acrshort{abk:amod} 1}
        \addlegendimage{ybar stacked,ybar legend,mark=none,color=amod1,fill=amod1,xshift=0.5em}
        \addlegendentry{\acrshort{abk:amod} 2}
        \addlegendimage{ybar stacked,ybar legend,mark=none,color=publicTransit,fill=publicTransit,xshift=0.5em}
        \addlegendentry{Public transport}
        \addlegendimage{ybar stacked,ybar legend,mark=none,color=walking,fill=walking,xshift=0.5em}
        \addlegendentry{Walking}
        \addplot[thick,black,mark=o,mark size=2.5pt] table[x index=0,y index=5] {\dataTable};
        \addlegendentry{Profit \acrshort{abk:amod} 1};
        \addplot[thick,black,mark=*,mark size=0.8pt,mark options={black,fill=black}] table[x index=0,y index=6] {\dataTable};
        \addlegendentry{Profit \acrshort{abk:amod} 2};
        \ifthenelse{\boolean{showlegend}}{}{\legend{};}
    \end{axis}
\end{tikzpicture}
		\caption{Non-ident. \gls{abk:amod}s.}
		\label{fig:e7 competitive nonidentical}
	\end{subfigure}
	\hfill  
	\begin{subfigure}[t]{0.325\columnwidth}
		\centering
		\setboolean{showylabelleft}{true}
		\setboolean{showylabelright}{true}
		\setboolean{showlegend}{false}
		\pgfplotstableread[col sep=comma]{graphics/matlab/BerlinTax/BerlinTax.csv}{\dataTable}

\ifthenelse{\boolean{showylabelleft}}
{\def\ylabelleft{Modal share [\%]}}
{\def\ylabelleft{}}

\ifthenelse{\boolean{showylabelright}}
{\def\ylabelright{Profit [\si{\usd\per\second}]}}
{\def\ylabelright{}}

\def\widthplot{4.0cm}

\begin{tikzpicture}[scale=0.8]
    \begin{axis}[width=\widthplot,height=4.0cm,
                 ymin=0,ymax=100,
                 xtick={\xticksPlot},
                 ytick={0,10,...,100},
                 x tick label style={/pgf/number format/set decimal separator={.}},
                 enlarge x limits=+0.05,
                 xlabel={\xlabelPlot},
                 ylabel={\ylabelleft},
                 ylabel shift=-9pt,
                 grid=major,
                 ybar stacked,
                 bar width=\barwidthPlot]
        \addplot[ybar stacked,ybar legend,mark=none,color=amod,fill=amod] table[x index=0,y index=1] {\dataTable};
        \addplot[ybar stacked,ybar legend,mark=none,color=publicTransit,fill=publicTransit] table[x index=0,y index=2] {\dataTable};
        \addplot[ybar stacked,ybar legend,mark=none,color=walking,fill=walking] table[x index=0,y index=3] {\dataTable};
    \end{axis}
    
    \begin{axis}[width=\widthplot,height=4.0cm,
                 ymin=0,
                 ymax=45,
                 ytick={\empty},
                 yticklabels={},
                 axis y line*=right,
                 axis x line=none,
                 enlarge x limits=+0.05,
                 ylabel={\ylabelright},
                 ylabel shift=-1pt,
                 legend style={at={(-0.1,1.04)},
                               anchor=north west,
                               legend columns=3}]
        \addlegendimage{ybar stacked,ybar legend,mark=none,color=amod,fill=amod,xshift=0.5em}
        \addlegendentry{\acrshort{abk:amod}}
        \addlegendimage{ybar stacked,ybar legend,mark=none,color=publicTransit,fill=publicTransit,xshift=0.5em}
        \addlegendentry{Public transport}
        \addlegendimage{ybar stacked,ybar legend,mark=none,color=walking,fill=walking,,xshift=0.5em}
        \addlegendentry{Walking}
        \addplot[thick,black,mark=o,mark options={black,fill=black}] table[x index=0,y index=4] {\dataTable};
        \addlegendentry{Profit \acrshort{abk:amod}};
        \ifthenelse{\boolean{showlegend}}{}{\legend{};}
    \end{axis}
\end{tikzpicture}
		\caption{Single \gls{abk:amod}.}
		\label{fig:e7 single}
	\end{subfigure}
	\caption{Impact of a service tax on the modal share and the \gls{abk:amod} operators' profit.}
	\label{fig:e7}
\end{figure}
Interestingly, we observe overproportionally decreasing profits for the larger operator in the two non-identical \gls{abk:amod} operator case. Here, the increasing tax share shrinks the set of potentially profitable customer trips equally for both operators. Accordingly, with increasing tax shares, the dominant influence on the equilibrium shifts from the fleet size to the competition between \gls{abk:amod} operators in a saturated market, which explains the steeper decrease in profit for the larger operator.
\section{Conclusions}\label{sec:conclusion}
With this work, we focused on the interplay between \gls{abk:amod} fleets and public transportation. To this end, we developed a general methodological framework to model interactions among \glspl{abk:msp} and between \glspl{abk:msp} and customers.
Our framework combines a graph-theoretic network flow model with a game-theoretic approach to capture both the interactions between \glspl{abk:msp} and customers on the transportation market place and the constraints that result from the transportation network. We specified this framework for our application case, focusing on the interactions among two \gls{abk:amod} fleet operators, a municipality, and customers. We developed a computationally tractable second-order conic program to compute best responses, which can be used to find an equilibrium of the resulting game.
We applied our methodology to a real-world case study for the city of Berlin, and we presented results for various settings to identify major impact factors.
\appendices
\section{Graph Theory}
\label{app:graphtheory}

\begin{definition}[Flow]
	\label{definition:flow}
	A flow $\flow{}$ is a pair $\flow{}=(\path{},\flowRate{})\in\pathsSet{\arcsSet{}}\times\positiveReals$, denoting the customer rate $\flowRate{}$ that uses a path $\path{}$. We introduce $\flowsSetGraph{\graph{}}$ as the set of all flows on $\graph{}$ and the projection operators $\projectionPathFlow{\flow{}}=\path{}$ and $\projectionRateFlow{\flow{}}=\flowRate{}$ mapping a flow to its path $\path{}$ and its flow rate $\flowRate{}$, respectively.
\end{definition}

\begin{definition}[Multigraph]
	\label{definition:multigraph}
	A multigraph $\graph{}$ is a quadruple $(\verticesSet{},\arcsSet{},\arcSource{},\arcTarget{})$ such that $\verticesSet{}$ is the set of vertices, $\arcsSet{}$ is the set of arcs, $\arcSource{}:\arcsSet{}\to\verticesSet{}$ assigns to each arc its source vertex, and $\arcTarget{}:\arcsSet{}\to\verticesSet{}$ assigns to each arc its sink vertex.
\end{definition}
	
\begin{definition}[Path]
	\label{definition:path}	
	We refer to a path of length $L\in\naturals$ as a set of distinct arcs $\{\arc{1},\ldots,\arc{L-1}\}$ for which there exists a set of exactly $L+1$ distinct vertices $\{\vertex{1},\ldots,\vertex{L}\}$ such that $\arcSource{}(\arc{i})=\vertex{i}$ and $\arcTarget{}(\arc{i})=\vertex{i+1}$ for all $i\in\{1,\ldots,L-1\}$. Note that by definition such a path cannot contain cycles. 
	Let $\pathsSet{\arcsSet{}}$ be the set of all paths.
\end{definition}

\begin{definition}[Origin and destination]
	\label{definition:origin and destination}
	Given a path $\path{}=\{\arc{1},\ldots,\arc{L-1}\}$ on $\graphDefinition{}$ we define the path origin and destination functions as $\originPath:\pathsSet{\arcsSet{}}\to\verticesSet{}$, $\destinationPath:\pathsSet{\arcsSet{}}\to\verticesSet{}$.
\end{definition}

\begin{definition}[Fully-connected]
	\label{definition:fully-connected}
	A multigraph $\graphDefinition{}$ is fully-connected if for all \mbox{$\vertex{1},\vertex{2}\in\verticesSet{}$}, $\vertex{1}\neq\vertex{2}$, there is $\arc{}\in\arcsSet{}$ such that $\arcSource{}(\arc{})=\vertex{1}$, $\arcTarget{}(\arc{})=\vertex{2}$.
\end{definition}

\begin{definition}[Fully-conn. graph]
	\label{definition:fully-connected graph}
	Let $\graphDefinition{}$ with $\verticesSet{}=\{\vertex{1},\ldots,\vertex{\verticesNumber{}}\}$. The fully-connected version of $\graph{}$ is $\graphDefinition[bar]{}$, where $\verticesSet[bar]{}=\verticesSet{}$, $\arcsSet[bar]{}=\{\arc{1,2},\arc{1,3},\ldots,\arc{\verticesNumber{},\verticesNumber{}-1}\}$, 	$\arcSource[bar]{}(\arc{i,j})= \vertex{i}$, and $\arcTarget[bar]{}(\arc{i,j}) = \vertex{j}$.
\end{definition}

\begin{definition}[Shortest path]
	Consider a multigraph $\graphDefinition{}$ and a non-negative function $f:\arcsSet{}\to\nonnegativeReals$. A path $\path{}$ is a shortest path between $\vertex{1}\in\verticesSet{}$ and $\vertex{2}\in\verticesSet{}$ if (\textit{i}) $\originPath(\path{})=\vertex{1}$ and $\destinationPath(\path{})=\vertex{2}$ and (\textit{ii}) it minimizes $\sum_{\arc{}\in\path{}}f(\arc{})$. We denote the set of shortest paths by $\shortestPath{}(\vertex{1},\vertex{2})$.
\end{definition}

\section{Proofs}
\label{app:proofs}

\def\proof{\noindent\hspace{2em}{\itshape Proof of \cref{lemma:reaction curve}: }}
\begin{proof}
The reaction curves arise directly from the definition. Indeed, let
\begin{equation*}
\begin{aligned}
	m_{i,1}
	&\coloneqq
	\begin{cases}
	-\frac{\demandRate{i}}{(\valueTimeMax-\valueTimeMin) \Delta t_i} & \text{if }\timePublicTransit{i}>\timeRoad{i}, \\
	-\frac{\demandRate{i}}{(\valueTimeMax-\valueTimeMin) \Delta t_i} & \text{if }\timePublicTransit{i}<\timeRoad{i},
	\end{cases}
	\\
	q_{i,1}
	&\coloneqq
	\begin{cases}
	-m_{i,1}\left(\valueTimeMax\Delta t_i + \pricePublicTransit{i}\right) & \text{if }\timePublicTransit{i}>\timeRoad{i},\\
	-m_{i,1}\left(-\valueTimeMin\Delta t_i + \pricePublicTransit{i}\right) & \text{if }\timePublicTransit{i}<\timeRoad{i},\\
	\end{cases}
\end{aligned}
\end{equation*}
with $\Delta t_i = |\timePublicTransit{i}-\timeRoad{i}|$, and let $m_{i,2}\coloneqq-\frac{1}{\bar\varepsilon}$, $q_{i,2}(\priceStrategy{-j}(\originVertex{i},\destinationVertex{i}))\coloneqq
	\frac{\demandRate{i}}{2}-m_{i,2}\priceStrategy{-j}(\originVertex{i},\destinationVertex{i})$, $m_{i,3}\coloneqq m_{i,2}+\frac{m_{i,1}}{2}$, and $q_{i,3}(\priceStrategy{-j}(\originVertex{i},\destinationVertex{i}))
	\coloneqq
	\frac{m_{i,1}\priceStrategy{-j}(\originVertex{i},\destinationVertex{i})+q_{i,1}}{2}-m_{i,3}\priceStrategy{-j}(\originVertex{i},\destinationVertex{i})$,  
where $\valueTimeMin$ and $\valueTimeMax$ are minimum and maximum values of the distribution of the value of time and $\bar\varepsilon$ is the width of the distribution of $\varepsilon_2$ (i.e., $\varepsilon_2$ is uniformly distributed between $-\bar\varepsilon/2$ and $\bar\varepsilon/2$). Recall that we assumed $\varepsilon_1=0$.
Using the definitions of argmax and of minimum and the computation of the expected value, we directly get~\eqref{eq:reaction curve}.
Clearly, $h_{i,j}$ is continuous and strictly decreasing. Hence, $p_{i,j}^\mathrm{min}$ and $p_{i,j}^\mathrm{max}$ are well-defined. Since $m_{i,1}$ and $q_{i,1}$ are both independent of $\priceStrategy{-j}$, $p_{i,j}^\mathrm{min}$ and $p_{i,j}^\mathrm{max}$ are upper bounded by $\max_{i\in\demandIndexSet}p_i^\ast<+\infty$, where $p_i^\ast$ is such that $m_{i,1}p_i^\ast+q_i=0$. 
Moreover, we directly see that $q_{i,2}$ and $q_{i,3}$ change linearly with $\priceStrategy{-j}$, which directly implies that $h_{i,j}$ is uniformly continuous with respect to $\priceStrategy{-j}$. 
Finally, since $h_{i,j}$ is continuous and $[p_{i,j}^\mathrm{min},p_{ij}^\mathrm{max}]$ is compact, $h_{i,j}^{-1}$ is continuous, too. So, $h_{i,j}$ is a homeomorphism. 
\end{proof}

\def\proof{\noindent\hspace{2em}{\itshape Proof of \cref{proposition:best response}: }}
\begin{proof}
We prove the proposition in four steps.
First, we observe that prices above $p_{i,j}^\mathrm{max}$ do not lead to larger profits. Indeed, such a price will result in $\reactionCurve{i}(\pathRoad{i,j})=0$, and therefore in the same revenue as the price $p_{i,j}^\mathrm{max}$.
Similarly, prices below $p_{i,j}^\mathrm{min}$ are never optimal. Indeed, they lead to the same reaction curve as $p_{i,j}^\mathrm{min}$, namely $\reactionCurve{i}(\pathRoad{i,j})=\demandRate{i}$, but result in a smaller profit due to a smaller revenue but equal costs. Hence, we can without loss of generality assume $\priceStrategy{j}(\originVertex{i},\destinationVertex{i})\in[p_{i,j}^\mathrm{min}, p_{i,j}^\mathrm{max}]$. 
Second, by the first step and \cref{lemma:reaction curve}, operator $j$ can optimize her profit by taking $x_i\coloneqq\reactionCurve{i}(\pathRoad{i,j})$ instead of the pricing strategies $\priceStrategy{j}$ as optimization variables. Indeed, there exists a bijection between $\reactionCurve{i}(\pathRoad{i,j})$ and $\priceStrategy{j}(\originVertex{i},\destinationVertex{i})$. Accordingly, the revenue of serving demand $\demand{i}$ becomes $x_i h_{i,j}^{-1}(x_{i})$. Using an epigraph formulation, we can equivalently express it through the variable $r_i$. Imposing $r_i\geq 0$ ensures that only non-negative prices are chosen.  
Third, we notice that the optimal profit is attained when the set of flows that serves the $i$\textsuperscript{th} demand coincides with the shortest path, computed by weighing each arc $\arc{}\in\arcsSet{j}$ with $\operationCostFunction{j}(\arc{})$. Hence, $\flowsSet{i}=\{f_{i,j}^\ast\}$. Indeed, any other set of flows would incur larger costs, and thus smaller profit. 
Fourth, combining all observations leads to the second-order conic program \eqref{eq:best response socp}. Then, best response results from ``inverting'' $x_i^\ast$. Formally, if $x_i^\ast>0$ then $\priceStrategy{j}(\originVertex{i},\destinationVertex{i})=h_{i,j}^{-1}(x_i^\ast)$. If $x_i^\ast=0$, then $\priceStrategy{j}(\originVertex{i},\destinationVertex{i})\in[p_{i,j}^\mathrm{max},+\infty)$. If no demand goes from $\originVertex{i}$ and $\destinationVertex{i}$, then $\priceStrategy{j}(\originVertex{i},\destinationVertex{i})\in\nonnegativeReals$.  
\end{proof}


For \cref{theorem:existence} we need the following well-known lemma.  

\begin{lemma}
\label{lemma:existence equilibria}
Identify $\priceStrategiesSet{j}$ with a subset of $\nonnegativeReals^{\cardinality{\verticesSet{j}}^2}$ and assume it is non-empty, compact, and convex. 
Define the best response map $\bestResponse{j}:\priceStrategiesSet{-j}\to 2^{\priceStrategiesSet{j}}$ of operator $j$ as $\bestResponse{j}(\priceStrategy{-j})\coloneqq\arg\max_{\priceStrategy{j}}\utility_j(\priceStrategy{j},\{\customersEquilibria{i}(\priceStrategy{j},\priceStrategy{-j})\}_{i=1}^{\demandNumber}).$
Assume that for all $j\in\operatorsIndexSet$ the best response map $\bestResponse{j}$  has closed graph and that $\bestResponse{j}(\priceStrategy{-j})$ is non-empty and convex for each $\priceStrategy{-j}\in\prod_{k=1,k\neq j}^{\operatorsNumber}\priceStrategiesSet{k}$.
Then, the game admits an equilibrium.
\end{lemma}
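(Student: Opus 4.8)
The plan is to read this as the classical existence-of-equilibrium argument, i.e.\ an application of Kakutani's fixed-point theorem to the aggregate best-response correspondence. First I would bundle the operators' individual best-response maps into one set-valued map on the joint strategy space. Put $\Xi \coloneqq \prod_{j=1}^{\operatorsNumber}\priceStrategiesSet{j}$ and define $B\colon\Xi\to 2^{\Xi}$ by $B(\{\priceStrategy{j}\}_{j=1}^{\operatorsNumber}) \coloneqq \prod_{j=1}^{\operatorsNumber}\bestResponse{j}(\priceStrategy{-j})$. By the definition of $\bestResponse{j}$, a profile $\{\priceStrategy[star]{j}\}_{j=1}^{\operatorsNumber}$ is a fixed point of $B$ if and only if $\priceStrategy[star]{j}\in\bestResponse{j}(\priceStrategy[star]{-j})$ for every $j\in\operatorsIndexSet$, which is exactly the inequality defining an equilibrium in \cref{definition:game equilibrium}. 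Hence it suffices to produce a fixed point of $B$.

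The key step is then to verify the hypotheses of Kakutani's theorem for $B$ on $\Xi$. First, identifying each $\priceStrategiesSet{j}$ with a subset of $\nonnegativeReals^{\cardinality{\verticesSet{j}}^2}$ as in the statement, $\Xi$ is a finite Cartesian product of non-empty, compact, convex sets and is therefore itself a non-empty, compact, convex subset of a finite-dimensional Euclidean space. Second, $B$ is non-empty- and convex-valued, since each of its values is a product of the sets $\bestResponse{j}(\priceStrategy{-j})$, which are non-empty and convex by hypothesis, and a finite product of non-empty convex sets is again non-empty and convex. Third, $B$ has closed graph: if $\xi^{(n)}\to\xi$ in $\Xi$ and $\eta^{(n)}\to\eta$ with $\eta^{(n)}\in B(\xi^{(n)})$, then for each $j$ the block $\eta^{(n)}_j$ lies in $\bestResponse{j}(\xi^{(n)}_{-j})$, and since $\xi^{(n)}_{-j}\to\xi_{-j}$, $\eta^{(n)}_j\to\eta_j$, and $\bestResponse{j}$ has closed graph by assumption, we obtain $\eta_j\in\bestResponse{j}(\xi_{-j})$; thus $\eta\in B(\xi)$. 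With these three properties, Kakutani's fixed-point theorem provides $\xi^\star\in\Xi$ with $\xi^\star\in B(\xi^\star)$, and by the first paragraph $\xi^\star$ is an equilibrium of the game.

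I do not expect a genuine obstacle inside this proof: the lemma is phrased precisely so that Kakutani applies essentially verbatim, and the only point meriting the short argument above is the third one, deducing closedness of the graph of the product correspondence from closedness of the graphs of its factors. The part that is actually delicate lies upstream, namely checking that this lemma's hypotheses hold for the concrete \gls{abk:amod} game: non-emptiness and convexity of the values $\bestResponse{j}(\priceStrategy{-j})$ follow from \cref{proposition:best response}, since each best response is recovered from the solution set of a feasible convex (second-order conic) program, while the closed-graph property of $\bestResponse{j}$ rests on the continuity of the customers' reaction curves in the rival operator's price, which is exactly the content of \cref{lemma:reaction curve}. Those verifications, however, are carried out when combining the present lemma with \cref{lemma:reaction curve,proposition:best response} and are not part of this statement.
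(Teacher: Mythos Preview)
Your proposal is correct and follows essentially the same approach as the paper: bundle the individual best-response maps into a product correspondence on $\prod_{j}\priceStrategiesSet{j}$, verify the hypotheses of Kakutani's fixed-point theorem (non-empty compact convex domain, non-empty convex values, closed graph), and identify the resulting fixed point with an equilibrium. If anything, you are slightly more explicit than the paper, which merely asserts that the product correspondence inherits the closed-graph property and then cites Kakutani.
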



\def\proof{\noindent\hspace{2em}{\itshape Proof of \cref{lemma:existence equilibria}: }}
\begin{proof}
Define $\priceStrategiesSet{}\coloneqq\prod_{j=1}^{\operatorsNumber}\priceStrategiesSet{j}$ and $\bestResponse{}:\priceStrategiesSet{}\to 2^{\priceStrategiesSet{}}$ by 
    $\bestResponse{}(\priceStrategy{1},\ldots,\priceStrategy{\operatorsNumber})
    \coloneqq
    \left\{(\priceStrategy[bar]{1},\ldots,\priceStrategy[bar]{\operatorsNumber})
    \,|\,
    \priceStrategy[bar]{j}\in\bestResponse{j}(\priceStrategy{-j})\:\forall\,j=1,\ldots,\operatorsNumber\right\}.$
Since $\priceStrategiesSet{}$ is the Cartesian product of non-empty, convex, and compact sets, it is non-empty, convex, and compact.
Similarly, $\bestResponse{}$ has closed graph and $\bestResponse{}(\priceStrategy{1},\dots,\priceStrategy{\operatorsNumber})$ is non-empty and convex for all $\priceStrategy{1},\dots,\priceStrategy{\operatorsNumber}\in\priceStrategiesSet[]{}$.
Then, an equilibrium is a fixed point of $\bestResponse{}$. Its existence follow from Kakutani's fixed point theorem and is a standard result in game theory.
\end{proof}


\def\proof{\noindent\hspace{2em}{\itshape Proof of \cref{theorem:existence}: }}
\begin{proof}
It suffices to show that the assumptions of \cref{lemma:existence equilibria} hold true. Then, existence of an equilibrium follows directly.
First, we identify the space of pricing strategies with $\nonnegativeReals^{\cardinality{\verticesSet{j}}^2}$. Then, convexity and non-emptiness of $\priceStrategiesSet{j}$ follow straightforwardly.
We now show that there is no loss of generality in assuming compactness of $\priceStrategiesSet{j}$. Recall that by \cref{lemma:reaction curve} $p_{i,j}^\mathrm{max}$ is uniformly upper bounded by some $M>0$. Consider a modified version of the game whereby all prices are upper bounded by $M$. If this game possesses an equilibrium, then this equilibrium is an equilibrium in the original game as well, as showed in the first step of the proof of \cref{proposition:best response}.
Hence, $\priceStrategiesSet{j}$ can be assumed to be compact. 
Second, the proof of \cref{proposition:best response} shows that $\bestResponse{j}(\priceStrategy{-j})$ is non-empty and convex for each $\priceStrategy{-j}$. Indeed, $\bestResponse{j}(\priceStrategy{-j})$ consists of the cartesian product of sets of the form $\{p\}$ for appropriate $p\in\nonnegativeReals$, $[p_{ij}^\mathrm{max},M]$, or $[0,M]$. Note also that $\bestResponse{j}(\priceStrategy{-j})$ is closed, being the cartesian product of closed sets.  
Third, we need to establish that $\bestResponse{j}$ has closed graph. Since $\bestResponse{j}(\priceStrategy{-j})$ is closed, it suffices to prove that it is upper hemicontinuous; i.e., if $\priceStrategy{-j,n}\to\priceStrategy{-j}$ and $\priceStrategy{j,n}\to\priceStrategy{j}$ with $\priceStrategy{j,n}\in\bestResponse{j}(\priceStrategy{-j})$, then $\priceStrategy{j}\in\bestResponse{j}(\priceStrategy{-j})$.
The optimization problem is strongly convex in each $x_i$, so its optimal $x_i^\ast$ is unique and continuous with respect to the parameters $q_{1,i}$ and $q_{2,i}$. Moreover, \cref{lemma:reaction curve} implies that $q_{1,i}$ and $q_{2,i}$ are continuous in $\priceStrategy{-j}(\originVertex{i},\destinationVertex{i})$, and in particular that $p_{i,j}^\mathrm{max}$ and $p_{i,j}^\mathrm{min}$ are continuous in $\priceStrategy{-j}(\originVertex{i},\destinationVertex{i})$. Thus, $\bestResponse{j}(\priceStrategy{-j})$ results from the cartesian product of closed sets, whose limits are continuous in $\priceStrategy{-j}$. Hence, let $\priceStrategy{-j,n}\to\priceStrategy{-j}$ and $\priceStrategy{j,n}\to\priceStrategy{j}$ with $\priceStrategy{j,n}\in\bestResponse{j}(\priceStrategy{-j,n})$. Since all bounds of the involved sets are continuous in $\priceStrategy{-j}$, we only need to focus on the case when $\priceStrategy{j,n}\in\bestResponse{j}(\priceStrategy{-j,n})$ is of the form $[p,M]$ for all $n$ sufficiently large, but $\priceStrategy{j}\in\bestResponse{j}(\priceStrategy{-j})$ is a singleton. Yet, we can rule out this case by observing that a $\priceStrategy{-j,n}$ that results in this case does not exist, due to continuity and closedness of all sets. 
Then, \cref{lemma:existence equilibria} gives existence of an equilibrium.
\end{proof}

\def\proof{\noindent\hspace{2em}{\itshape Proof of \cref{theorem:equilibrium linear}: }}
\begin{proof}
We can see the single operator case as the special case, simply by letting $\priceStrategy{-j}(\originVertex{},\destinationVertex{})\to\infty$ for all $\originVertex{},\destinationVertex{}\in\verticesSet{-j}$. If so, $q_{2,i},q_{3,i}\to\infty$, which gives $r_{i}=x_i\cdot(x_i-q_{1,i})/m_{1,i}$. This simplifies the computation of the best response (and thus of the equilibium) to a quadratic program.
Since $m_{1,i}<0$, the objective function is strictly concave in $x_i$, which establishes the uniqueness of the optimal solution $x_i^\ast$. Thus, we can conclude that all equilibria result in the same reaction curves and profits.
\end{proof}

\bibliographystyle{IEEEtran}
\bibliography{main}

\begin{thebibliography}{10}
\providecommand{\url}[1]{#1}
\csname url@samestyle\endcsname
\providecommand{\newblock}{\relax}
\providecommand{\bibinfo}[2]{#2}
\providecommand{\BIBentrySTDinterwordspacing}{\spaceskip=0pt\relax}
\providecommand{\BIBentryALTinterwordstretchfactor}{4}
\providecommand{\BIBentryALTinterwordspacing}{\spaceskip=\fontdimen2\font plus
\BIBentryALTinterwordstretchfactor\fontdimen3\font minus \fontdimen4\font\relax}
\providecommand{\BIBforeignlanguage}[2]{{%
\expandafter\ifx\csname l@#1\endcsname\relax
\typeout{** WARNING: IEEEtran.bst: No hyphenation pattern has been}%
\typeout{** loaded for the language `#1'. Using the pattern for}%
\typeout{** the default language instead.}%
\else
\language=\csname l@#1\endcsname
\fi
#2}}
\providecommand{\BIBdecl}{\relax}
\BIBdecl

\bibitem{Frakt2019}
\BIBentryALTinterwordspacing
A.~Frakt. (2019) Stuck and stressed: the health costs of traffic. The New York Times. Accessed: July 27th 2019. [Online]. Available: \url{https://www.nytimes.com/2019/01/21/upshot/stuck-and-stressed-the-health-costs-of-traffic.html}
\BIBentrySTDinterwordspacing

\bibitem{INRIX2017}
\BIBentryALTinterwordspacing
INRIX. (2019) Los {A}ngeles tops {INRIX} global congestion ranking. INRIX. Accessed: July 27th 2019. [Online]. Available: \url{http://inrix.com/press-releases/scorecard-2017/}
\BIBentrySTDinterwordspacing

\bibitem{HuWang2018}
\BIBentryALTinterwordspacing
W.~Hu and V.~Wang. (2018) Congestion plan for {Manhattan} gets mixed reviews. Accessed: July 27th 2019. [Online]. Available: \url{https://www.nytimes.com/2018/01/19/nyregion/mayor-congestion-pricing-cuomo.html}
\BIBentrySTDinterwordspacing

\bibitem{Hu2019}
\BIBentryALTinterwordspacing
W.~Hu. (2019) Congestion pricing: {N.Y.} embraced it. {W}ill other clogged cities follow? The New York Times. Accessed: July 27th 2019. [Online]. Available: \url{https://www.nytimes.com/2019/04/01/nyregion/new-york-congestion-pricing.html}
\BIBentrySTDinterwordspacing

\bibitem{Alonso-Mora2017}
J.~Alonso-Mora, S.~Samaranayake, A.~Wallar, E.~Frazzoli, and D.~Rus, ``On-demand high-capacity ride-sharing via dynamic trip-vehicle assignment,'' \emph{Proceedings of the National Academy of Sciences}, vol. 114, no.~3, pp. 462--467, 2017.

\bibitem{Ostrovsky2019}
M.~Ostrovsky and M.~Schwarz, ``Carpooling and the economics of self-driving cars,'' in \emph{Proceedings of the 2019 ACM Conference on Economics and Computation}, ser. EC '19.\hskip 1em plus 0.5em minus 0.4em\relax New York, NY, USA: ACM, 2019, pp. 581--582.

\bibitem{McKinsey2017}
\BIBentryALTinterwordspacing
A.~Dhar, D.~Patel, R.~Raina, and P.~Sandrone. (2017) What us consumers think of shared mobility. McKinsey Center for Future Mobility. McKinsey \& Company. Accessed: July 28th 2019. [Online]. Available: \url{https://www.mckinsey.com/industries/automotive-and-assembly/our-insights/what-us-consumers-think-of-shared-mobility}
\BIBentrySTDinterwordspacing

\bibitem{Oh2020}
S.~Oh, R.~Seshadri, C.~L. Azevedo, N.~Kumar, K.~Basak, and M.~Ben-Akiva, ``Assessing the impacts of automated mobility-on-demand through agent-based simulation: A study of singapore,'' \emph{Transportation Research Part A: Policy and Practice}, vol. 138, pp. 367--388, 2020.

\bibitem{Hall2018}
J.~D. Hall, C.~Palsson, and J.~Price, ``Is {U}ber a substitute or complement for public transit?'' \emph{{Journal of Urban Economics}}, vol. 108, pp. 36--50, 2018.

\bibitem{Jin2019}
S.~T. Jin, H.~Kong, and D.~Z. Sui, ``{Uber}, public transit, and urban transportation equity: A case study in {New York City},'' \emph{The Professional Geographer}, vol.~71, no.~2, pp. 315--330, 2019.

\bibitem{Cantarella1997}
G.~E. Cantarella, ``A general fixed-point approach to multimode multi-user equilibrium assignment with elastic demand,'' \emph{Transportation Science}, vol.~31, no.~2, pp. 107--128, 1997.

\bibitem{Patriksson2015}
M.~Patriksson, \emph{The traffic assignment problem: models and methods}.\hskip 1em plus 0.5em minus 0.4em\relax Courier Dover Publications, 2015.

\bibitem{Brotcorne2001}
L.~Brotcorne, M.~Labbé, P.~Marcotte, and G.~Savard, ``A bilevel model for toll optimization on a multicommodity transportation network,'' \emph{{Transportation Science}}, vol.~35, no.~4, pp. 345--358, 2001.

\bibitem{Bianco2016}
L.~Bianco, M.~Caramia, S.~Giordani, and V.~Piccialli, ``A game-theoretic approach for regulating hazmat transportation,'' \emph{{Transportation Science}}, vol.~50, no.~2, pp. 424--438, 2016.

\bibitem{Kuiteing2017}
A.~K. Kuiteing, P.~Marcotte, and G.~Savard, ``Network pricing of congestion-free networks: the elastic and linear demand case,'' \emph{{Transportation Science}}, vol.~51, no.~3, pp. 791--806, 2017.

\bibitem{Patriksson2002}
M.~Patriksson and T.~Rockafellar, ``A mathematical model and descent algorithm for bilevel traffic management,'' \emph{{Transportation Science}}, vol.~36, no.~3, pp. 271--291, 2002.

\bibitem{Colson2005}
B.~Colson, P.~Marcotte, and G.~Savard, ``Bilevel programming: a survey,'' \emph{4OR}, vol.~3, pp. 87--107, 2005.

\bibitem{Labbe2016}
M.~Labb{\'e} and A.~Violin, ``Bilevel programming and price setting problems,'' \emph{Annals of Operations Research}, vol. 240, no.~1, pp. 141--169, May 2016.

\bibitem{Banerjee2015}
S.~Banerjee, R.~Johari, and C.~Riquelme, ``Pricing in ride-sharing platforms: a queueing-theoretic approach,'' in \emph{Proceedings of the Sixteenth ACM Conference on Economics and Computation}, ser. EC '15.\hskip 1em plus 0.5em minus 0.4em\relax New York, NY, USA: ACM, 2015, pp. 639--639.

\bibitem{Bai2019}
J.~Bai, K.~So, C.~Tang, X.~Chen, and H.~Wang, ``Coordinating supply and demand on an on-demand service platform with impatient customers,'' \emph{{Manufacturing \& Service Operations Management}}, vol.~21, no.~3, pp. 556--570, 2019.

\bibitem{Guda2019}
H.~Guda and U.~Subramanian, ``Your {U}ber is arriving: managing on-demand workers through surge pricing, forecast communication, and worker incentives,'' \emph{{Management Science}}, vol.~65, no.~5, pp. 1995--2014, 2019.

\bibitem{Bimpikis2019}
K.~Bimpikis, O.~Candogan, and D.~Saban, ``Spatial pricing in ride-sharing networks,'' \emph{Operations Research}, vol.~67, no.~3, pp. 744--769, 2019.

\bibitem{Wang2018}
X.~Wang, H.~Yang, and D.~Zhu, ``Driver-rider cost-sharing strategies and equilibria in a ridesharing program,'' \emph{{Transportation Science}}, vol.~52, no.~4, pp. 868--881, 2018.

\bibitem{Rogers2015}
B.~Rogers, ``The social costs of {U}ber,'' \emph{University of Chicago Law Review Online}, vol.~82, pp. 85--102, 2015.

\bibitem{Wallsten2015}
\BIBentryALTinterwordspacing
S.~Wallsten, ``The competitive effects of the sharing economy: how is {U}ber changing taxis?'' 07 2015. [Online]. Available: \url{https://www.ftc.gov/system/files/documents/public_comments/2015/06/01912-96334.pdf}
\BIBentrySTDinterwordspacing

\bibitem{ZardiniLanzettiAnnRev2021}
G.~Zardini, N.~Lanzetti, M.~Pavone, and E.~Frazzoli, ``Analysis and control of autonomous mobility-on-demand systems,'' \emph{Annual Review of Control, Robotics, and Autonomous Systems}, vol.~5, pp. 633--658, 2022.

\bibitem{Narayanan2020}
S.~Narayanan, E.~Chaniotakis, and C.~Antoniou, ``Shared autonomous vehicle services: A comprehensive review,'' \emph{Transportation Research Part C: Emerging Technologies}, vol. 111, pp. 255--293, 2020.

\bibitem{Lee2012}
J.~Lee and B.~Park, ``Development and evaluation of a cooperative vehicle intersection control algorithm under the connected vehicles environment,'' \emph{{IEEE Transactions on Intelligent Transportation Systems}}, vol.~13, no.~1, pp. 81--90, March 2012.

\bibitem{Guler2014}
S.~I. Guler, M.~Menendez, and L.~Meier, ``Using connected vehicle technology to improve the efficiency of intersections,'' \emph{Transportation Research Part C: Emerging Technologies}, vol.~46, pp. 121--131, 2014.

\bibitem{Zhou2017}
M.~Zhou, X.~Qu, and S.~Jin, ``On the impact of cooperative autonomous vehicles in improving freeway merging: a modified intelligent driver model-based approach,'' \emph{{IEEE Transactions on Intelligent Transportation Systems}}, vol.~18, no.~6, pp. 1422--1428, June 2017.

\bibitem{Fagnant2015}
D.~Fagnant and K.~Kockelman, ``Preparing a nation for autonomous vehicles: opportunities, barriers and policy recommendations,'' \emph{Transportation Research Part A: Policy and Practice}, vol.~77, pp. 167--181, 2015.

\bibitem{SalazarLanzettiEtAl2019}
M.~Salazar, N.~Lanzetti, F.~Rossi, M.~Schiffer, and M.~Pavone, ``Intermodal {Autonomous} {Mobility-on-Demand},'' \emph{{IEEE Transactions on Intelligent Transportation Systems}}, vol.~21, no.~9, pp. 3946--3960, 2020.

\bibitem{Banerjee2021}
S.~Banerjee, C.~Hssaine, Q.~Luo, and S.~Samaranayake, ``Plan your system and price for free: Fast algorithms for multimodal transit operations,'' \emph{Available at SSRN}, 2021.

\bibitem{ZardiniLanzettiTNSE2022}
G.~Zardini, N.~Lanzetti, A.~Censi, E.~Frazzoli, and M.~Pavone, ``Co-design to enable user-friendly tools to assess the impact of future mobility solutions,'' \emph{IEEE Transactions on Network Science and Engineering}, vol.~10, no.~2, pp. 827--844, 2022.

\bibitem{Train2009}
K.~Train, \emph{Discrete choice methods with simulation}, 2nd~ed.\hskip 1em plus 0.5em minus 0.4em\relax Cambridge: Cambridge University Press, 2009.

\bibitem{HaklayWeber2008}
M.~Haklay and P.~Weber, ``{OpenStreetMap}: user-generated street maps,'' \emph{{IEEE Pervasive Computing}}, vol.~7, no.~4, pp. 12--18, 2008.

\bibitem{BerlinOpenData}
\BIBentryALTinterwordspacing
VBB. (2019) {VBB}-{F}ahrplandaten via {GTFS}. Berlin Partner and BerlinOnline Stadtportal GmbH \& Co. Accessed: March 15th 2019. [Online]. Available: \url{https://daten.berlin.de/datensaetze/vbb-fahrplandaten-gtfs}
\BIBentrySTDinterwordspacing

\bibitem{HorniAxhausen2016}
A.~Horni, K.~Nagel, and K.~W. Axhausen, Eds., \emph{The multi-agent transport simulation {MATSim}}.\hskip 1em plus 0.5em minus 0.4em\relax London: Ubiquity Press, 2016.

\bibitem{Ziemke2017}
D.~Ziemke, I.~Kaddoura, and K.~Nagel, ``The {MATSim} open {B}erlin scenario: a multimodal agent-based transport simulation scenario based on synthetic demand modeling and open data,'' \emph{Procedia Computer Science}, vol. 151, pp. 870--877, 2019.

\bibitem{Mosendz2014}
\BIBentryALTinterwordspacing
P.~Mosendz and H.~Sender. (2014) Here's how long it takes to get an {U}ber in {U.S.} cities. Newsweek. Accessed: July 29th 2019. [Online]. Available: \url{https://www.newsweek.com/exclusive-heres-how-long-it-takes-get-uber-across-us-cities-289133}
\BIBentrySTDinterwordspacing

\bibitem{TomTom}
\BIBentryALTinterwordspacing
TomTom. (2019) Berlin traffic report. TomTom Traffic Index. Accessed: July 29th 2019. [Online]. Available: \url{https://www.tomtom.com/en_gb/traffic-index/berlin-traffic#statistics}
\BIBentrySTDinterwordspacing

\bibitem{Boesch2017}
P.~Bösch, F.~Becker, H.~Becker, and K.~Axhausen, ``Cost-based analysis of autonomous mobility services,'' \emph{Transport Policy}, vol.~64, pp. 76--91, 2018.

\bibitem{Neumann2019}
\BIBentryALTinterwordspacing
P.~Neumann. (2019) {N}euer {R}ekordwert: {N}och nie gab es in {B}erlin so viele {T}axis. Berliner Zeitung. Accessed: July 28th 2019. [Online]. Available: \url{https://www.berliner-zeitung.de/berlin/verkehr/neuer-rekordwert-noch-nie-gab-es-in-berlin-so-viele-taxis-32037094}
\BIBentrySTDinterwordspacing

\bibitem{Endorf2016}
\BIBentryALTinterwordspacing
{U.S. Department of Transportation}, ``Revised departmental guidance on valuation of travel time in economic analysis,'' Tech. Rep., 2016. [Online]. Available: \url{https://www.transportation.gov/sites/dot.gov/files/docs/2016%20Revised%20Value%20of%20Travel%20Time%20Guidance.pdf}
\BIBentrySTDinterwordspacing

\bibitem{Wadud2017}
Z.~Wadud, ``Fully automated vehicles: a cost of ownership analysis to inform early adoption,'' \emph{Transportation Research Part A: Policy and Practice}, vol. 101, pp. 163--176, 2017.

\bibitem{Welle2018}
\BIBentryALTinterwordspacing
B.~Welle, G.~Petzhold, and F.~M. Pasqual. (2018) Cities are taxing ride-hailing services like {Uber} and {Lyft}. is this a good thing? World Resources Institute. Accessed: July 28th 2019. [Online]. Available: \url{https://www.wri.org/blog/2018/08/cities-are-taxing-ride-hailing-services-uber-and-lyft-good-thing}
\BIBentrySTDinterwordspacing

\end{thebibliography}

\vspace{-0.7cm}

\begin{IEEEbiography}[{\includegraphics[width=1in,height=1.25in,clip,keepaspectratio]{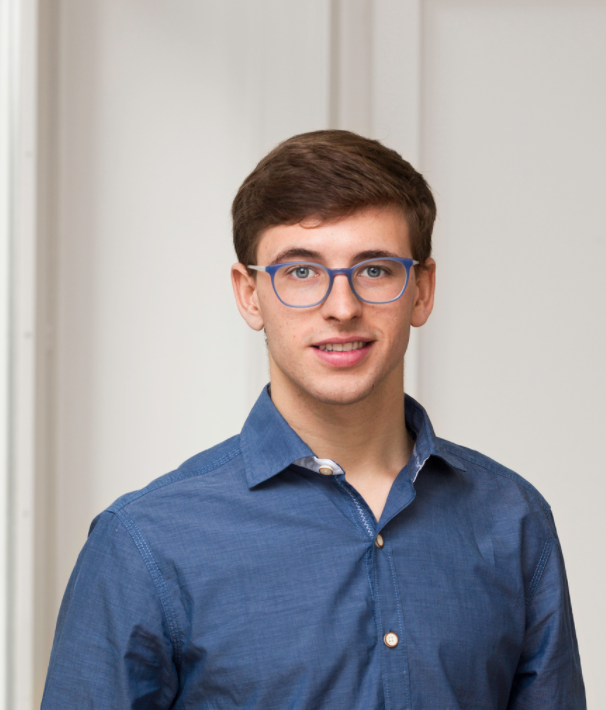}}]{Nicolas Lanzetti}
is a Ph.D. candidate at the Automatic Control Laboratory at ETH Zurich, supervised by Prof. Florian D\"orfler.
He received the B.Sc. and the M.Sc. degrees in mechanical engineering, with focus on Robotics, Systems, and Control from ETH Zürich in 2016 and 2019, respectively.
He was a visiting researcher at Massachusetts Institute of Technology and Stanford University.
His research interests include optimal transport and Wasserstein gradient flows, with applications to optimization and game theory.
\end{IEEEbiography}

\vspace{-0.7cm}

\begin{IEEEbiography}[{\includegraphics[width=1in,height=1.25in,clip,keepaspectratio]{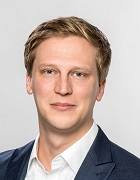}}]{Maximilian Schiffer}
 is an Associate Professor of Business Analytics \& Intelligent Systems and a core member of the Munich Data Science Institute at the Technical University of Munich. He received a Ph.D. from RWTH Aachen University in 2017. His areas of expertise are in operations research and (interpretable) machine learning. His research interests include a wide range of transportation and logistics topics, such as electric
vehicles and autonomous systems, but also topics from production planning, supply chain management, and data science.
\end{IEEEbiography}

\vspace{-0.7cm}

\begin{IEEEbiography}[{\includegraphics[width=1in,height=1.25in,clip,keepaspectratio]{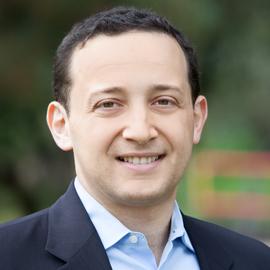}}]{Michael Ostrovsky}
is the Fred H. Merrill Professor of Economics at the Graduate School of Business at Stanford University and a Co-Director of the Market Design Working Group at the National Bureau of Economic Research. He received his Ph.D. in Business Economics from Harvard University in 2005. His research interests are in the areas of game theory, market design, industrial organization, and finance. In his recent research, he has analyzed the economics of carpooling and self-driving cars, the properties of internet advertising auctions, information aggregation in financial markets, stability in trading networks, and voting in shareholder meetings.
\end{IEEEbiography}

\vspace{-0.7cm}

\begin{IEEEbiography}[{\includegraphics[width=1in,height=1.25in,clip,keepaspectratio]{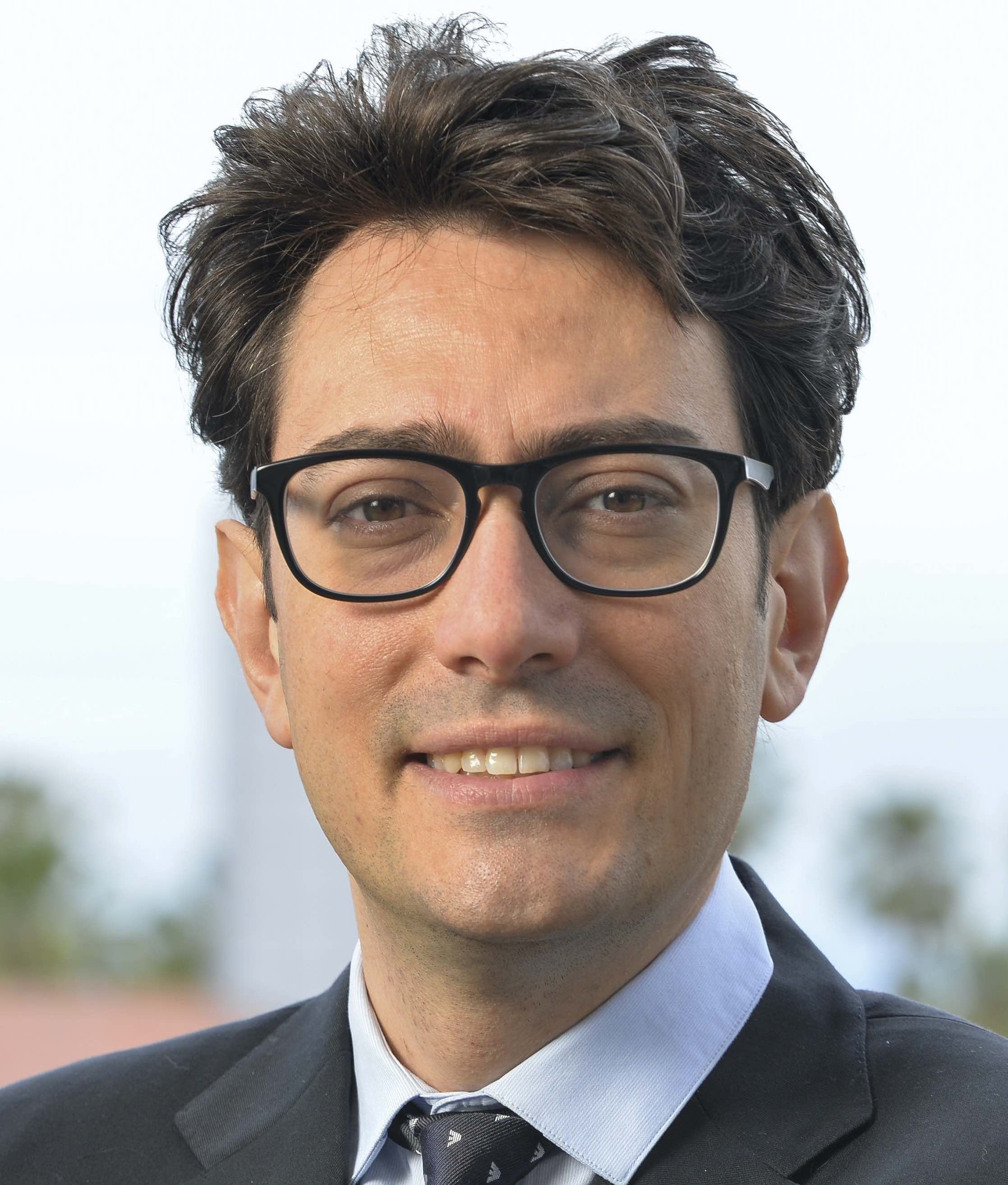}}]{Marco Pavone} is an Associate Professor of Aeronautics and Astronautics at Stanford University and the Director of Autonomous Vehicle Research at NVIDIA. He received a Ph.D. degree in Aeronautics and Astronautics from the Massachusetts Institute of Technology in 2010. His main research interests are in the development of methodologies for the analysis, design, and control of autonomous systems, with an emphasis on self-driving cars, autonomous aerospace vehicles, and future mobility systems.
\end{IEEEbiography}

\end{document}%